  \theoremstyle{acmdefinition}
  \newtheorem{remark}[theorem]{Remark}
\newcommand{\m}[1]{\mathsf{#1}}
\newcommand{\mr}[1]{\mathrel{#1}}
\newcommand{\seq}[2][n]{{#2_1},\dots,{#2_{#1}}}
\newcommand{\sig}[2][n]{{#2_1}\times\cdots\times{#2_{#1}}}
\newcommand{\h}[1][.3]{\hspace{#1mm}}
\newcommand{\SET}[1]{\{\h#1\h\}}
\newcommand{\inter}[1]{{\llbracket #1 \rrbracket}}
\newcommand{\xB}{\mathcal{B}}
\newcommand{\xE}{\mathcal{E}}
\newcommand{\xF}{\mathcal{F}}
\newcommand{\xI}{\mathcal{I}}
\newcommand{\xJ}{\mathcal{J}}
\newcommand{\xM}{\mathcal{M}}
\newcommand{\xR}{\mathcal{R}}
\newcommand{\xS}{\mathcal{S}}
\newcommand{\xT}{\mathcal{T}}
\newcommand{\xV}{\mathcal{V}}
\newcommand{\xFTe}{\xF_{\m{te}}}
\newcommand{\xFTh}{\xF_{\m{th}}}
\newcommand{\xSTe}{\xS_{\m{te}}}
\newcommand{\xSTh}{\xS_{\m{th}}}
\newcommand{\SigmaTh}{\Sigma_{\m{th}}}
\newcommand{\Val}{\xV\m{al}}
\newcommand{\Var}{\xV\m{ar}}
\newcommand{\FVar}{\xF\xV\m{ar}}
\newcommand{\BVar}{\xB\xV\m{ar}}
\newcommand{\LVar}{\mathcal{L}\Var}
\newcommand{\ExVar}{\mathcal{E}\m{x}\Var}
\newcommand{\Dom}{\mathcal{D}\m{om}}
\newcommand{\VDom}{\mathcal{V}\Dom}
\newcommand{\Pos}{\mathcal{P}\m{os}}
\newcommand{\FPos}{\Pos_\xF}
\newcommand{\sort}[1]{\m{#1}}
\newcommand{\Bool}{\sort{Bool}}
\newcommand{\CO}[1]{[\h#1\h]} 
\newcommand{\ECO}[2]{\exists #1.\ #2}
\newcommand{\CTerm}[4]{%
\ifthenelse{\equal{#1}{} \and \equal{#2}{} \and \equal{#3}{} \and \equal{#4}{}}%
{\mathrm{\Pi}\, X.\ s~\CO{\ECO{\vec{x}}{\varphi}}}%
{%
\ifthenelse{\equal{#1}{}}{}{\Pi #1.\ }%
#2%
\ifthenelse{\equal{#4}{}}%
{}%
{\ifthenelse{\equal{#3}{}}{~\CO{#4}}{~\CO{\ECO{#3}{#4}}}}%
}%
}
\newcommand{\CEqn}[4]{%
\ifthenelse{\equal{#1}{}}{%
\ifthenelse{\equal{#4}{}}{#2 \approx #3}{#2 \approx #3~\CO{#4}}}{%
\ifthenelse{\equal{#4}{}}{\mathrm{\Pi} #1.\, #2 \approx #3}{\mathrm{\Pi} #1.\, #2 \approx #3~\CO{#4}}%
}}
\newcommand{\CRu}[4]{%
\ifthenelse{\equal{#1}{} \and \equal{#2}{} \and \equal{#3}{} \and \equal{#4}{}}%
{\mathrm{\Pi} X.\ \ell \R r~\CO{\varphi}}%
{%
\ifthenelse{\equal{#1}{}}{}{\mathrm{\Pi} #1.\ }%
#2 \R #3%
\ifthenelse{\equal{#4}{}}%
{}%
{~\CO{#4}%
}%
}%
}
\newcommand{\R}{\rightarrow}
\newcommand{\Rs}{\stackrel{\smash{\raisebox{-.8mm}{\tiny $\sim$~}}}{\R}}
\renewcommand{\L}{\leftarrow}
\newcommand{\Lb}[1][]{\mr{\vphantom{\R}_{#1}{\L}}}
\newcommand{\Rbase}[1][]{\R_{\mathsf{base}}}
\newcommand{\Ca}[1][]{\xleftrightarrow{#1}}
\newcommand{\Cb}[1][]{\Ca[]_{#1}}
\newcommand{\Cru}[1][\xE]{\ifthenelse{\equal{#1}{}}{\Cb[\m{rule}]}{\Cb[\m{rule},#1]}}
\newcommand{\Cbase}[1][\xE]{\ifthenelse{\equal{#1}{}}{\Cb[\m{base}]}{\Cb[\m{base},#1]}}
\renewcommand{\geq}{\geqslant}
\renewcommand{\ge}{\geqslant}
\renewcommand{\le}{\leqslant}
\newcommand\subsetsim{\mathrel{\substack{
  \textstyle\subset\\[-0.2ex]\textstyle\sim}}}
\newcommand\supsetsim{\mathrel{\substack{
  \textstyle\supset\\[-0.2ex]\textstyle\sim}}}
\newcommand{\PG}{\mathrm{PG}} 
\newcommand{\lvf}{\mathrm{lvf}}
\newcommand{\ext}{\mathrm{ext}}
\newcommand{\rmv}{\mathrm{rmv}}
\newcommand{\Bfnum}[1]{(#1)}
\newcommand{\crest}{\textsf{crest}}
\newcommand{\ctrl}{\textsf{Ctrl}}
\newcommand{\crisys}{\textsf{Crisys}}
\newcommand{\pvec}[1]{\vec{#1}\mkern2mu\vphantom{#1}}
\begin{document}

\title{
Recovering Commutation of Logically Constrained Rewriting and Equivalence Transformations
}

\author{Kanta Takahata}
\affiliation{%
  \institution{Niigata University}
  \city{Niigata}
  \country{Japan}
}

\author{Jonas Sch\"opf}
\orcid{0000-0001-5908-8519}
\affiliation{%
  \institution{University of Innsbruck}
  \city{Innsbruck}
  \country{Austria}
}
\email{jonas.schoepf@uibk.ac.at}

\author{Naoki Nishida}
\orcid{0000-0001-8697-4970}
\affiliation{%
  \institution{Nagoya University}
  \city{Nagoya}
  \country{Japan}
}
\email{nishida@i.nagoya-u.ac.jp}

\author{Takahito Aoto}
\orcid{0000-0003-0027-0759}
\affiliation{%
  \institution{Niigata University}
  \city{Niigata}
  \country{Japan}
}
\email{aoto@ie.niigata-u.ac.jp}

\renewcommand{\shortauthors}{Takahata et al.}

\begin{abstract}
Logically constrained term rewriting is a relatively new rewriting formalism 
that naturally supports built-in data structures, such as integers and bit vectors. 
In the analysis of logically constrained term rewrite systems (LCTRSs), 
rewriting constrained terms plays a crucial role. 
However, this combines rewrite rule applications
and equivalence transformations in a closely intertwined way.
This intertwining makes it difficult to establish useful theoretical properties 
for this kind of rewriting and causes problems in 
implementations---namely, that impractically large search spaces 
are often required.
To address this issue, we propose in this paper 
a novel notion of most general constrained rewriting, 
which operates on existentially constrained terms, 
a concept recently introduced by the authors. 
We define a class of left-linear, left-value-free LCTRSs 
that are general enough to simulate all left-linear LCTRSs
and exhibit the desired key property: 
most general constrained rewriting commutes with equivalence.
This property ensures that equivalence transformations can be deferred
until after the application of rewrite rules,
which helps mitigate the issue of large search spaces in implementations.
In addition to that, we show that the original rewriting formalism on constrained terms 
can be embedded into our new rewriting formalism on existentially constrained terms. 
Thus, our results are expected to have significant implications for achieving correct and 
efficient implementations in tools operating on LCTRSs.
\end{abstract}

\begin{CCSXML}
<ccs2012>
   <concept>
       <concept_id>10003752.10003790.10003798</concept_id>
       <concept_desc>Theory of computation~Equational logic and rewriting</concept_desc>
       <concept_significance>500</concept_significance>
       </concept>
 </ccs2012>
\end{CCSXML}

\ccsdesc[500]{Theory of computation~Equational logic and rewriting}

\keywords{%
Logically Constrained Term Rewrite System,
Commutation,
Equivalence Transformation,
Constrained Term
}


\maketitle

\section{Introduction}

The basic formalism of term rewriting is a purely syntactic computational model; 
due to its simplicity, it is one of the most extensively studied computational models.
However, precisely because of this simplicity, 
it is often not suitable for applications that arise in practical areas,
such as in programming languages, formal specifications, etc.
One of the main issues of term rewriting and its real-world applications
is that the basic formalism lacks painless treatment of
built-in data structures, such as integers, bit vectors, etc.

Logically constrained term rewriting~\cite{KN13frocos} is a relatively new extension 
of term rewriting that intends to overcome such weaknesses of the basic formalism,
while keeping succinctness for theoretical analysis as a computational model.
Rewrite rules, that are used to model computations, in logically
constrained term rewrite systems (LCTRSs) are equipped with constraints over
some arbitrary theory, e.g., linear integer arithmetic.
Built-in data structures are represented via the satisfiability of
constraints within a respective theory. Implementations in LCTRS tools are
then able to check these constraints using SMT-solvers
and therefore benefit
from recent advances in the area of checking \emph{satisfiability modulo
theories}.
Recent progress on the LCTRS formalism was for example made in
confluence analysis~\cite{SM23,SMM24}, (non-)termination analysis~\cite{K16,NW18}, completion~\cite{WM18},
rewriting induction~\cite{KN14aplas,FKN17tocl}, algebraic semantics~\cite{ANS24}, and complexity analysis~\cite{WM21}.

During the analysis of LCTRSs, not only rewriting of terms but also rewriting
of constrained terms, called \emph{constrained rewriting}, is frequently used.
Here, a \emph{constrained term} consists of a term and a constraint,
which restricts the possibilities in which the term is instantiated.
For example, $\m{f}(x)~\CO{x > 2}$ is a constrained term (in LCTRS notation)
which can be intuitively considered as 
a set of terms $\SET{ \m{f}(x) \mid x > 2 }$.
Constrained rewriting is an integral part of many different analysis
techniques.
For example, 
in finding a specific joining sequence in confluence analysis
you need to
deal with two terms under a shared constraint, which results in working
on constrained terms. 
In rewriting induction, rewriting of constrained terms is used for several
of its inference steps.

Unfortunately, supporting constrained rewriting completely for LCTRSs is 
far from practical, as it involves heavy non-determinism.
The situation gets even worse as it is also equipped with equivalence transformations 
before (and/or after) each rewrite step.
\begin{example}
\label{exa:intro I}
Consider the 
rewrite rule
$
\rho\colon \m{f}(x) \to \m{g}(y)~ [x \ge \m{1} \land x + \m{1} \ge y]
$
and a constrained term $\m{f}(x)~\CO{x > \m{2}}$.
Because, for any $x > \m{2}$, the instantiation $y := \m{3}$ satisfies the constraint 
$x \ge \m{1} \land x + \m{1} \ge y$ of the rewrite rule,
we obtain a rewrite step
\[
\m{f}(x)~\CO{x > \m{2}} \R_\rho \m{g}(\m{3})~\CO{x > \m{2}}.
\]
It is also possible to apply the following rewrite step,
because the instantiation $y := x$ satisfies the constraint:
\[
\m{f}(x)~\CO{x > \m{2}} \R_\rho \m{g}(x)~\CO{x > \m{2}}
\]
Actually, by using equivalence transformations (denoted by $\sim$), 
different variations of rewrite steps are possible, e.g.:
\[
\begin{array}{l@{\>}c@{\>}l}
\m{f}(x)~\CO{x > \m{2}} 
&\sim&  \m{f}(x)~\CO{x > \m{2} \land \m{0} > y}\\
&\R_\rho& \m{g}(y)~\CO{x > \m{2} \land \m{0} > y}\\
&\sim&  \m{g}(y)~\CO{\m{0} > y}\\
\end{array}
\]
Note that the resulting constrained terms are often not equivalent,
e.g.\
$\m{g}(\m{3})~\CO{x > \m{2}} \not\sim \m{g}(x)~\CO{x > \m{2}}$,
as
$\SET{ \m{g}(\m{3}) } \neq \SET{ \m{g}(x) \mid x > \m{2} }$.
\end{example}
The question may arise whether restricting to rewriting
without equivalence transformations is a good idea.
However, it turns out that for some natural computations, we need 
an equivalence transformation prior to the actual rewrite step:
\begin{example}
\label{exa:intro II}
Consider the rewrite rule $\rho\colon \m{h}(x,y) \R \m{g}(z)~\CO{(x + y) + \m{1} = z}$
and a  constrained term $\m{h}(x, y)~\CO{x < y}$.
It is not possible to take any concrete value or variable of $x,y$ for $z$,
and hence the constrained term using the rule $\rho$ cannot be rewritten.
However, 
after the equivalence transformation,
the rule becomes applicable: 
\[
\begin{array}{l@{\>}c@{\>}l}
\m{h}(x,y)~\CO{x < y} 
&\sim& \m{h}(x,y)~\CO{x < y \land x + y + \m{1} = z}\\
&\R_\rho& \m{g}(z)~\CO{x < y \land x + y + \m{1} = z}
\end{array}
\]
\end{example}

Clearly, it is not feasible to support the full strength of such an equivalence relation
in an implementation.

In this paper,
we introduce
a novel notion of most general constrained rewriting, 
which operates on existentially constrained terms, 
a concept recently introduced by the authors~\cite{TSNA25LOPSTR-arxiv}.
As seen in the example above,
a key source of confusion is that the rewrite step heavily relies on 
variables in the constraint that do not appear in the term itself.
The existentially constrained terms distinguish
variables that appear solely in the constraint but not in the term itself by using
existential quantifiers. 
Variables appearing only within the constraint are naturally not allowed
to appear in any reduction of a term as they are bound to the scope of the constraint.
It turns out that this novel way of rewriting covers 
the ``most general part'' of the original rewrite relation, which in practice usually suffices 
for the analysis of LCTRSs. 

Additionally, it fulfills not only a form of uniqueness of reducts but also
the commutation property of rewrite steps and equivalence transformations. These
features are not supported by the original rewrite relation.
The latter property about commutation is very important from 
an implementation perspective, because as a result one can 
move intermediate equivalence transformations in rewrite sequences
to the end of the sequence.
This property reduces huge search spaces for the computations of rewrite sequences.
Coincidentally, LCTRS tools such as \ctrl~\cite{KN15},
\crisys~\cite{KNM25jlamp}, and crest~\cite{SMM24,SM25}
already implement similar approaches to deal with constrained rewriting.
However, this has not formally been defined so far. 
Our results
guarantee the correctness of the approaches in these
implementations and provide the foundation for their correctness.

The remainder of the paper is organized as follows.
After presenting the necessary background in \Cref{sec:preliminaries},
we introduce most general rewrite steps
and prove its well-definedness in \Cref{sec:constrained rewriting}.
In \Cref{sec:Simulating Most General Constrained Rewriting in Non-Quantfied Constrained Rewriting,sec:comparison with previous format}
we focus on the relation between our new formalism of constrained rewriting 
and the current state-of-the-art.
Then in \Cref{sec:commutativity of rewriting and equivalence},
we show two useful properties of most general rewriting:
uniqueness of reducts and commutation between 
rewrite steps and equivalence transformations on pattern-general existentially constrained terms.
Subsequently we introduce left-value-free rules~\cite{Kop17} within our setting and discuss their rewriting behavior
in \Cref{sec:left-value-free rules}.
In \Cref{sec:general commutativity left-value-free rules}
we show a general commutation theorem between rewrite steps using left-value-free rules 
and the equivalence transformation.
Before we conclude in \Cref{sec:conclusion},
we discuss some related work in \Cref{sec:Related Work}.

\section{Preliminaries}
\label{sec:preliminaries}

In this section, we briefly recall the basic notions of LCTRSs~\cite{KN13frocos,SM23,SMM24,ANS24}
and fix additional notations used throughout this paper.
Familiarity with the basic notions of term rewriting is assumed (e.g.\ see~\cite{BN98,O02}).

\paragraph{Logically Constrained Terms}

Our signature consists of
a set $\xS$ of sorts and a set $\xF$ of function symbols, where each 
$f \in \xF$ is attached with its sort declaration 
$f\colon \sig{\tau} \to \tau_0$. 
As in~\cite{ANS24}, we assume that these sets can be partitioned into
two disjoint sets, i.e., $\xS = \xSTh \uplus \xSTe$ and $\xF = \xFTh \uplus \xFTe$,
where each $f\colon \sig{\tau} \to \tau_0 \in \xFTh$ satisfies 
$\tau_i \in \xSTh$ for all $0 \leqslant i \leqslant n$. 
Elements of $\xSTh$ ($\xFTh$) and $\xSTe$ ($\xFTe$)
are called theory sorts (symbols) and term sorts (symbols).
The sets of variables and terms are denoted by $\xV$ and
$\xT(\xF,\xV)$. 
We assume a special sort $\Bool \in \xSTh$, and 
call the terms in $\xT(\xFTh,\xV)^\Bool$ \emph{logical constraints}.
We denote the set of variables appearing in terms $\seq{t}$ by $\xV(\seq{t})$.
Sometimes sequences of variables and terms are written as $\vec{x}$ and $\vec{t}$.
The set of variables occurring in $\vec{x}$ is denoted by $\SET{\vec{x}}$.
The set of sequences of elements of a set $T$ is denoted by $T^*$,
such that $\vec{x} \in \xV^*$.

The set of positions in a term $t$ is denoted by $\Pos(t)$.
The symbol and subterm occurring at a position $p \in \Pos(t)$ is denoted
by $t(p)$ and $t|_p$, respectively. For $U \subseteq \xF \cup \xV$,
we write $\Pos_U(t) = \SET{p \in \Pos(t) \mid t(p) \in U}$
for positions with symbols in $U$.
A term obtained from $t$ by replacing subterms at positions $\seq{p}$ 
by the terms $\seq{t}$, having the same sort as $t|_{p_1},\ldots,t|_{p_n}$,
is written as $t[\seq{t}]_{\seq{p}}$ or just $t[\seq{t}]$ when no confusions arises.
Sometimes we consider an expression obtained by
replacing those subterms $t|_{p_1},\ldots,t|_{p_n}$ in $t$ by holes of the same sorts,
which is called a multihole context and denoted by $t[~]_{\seq{p}}$.

A sort-preserving function $\sigma$ from $\xV$ to $\xT(\xF,\xV)$ is called
a substitution, where it is identified with its homomorphic extension 
$\sigma\colon \xT(\xF,\xV) \to \xT(\xF,\xV)$.
The domain of a substitution $\sigma$ is 
given by $\Dom(\sigma) = \SET{ x \in \xV \mid x \neq \sigma(x)}$.
A substitution $\sigma$ is written as $\sigma\colon U \to T$ if 
$\Dom(\sigma) \subseteq U$ and 
$\sigma(U) = \SET{\sigma(x) \mid x \in U} \subseteq T$.
For a set $U \subseteq \xV$, a substitution $\sigma|_U$ is given
by $\sigma|_U(x) = \sigma(x)$ if $x \in U$ and $\sigma|_U(x) = x$ otherwise.
For substitutions $\sigma_1,\sigma_2$ such that
$\Dom(\sigma_1) \cap \Dom(\sigma_2) = \varnothing$,
the substitution $\sigma_1 \cup \sigma_2$
is given by $(\sigma_1 \cup \sigma_2)(x) = \sigma_i(x)$ if $x \in \Dom(\sigma_i)$
and $(\sigma_1 \cup \sigma_2)(x) = x$ otherwise.
A substitution $\sigma\colon \SET{\seq{x}} \to \SET{\seq{t}}$
such that $\sigma(x_i)= t_i$ is denoted by $\SET{x_1 \mapsto t_1,\ldots,x_n \mapsto t_n}$;
for brevity sometimes we write just $\left\{\vec{x} \mapsto \vec{t}\; \right\}$.
A bijective substitution $\sigma\colon \xV \to \xV$ is called a renaming,
and its inverse is denoted by $\sigma^{-1}$.

A model over a sorted signature $\langle \xSTh,\xFTh \rangle$
consists of the two interpretations $\xI$ for sorts
and $\xJ$ for function symbols such that 
$\xI(\tau)$ is a non-empty set of values and
$\xJ$ assigns any function symbol $f\colon\sig{\tau} \to \tau_0 \in \xFTh$ 
to a function $\xJ(f)\colon \xI(\tau_1) \times \cdots \times \xI(\tau_n) \to \xI(\tau_0)$.
We assume a fixed model $\xM = \langle \xI, \xJ \rangle$ 
over the sorted signature $\langle \xSTh,\xFTh \rangle$
such that any element $a \in \xI(\tau)$ appears as a constant $a^\tau$ in 
$\xFTh$. These constants are called \emph{values} and the set of all values is denoted by $\Val$.
For a term $t$, we define $\Val(t) = \SET{ t(p) \mid p \in \Pos_{\Val}(t)}$
and for a substitution $\gamma$, we define $\VDom(\gamma) = \SET{x \in \xV \mid \gamma(x) \in \Val}$.
Throughout this paper we assume
the standard interpretation for the sort $\Bool \in \xSTh$,
namely $\xI(\Bool) = \mathbb{B} = \SET{\mathsf{true}, \mathsf{false}}$,
and the existence of necessary standard theory function symbols 
such as $\neg$, ${\land}$, ${\Rightarrow}$, ${=^\tau}$, etc. 

A valuation $\rho$ on the model $\xM = \langle \xI, \xJ \rangle$
is a mapping that assigns any $x^\tau \in \xV$ with $\rho(x) \in \xI(\tau)$.
The interpretation of a term $t$ in the model $\xM$ over the valuation $\rho$
is denoted by $\inter{t}_{\xM,\rho}$.
For a logical constraint $\varphi$ and 
a valuation $\rho$ over the model $\xM$, 
we write $\vDash_{\xM,\rho} \varphi$ if $\inter{\varphi}_{\xM,\rho} = \mathsf{true}$, 
and $\vDash_{\xM} \varphi$ if 
$\vDash_{\xM,\rho} \varphi$ for all valuations $\rho$.
For $X \subseteq \xV$, a substitution $\gamma$ is said to be $X$-valued
if $\gamma(X) \subseteq \Val$.
We write $\gamma \vDash_\xM \varphi$ (and say $\gamma$ respects $\varphi$)
if the substitution $\gamma$ is $\Var(\varphi)$-valued (or
equivalently, $\Var(\varphi) \subseteq \VDom(\gamma)$,
as well as $\gamma(\Var(\varphi)) \subseteq \Val$)
and $\vDash_{\xM} \varphi\gamma$.
If no confusion arises then we drop the subscript $\xM$ in these notations.

\paragraph{Rewriting Logically Constrained Terms}

The main focus in this paper lies on rewriting constrained terms,
hence we focus solely on introducing this concept.
A \emph{constrained term} is a tuple of a term $s$ and a logical constraint $\varphi$ 
written as $s~\CO{\varphi}$. 
Two constrained terms $s~\CO{\varphi}$, $t~\CO{\psi}$ are equivalent,
written as $s~\CO{\varphi} \sim t~\CO{\psi}$,
if for all substitutions $\sigma$ with $\sigma \vDash \varphi$ there
exists a substitution $\gamma$ with $\gamma \vDash \psi$ such that $s\sigma =
t\gamma$, and vice versa. 

A constrained rewrite rule is a triple $\ell \R r~\CO{\varphi}$ of terms $\ell,r$ with the same sort,
and a logical constraint $\varphi$. 
The set of \emph{logical variables} of the rule is
given by $\LVar(\ell \R r~\CO{\varphi}) = (\Var(r) \setminus \Var(\ell))\cup \Var(\varphi)$.
A \emph{logically constrained term rewrite system} (LCTRS, for short)
consists of a signature $\Sigma = \langle \xSTh \uplus \xSTe, \xFTh \uplus \xFTe \rangle$,
a model $\xM$ over $\SigmaTh = \langle \xSTh, \xFTh \rangle$ (which induces the
set $\Val \subseteq \xFTh$ of values) and a set $\xR$ of constrained rewrite rules over
the signature $\Sigma$. 

Consider a constrained term $s~\CO{\varphi}$ with a satisfiable logical constraint. 
Let $\ell \R r~\CO{\psi}$ be a constrained rewrite rule. 
If there exists a position $p \in \FPos(s)$, 
a substitution $\sigma$ with $\ell\sigma = s|_p$, 
$\sigma(x) \in \Val \cup \Var(\varphi)$ for all $x \in \LVar(\ell \R r~\CO{\psi})$, 
and $\vDash_\xM \varphi \Rightarrow \psi\sigma$,
then we have the rewrite step 
$s~\CO{\varphi} \R s[r\sigma]_p~\CO{\varphi}$. 
The full rewrite relation $\Rs$ on constrained terms
is defined by ${\Rs} = {\sim} \cdot {\R} \cdot {\sim}$. 
In~\cite{SMM24}, it is shown
that \emph{calculation steps} defined in~\cite{KN13frocos} can be
integrated into this rewrite relation by using the constrained rules
$\xR_\textsf{ca} = \SET{ f(x_1,\ldots,x_n) \to y ~\CO{y = f(x_1,\ldots,x_n)} \mid f
\in \xFTh \setminus \Val }$, including some additional initial $\sim$-steps.

\paragraph{Existentially Constrained Terms and Their Equivalence}

We now explain existentially constrained terms 
and characterization of their equivalence based on~\cite{TSNA25LOPSTR-arxiv}.

An \emph{existential constraint} is a pair $\langle \vec{x},
\varphi \rangle$ of a sequence of variables $\vec{x}$ and a constraint
$\varphi$, written as $\ECO{\vec{x}}{\varphi}$, 
such that $\SET{\vec{x}} \subseteq \Var(\varphi)$.
The sets of \emph{free} and \emph{bound} variables of $\ECO{\vec{x}}{\varphi}$
are given by $\FVar(\ECO{\vec{x}}{\varphi}) = \Var(\varphi) \setminus \SET{\vec{x}}$ and
$\BVar(\ECO{\vec{x}}{\varphi}) = \SET{\vec{x}}$.
We write $\vDash_{\xM,\rho} \ECO{\vec{x}}{\varphi}$ if 
there exists $\vec{v} \in \Val^*$ such that $\vDash_{\xM,\rho} \varphi\kappa$,
where $\kappa = \SET{\vec{x} \mapsto \vec{v}}$.
An existential constraint $\ECO{\vec{x}}{\varphi}$ is said to be \emph{valid}, 
written as $\vDash_{\xM} \ECO{\vec{x}}{\varphi}$, if 
$\vDash_{\xM,\rho} \ECO{\vec{x}}{\varphi}$ for any valuation $\rho$.
An existential constraint $\ECO{\vec{x}}{\varphi}$ is said to be \emph{satisfiable} if 
$\vDash_{\xM,\rho} \ECO{\vec{x}}{\varphi}$ for some valuation $\rho$.
For any substitution $\sigma$,
we write $\sigma \vDash_\xM \ECO{\vec{x}}{\varphi}$ 
(and say $\sigma$ respects $\ECO{\vec{x}}{\varphi}$) 
if $\sigma(\FVar(\ECO{\vec{x}}{\varphi})) \subseteq \Val$
and $\vDash_\xM (\ECO{\vec{x}}{\varphi})\sigma$.
An \emph{existentially constrained term} is a 
triple $\langle X, s, \ECO{\vec{x}}{\varphi} \rangle$, written as $\CTerm{X}{s}{\vec{x}}{\varphi}$, 
of a set $X$ of variables, a term $s$, and an existential constraint $\ECO{\vec{x}}{\varphi}$
such that 
$\FVar(\ECO{\vec{x}}{\varphi}) \subseteq X \subseteq \Var(s)$ and
$\BVar(\ECO{\vec{x}}{\varphi}) \cap \Var(s) = \varnothing$.
Variables in $X$ are called \emph{logical variables} (of $\CTerm{X}{s}{\vec{x}}{\varphi}$).
An existentially constrained term $\CTerm{X}{s}{\vec{x}}{\varphi}$ is said to be \emph{satisfiable} if 
$\ECO{\vec{x}}{\varphi}$ is satisfiable.
An existentially constrained term $\CTerm{X}{s}{\vec{x}}{\varphi}$ is said to be
\emph{subsumed by} an existentially constrained term
$\CTerm{Y}{t}{\vec{y}}{\psi}$, denoted by $\CTerm{X}{s}{\vec{x}}{\varphi}
\subsetsim \CTerm{Y}{t}{\vec{y}}{\psi}$, if for all $X$-valued substitutions
$\sigma$ with 
$\sigma \vDash_\xM \ECO{\vec{x}}{\varphi}$ 
there exists a
$Y$-valued substitution $\gamma$ with 
$\gamma \vDash_\xM \ECO{\vec{y}}{\psi}$
such that $s\sigma = t\gamma$.
Existentially constrained terms $\CTerm{X}{s}{\vec{x}}{\varphi}$, $\CTerm{Y}{t}{\vec{y}}{\psi}$ are said to be \emph{equivalent},
denoted by $\CTerm{X}{s}{\vec{x}}{\varphi} \sim \CTerm{Y}{t}{\vec{y}}{\psi}$, 
if $\CTerm{X}{s}{\vec{x}}{\varphi} \subsetsim \CTerm{Y}{t}{\vec{y}}{\psi}$
and $\CTerm{X}{s}{\vec{x}}{\varphi} \supsetsim \CTerm{Y}{t}{\vec{y}}{\psi}$.
Throughout this paper, we will use three characterizations of equivalence 
for existentially constrained terms, presented in~\cite{TSNA25LOPSTR-arxiv}.
The first one covers variants under renaming.

\begin{restatable}[\cite{TSNA25LOPSTR-arxiv}]{proposition}{ThmCharacterizationOfEquivalenceRsingRenaming}
\label{thm: characterization of equivalence using renaming}
Let $\delta$ be a renaming.
Let $\CTerm{X}{s}{\vec{x}}{\varphi}$, $\CTerm{Y}{t}{\vec{y}}{\psi}$ be
satisfiable existentially constrained terms such that $s\delta = t$.
Then,
$\CTerm{X}{s}{\vec{x}}{\varphi} \sim \CTerm{Y}{t}{\vec{y}}{\psi}$ iff
$\delta(X) = Y$ and $\vDash_\xM (\ECO{\vec{x}}{\varphi})\delta \Leftrightarrow (\ECO{\vec{y}}{\psi})$.
\end{restatable}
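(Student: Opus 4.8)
The plan is to prove the two implications separately, splitting each into the set-equality $\delta(X)=Y$ and the constraint-equivalence $\vDash_\xM (\ECO{\vec{x}}{\varphi})\delta \Leftrightarrow \ECO{\vec{y}}{\psi}$, and to reduce everything to two auxiliary semantic facts established up front. The first is a \emph{substitution bridge}: for a substitution $\sigma$ sending $\FVar(\ECO{\vec{x}}{\varphi})$ to values (and keeping $\vec{x}$ out of its domain), one has $\sigma \vDash_\xM \ECO{\vec{x}}{\varphi}$ iff $\vDash_{\xM,\rho_\sigma}\ECO{\vec{x}}{\varphi}$, where $\rho_\sigma$ is any valuation with $\rho_\sigma(z)=\inter{\sigma(z)}_\xM$ on $\FVar(\ECO{\vec{x}}{\varphi})$. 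The second is a \emph{renaming lemma} for valuations: $\vDash_{\xM,\rho}(\ECO{\vec{x}}{\varphi})\delta$ iff $\vDash_{\xM,\rho\circ\delta}\ECO{\vec{x}}{\varphi}$. Both are routine consequences of the standard substitution lemma.

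For the ``if'' direction I would assume $\delta(X)=Y$ and the constraint-equivalence and verify $\subsetsim$ (with $\supsetsim$ symmetric, replacing $\delta$ by $\delta^{-1}$). Given an $X$-valued $\sigma$ with $\sigma\vDash_\xM\ECO{\vec{x}}{\varphi}$, define $\gamma$ on $\Var(t)$ by $\gamma(\delta(w))=\sigma(w)$ for $w\in\Var(s)$; then $t\gamma=s\delta\gamma=s\sigma$ immediately, and $\gamma$ is $Y$-valued because $Y=\delta(X)$ and $\sigma$ is $X$-valued. Since $\rho_\gamma(\delta(x))=\inter{\sigma(x)}_\xM=\rho_\sigma(x)$ on $\FVar(\ECO{\vec{x}}{\varphi})$, chaining the bridge (on $\sigma$), the renaming lemma, the assumed equivalence applied at the valuation $\rho_\gamma$, and the bridge again (on $\gamma$) yields $\gamma\vDash_\xM\ECO{\vec{y}}{\psi}$, the witness required by $\subsetsim$.

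For the ``only if'' direction I would first extract $\delta(X)=Y$. Using satisfiability, pick an $X$-valued $\sigma$ respecting $\ECO{\vec{x}}{\varphi}$ but \emph{modified so that $\sigma(w)=w$ (a non-value) for every $w\in\Var(s)\setminus X$}; this is legitimate since $\FVar(\ECO{\vec{x}}{\varphi})\subseteq X$, so the images outside $X$ do not affect the constraint. Applying $\subsetsim$ gives a $Y$-valued $\gamma$ with $s\sigma=t\gamma=s\delta\gamma$, whence $\sigma(w)=\gamma(\delta(w))$ for all $w\in\Var(s)$. If some $w\in\Var(s)\setminus X$ had $\delta(w)\in Y$, then $\gamma(\delta(w))\in\Val$ would contradict $\sigma(w)=w\notin\Val$; hence $Y\subseteq\delta(X)$, and the symmetric argument through $\supsetsim$ (using $t\delta^{-1}=s$) gives $\delta(X)\subseteq Y$. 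With $\delta(X)=Y$ established, the constraint-equivalence follows by running, for an arbitrary valuation $\rho$, the chain bridge $\to$ subsumption $\to$ renaming lemma in each direction: the relation $\gamma(\delta(x))=\sigma(x)$ forces $\rho_\gamma$ and $\rho$ to coincide on $\FVar(\ECO{\vec{y}}{\psi})\subseteq\delta(X)$, so satisfaction transfers between $\ECO{\vec{y}}{\psi}$ and $(\ECO{\vec{x}}{\varphi})\delta$ at $\rho$.

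I expect the main obstacle to lie in the ``only if'' direction. Pinning down $\delta(X)=Y$ hinges on the deliberate choice to send the non-logical variables of $s$ to non-values, and one must check that this neither disturbs the constraint nor conflicts with satisfiability; transferring the constraint-equivalence then requires carefully tracking that $\FVar(\ECO{\vec{x}}{\varphi})$ and $\FVar(\ECO{\vec{y}}{\psi})$ are only \emph{subsets} of $X$ and $Y$, so the induced valuations must be compared on exactly the right variable sets. The bound variables $\vec{x},\vec{y}$ cause no trouble thanks to the built-in disjointness conditions $\BVar(\ECO{\vec{x}}{\varphi})\cap\Var(s)=\varnothing$, but they must be kept out of the domains of $\sigma$ and $\gamma$ throughout.
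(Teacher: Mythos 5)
This proposition is one the paper itself does not prove: it is imported from~\cite{TSNA25LOPSTR-arxiv} as a \emph{restatable} citation, and no proof of it appears in the appendix, so there is no in-paper argument to compare yours against; I can only judge the proposal on its own merits, and on those merits it is correct. The two auxiliary facts are indeed routine consequences of the substitution lemma, and your constructions are the right ones: in the ``if'' direction, $\gamma$ with $\gamma(\delta(w))=\sigma(w)$ is well defined by injectivity of $\delta$, is $Y$-valued precisely because $\delta(X)=Y$, and cannot collide with $\vec{y}$ thanks to $\BVar(\ECO{\vec{y}}{\psi})\cap\Var(t)=\varnothing$; in the ``only if'' direction, the decisive idea---taking the test substitution to be the identity (hence non-value) on $\Var(s)\setminus X$, which $\subsetsim$ permits since $\FVar(\ECO{\vec{x}}{\varphi})\subseteq X$---is exactly what is needed to read off $Y\subseteq\delta(X)$ from $s\sigma=t\gamma$ and $Y$-valuedness of $\gamma$, with the converse inclusion by symmetry through $\delta^{-1}$. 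One presentational subtlety you should make explicit in a full writeup: in the ``if'' direction your chain uses $\rho_\gamma(\delta(x))=\inter{\gamma(\delta(x))}_\xM$ for $x\in\FVar(\ECO{\vec{x}}{\varphi})$, but $\delta(\FVar(\ECO{\vec{x}}{\varphi}))$ need not be contained in $\FVar(\ECO{\vec{y}}{\psi})$ (logically equivalent constraints can have different free variables), so the bridge's pinning of $\rho_\gamma$ on $\FVar(\ECO{\vec{y}}{\psi})$ alone does not justify that equation; you must choose $\rho_\gamma$ to agree with $\inter{\gamma(\cdot)}_\xM$ on all of $Y$ (legitimate, since $\gamma$ is $Y$-valued and the bridge allows ``any'' such valuation). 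With that choice fixed, every step of the proposal goes through.
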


The second one considers those that are \emph{pattern-general},
where an existentially constrained term $\CTerm{X}{s}{\vec{x}}{\varphi}$ is 
\emph{pattern-general} if $s$ is $X$-linear and $\Val(s) = \varnothing$~\cite{TSNA25LOPSTR-arxiv}.
Any existentially constrained term 
can be translated to equivalent pattern-general existentially constrained term
by the following translation:
$\PG(\CTerm{X}{s}{\vec{x}}{\varphi}) = \CTerm{Y}{t}{\vec{y}}{\psi}$
where
    $t = s[w_1,\ldots,w_n]_{p_1,\ldots,p_n}$
    with $\Pos_{X\cup\Val}(s) = \SET{p_1,\ldots,p_n}$ and
    pairwise distinct fresh variables $w_1,\ldots,w_n$,
    $Y = \SET{w_1,\ldots,w_n}$,
    $\SET{\vec{y}} = \SET{\vec{x}} \cup X$, and
    $\psi = (\varphi \land \bigwedge_{i=1}^n (s|_{p_i} = w_i))$.

\begin{restatable}[\cite{TSNA25LOPSTR-arxiv}]{proposition}{TheoremIIIxxviii}
\label{thm:complete characterization of equivalence for most sim-general constrained terms}
Let $\CTerm{X}{s}{\vec{x}}{\varphi}, \CTerm{Y}{t}{\vec{y}}{\psi}$ be 
satisfiable pattern-general existentially constrained terms.
Then, we have $\CTerm{X}{s}{\vec{x}}{\varphi} \sim \CTerm{Y}{t}{\vec{y}}{\psi}$ iff there exists a renaming $\rho$ such that
    $s\rho = t$,
    $\rho(X) = Y$,    
        and
    $\vDash_\xM (\ECO{\vec{x}}{\varphi})\rho \Leftrightarrow (\ECO{\vec{y}}{\psi})$.
\end{restatable}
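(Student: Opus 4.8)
The plan is to treat the two directions asymmetrically, reducing the harder direction to the renaming characterization of \Cref{thm: characterization of equivalence using renaming}. The ``if'' direction is immediate: given a renaming $\rho$ with $s\rho = t$, $\rho(X) = Y$, and $\vDash_\xM (\ECO{\vec{x}}{\varphi})\rho \Leftrightarrow (\ECO{\vec{y}}{\psi})$, I would simply instantiate \Cref{thm: characterization of equivalence using renaming} with $\delta := \rho$. The two conditions it requires are exactly the second and third hypotheses here, so it yields $\CTerm{X}{s}{\vec{x}}{\varphi} \sim \CTerm{Y}{t}{\vec{y}}{\psi}$.

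For the ``only if'' direction the key observation is that once we have produced \emph{any} renaming $\rho$ with $s\rho = t$, the remaining two conclusions come for free: since both terms are satisfiable and equivalent, applying \Cref{thm: characterization of equivalence using renaming} in the forward direction to this $\rho$ delivers $\rho(X) = Y$ and the biimplication. Hence the entire burden is to construct, out of the purely semantic assumption $\CTerm{X}{s}{\vec{x}}{\varphi} \sim \CTerm{Y}{t}{\vec{y}}{\psi}$, a single renaming $\rho$ with $s\rho = t$. This is precisely where pattern-generality is used.

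To build $\rho$, I would first unfold the equivalence into its two subsumptions $\subsetsim$ and $\supsetsim$. Using satisfiability I pick an $X$-valued substitution $\sigma$ respecting $\ECO{\vec{x}}{\varphi}$ and, by $\subsetsim$, obtain a $Y$-valued $\gamma$ respecting $\ECO{\vec{y}}{\psi}$ with $s\sigma = t\gamma$. Because $s$ is $X$-linear and value-free ($\Val(s) = \varnothing$), the positions of $s\sigma$ at which a value introduced by $\sigma$ appears are exactly the pairwise distinct positions of the logical variables in $s$; the same analysis applies to $t$ and $\gamma$. Reading off $s\sigma = t\gamma$ position by position then gives that the function-symbol skeletons of $s$ and $t$ coincide and that there is a position-wise correspondence between the logical-variable occurrences of $s$ and those of $t$, as well as between their non-logical-variable occurrences. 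This correspondence defines a candidate variable map $\rho$; $X$-linearity guarantees that $\rho$ is single-valued, and applying the reverse subsumption $\supsetsim$ symmetrically shows $\rho$ to be a bijection, i.e.\ a renaming, with $s\rho = t$. Concluding is then the invocation of \Cref{thm: characterization of equivalence using renaming} described above.

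I expect the main obstacle to be making the ``read off the logical positions'' step rigorous, since it is exactly here that value-freeness and linearity are indispensable. Value-freeness prevents a genuine value occurring in $s$ from being matched against an instantiated logical variable of $t$, which would destroy the renaming property, while $X$-linearity prevents the value of one logical variable from being matched against two distinct logical variables on the other side, which would block bijectivity of $\rho$. The delicate cases are when the constraint forces a logical variable to a fixed value, or when the value set of a sort is too small to separate distinct logical positions by distinct values; these I would handle by varying one logical variable at a time and characterising logical positions as those whose symbol can change as $\sigma$ ranges over all respecting $X$-valued substitutions, rather than relying on a single generic assignment. Once this position analysis is secured, the construction of $\rho$ and the verification of $s\rho = t$ are routine.
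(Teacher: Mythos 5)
First, a caveat on the comparison itself: the paper does not prove this proposition; it is imported verbatim from \cite{TSNA25LOPSTR-arxiv}, so there is no in-paper argument to measure your attempt against, and the following assesses your plan on its own terms.

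Your outer structure is sound: the ``if'' direction is indeed an immediate instantiation of \Cref{thm: characterization of equivalence using renaming}, and for ``only if'' it suffices to produce \emph{any} renaming $\rho$ with $s\rho = t$, after which \Cref{thm: characterization of equivalence using renaming} delivers $\rho(X) = Y$ and the biimplication for free. The gap lies in how you build $\rho$. You read the position correspondence off a single identity $s\sigma = t\gamma$ and assert that ``the same analysis applies to $t$ and $\gamma$''. It does not: $\gamma$ is only existentially provided by the subsumption, so you cannot normalize it, and nothing prevents it from mapping \emph{non-logical} variables of $t$ to terms containing values. Hence a value position of $s\sigma$ need not lie in $\Pos_Y(t)$, and the read-off can silently pair a logical variable of $s$ with a non-logical variable of $t$, destroying $\rho(X)=Y$. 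Moreover, your proposed repair for the delicate cases --- classifying logical positions as those whose symbol \emph{can change} as $\sigma$ varies --- fails in exactly the case you flag: if $\varphi$ forces $x = v$, the symbol at the position of $x$ never changes, so your criterion lumps that position with the rigid skeleton, even though it must be matched to a variable in $Y$.

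The workable version of your idea runs both subsumptions with substitutions you get to \emph{choose}, instead of analysing the provided $\gamma$: take $\sigma$ (resp.\ $\gamma$) respecting the constraint and equal to the identity outside $X$ (resp.\ outside $Y$); by value-freeness such an instance carries values exactly at the logical positions, the non-logical variables themselves at their own positions, and fixed non-value symbols on the skeleton. Playing the two subsumptions against each other with these choices yields, in order: equality of the non-value skeletons of $s$ and $t$ (a variable instance on one side can never equal a fixed function symbol on the other), hence $\Pos(s)=\Pos(t)$; then $\Pos_X(s)=\Pos_Y(t)$, because a position that carries a value under \emph{every} respecting instance on one side cannot host a non-logical variable of the other, that variable being instantiable by itself; then well-definedness of the positionwise variable map on repeated non-logical variables (instantiate them by distinct fresh variables and compare); and finally injectivity, where on logical variables it is the $Y$-linearity of $t$ that rules out two distinct members of $X$ being sent to the same $y \in Y$ (semantics cannot separate them when $\varphi$ forces $x = x'$). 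Note this last point means pattern-generality of \emph{both} terms is genuinely used, not only that of $s$ as your sketch suggests. With $\rho$ so constructed, your concluding appeal to \Cref{thm: characterization of equivalence using renaming} is correct.
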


The last characterization does not impose any
restriction on the existentially constrained terms.
In order to present this, we need a couple of notions and
results from~\cite{TSNA25LOPSTR-arxiv}.
Let $\CTerm{X}{s}{\vec{x}}{\varphi}$ be an existentially constrained term.
Then, we define a binary relation $\sim_{\Pos_{X\cup \Val}(s)}$ over the positions in $\Pos_{X\cup \Val}(s)$ as follows:
$p \sim_{\Pos_{X\cup \Val}(s)} q$
iff
$\vDash_\xM ((\ECO{\vec{x}}{\varphi}) \Rightarrow s|_p = s|_q)$.
Then, $\sim_{\Pos_{X\cup \Val}(s)}$ is an equivalence relation over the positions in $\Pos_{X\cup \Val}(s)$.
The equivalence class of a position $p \in \Pos_{X\cup\Val}(s)$ w.r.t.\ 
$\sim_{\Pos_{X\cup \Val}(s)}$ is denoted by $[p]_{\sim_{\Pos_{X\cup \Val}(s)}}$.
If it is clear from the context then we may simply denote it by $[p]_\sim$.
We further denote the representative of $[p]_\sim$ by $\hat{p}$.
Let $\CTerm{X}{s}{\vec{x}}{\varphi}$ be a satisfiable existentially constrained term.
We define
$\Pos_{\Val!}(s) = \{ p \in \Pos_{X\cup\Val}(s)
\mid$  there exists $v \in \Val$ such that
$\vDash_\xM ((\ECO{\vec{x}}{\varphi}) \Rightarrow (s|_p = v)) \}$.
For each $p \in \Pos_{\Val!}(s)$,  there exists a unique value $v$
     such that $\xM \vDash (\ECO{\vec{x}}{\varphi}) \Rightarrow (s|_p = v)$~\cite{TSNA25LOPSTR-arxiv};
we denote such a $v$ by $\Val!(p)$.
We then define a 
\emph{representative substitution} $\mu_{X}: X \to X \cup \Val$ of 
$\CTerm{X}{s}{\vec{x}}{\varphi}$ as follows:
\[
\mu_{X}(z) = 
\left\{
\begin{array}{@{\>}ll@{}}
\Val!(p) & \mbox{if $s(p) = z$ for some $p \in \Pos_{\Val!}(s)$},\\
s(\hat{p}) & \mbox{otherwise},
\end{array}
\right.
\]
where $\hat{p}$ is the representative of the equivalence class $[p]_\sim$.

\begin{restatable}[\cite{TSNA25LOPSTR-arxiv}]{proposition}{ThmCompleteCharacterizatinOfEquivalenceOfConstrainedTerms} 
\label{thm:complete characterizatin of equivalence of constrained terms}
Let $\CTerm{X}{s}{\vec{x}}{\varphi}, \CTerm{Y}{t}{\vec{y}}{\psi}$ be satisfiable existentially constrained terms.
Then, $\CTerm{X}{s}{\vec{x}}{\varphi} \sim \CTerm{Y}{t}{\vec{y}}{\psi}$
iff
the following statements hold:
\begin{enumerate}
     \item $\Pos_{X\cup\Val}(s) = \Pos_{Y\cup\Val}(t)$ ($= \SET{p_1,\ldots,p_n}$),
     \item there exists a renaming $\rho \colon \Var(s) \setminus X \to \Var(t) \setminus Y$ such that 
     $\rho(s[\,]_{p_1,\ldots,p_n}) = t[\,]_{p_1,\ldots,p_n}$,
     \item for any $i,j \in \SET{1,\ldots,n}$, 
         $\vDash_\xM (\ECO{\vec{x}}{\varphi}) \Rightarrow (s|_{p_i} = s|_{p_j})$
         iff 
         $\vDash_\xM (\ECO{\vec{y}}{\psi}) \Rightarrow (t|_{p_i} = t|_{p_j})$,
     \item for any $i \in \SET{1,\ldots,n}$ and $v \in \Val$, 
         $\vDash_\xM (\ECO{\vec{x}}{\varphi}) \Rightarrow (s|_{p_i} = v)$
         iff 
         $\vDash_\xM (\ECO{\vec{y}}{\psi}) \Rightarrow (t|_{p_i} = v)$,
         and
     \item
     let ${\sim} = {\sim}_{\Pos_{X\cup \Val}(s)} = {\sim}_{\Pos_{X\cup \Val}(t)}$, 
     and $\mu_X,\mu_Y$ be representative substitutions of
      $\CTerm{X}{s}{\vec{x}}{\varphi}$ and $\CTerm{Y}{t}{\vec{y}}{\psi}$, respectively, 
     based on the same representative for
     each equivalence class $[p_i]_\sim$ ($1 \le i \le n$),
     $\theta =  \SET{ \langle s|_{p_i}, t|_{p_i} \rangle \mid 1 \le i \le n }$,
     $\tilde{X} = \hat{X} \cap X$, and $\tilde{Y} = \hat{Y}\cap Y$.
     Then,
     $\vDash_\xM (\ECO{\vec{x}}{\varphi})\mu_X\theta|_{\tilde{X}}
     \Leftrightarrow (\ECO{\vec{y}}{\psi})\mu_Y$,
     with the renaming $\theta|_{\tilde{X}} \colon \tilde{X} \to \tilde{Y}$.
 \end{enumerate}
 \end{restatable}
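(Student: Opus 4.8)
The plan is to reduce the general statement to the pattern-general case already settled by \Cref{thm:complete characterization of equivalence for most sim-general constrained terms}. Since the $\PG$ translation recalled above preserves equivalence, we have $\CTerm{X}{s}{\vec{x}}{\varphi} \sim \CTerm{Y}{t}{\vec{y}}{\psi}$ iff $\PG(\CTerm{X}{s}{\vec{x}}{\varphi}) \sim \PG(\CTerm{Y}{t}{\vec{y}}{\psi})$, where both translates are satisfiable and pattern-general. Here $\PG(\CTerm{X}{s}{\vec{x}}{\varphi})$ is the term $s[w_1,\dots,w_n]$ with fresh logical variables $X' = \SET{\vec{w}}$ sitting at the positions $\Pos_{X\cup\Val}(s) = \SET{p_1,\dots,p_n}$, existential variables $\SET{\vec{x}} \cup X$, and constraint $\varphi' = \varphi \land \bigwedge_{i} (s|_{p_i} = w_i)$, and similarly for $t$. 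By \Cref{thm:complete characterization of equivalence for most sim-general constrained terms} this equivalence holds iff there is a renaming $\rho'$ with $s[\vec{w}]\rho' = t[\vec{w}']$, $\rho'(X') = Y'$, and $\vDash_\xM (\ECO{\vec{x}'}{\varphi'})\rho' \Leftrightarrow (\ECO{\vec{y}'}{\psi'})$. Since $\PG$ keeps the positions $\Pos_{X\cup\Val}(s)$ and the subterms $s|_{p_i}$ intact, all data occurring in conditions (1)--(5) is available on the translates, and it remains to match this formulation against (1)--(5).

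First I would extract the syntactic conditions (1) and (2) from $s[\vec{w}]\rho' = t[\vec{w}']$ together with $\rho'(X') = Y'$. As a renaming preserves tree structure and the fresh logical variables $\vec{w}$ occur exactly at $\Pos_{X\cup\Val}(s)$ (and $\vec{w}'$ at $\Pos_{Y\cup\Val}(t)$), the requirement that logical variables map to logical variables forces the two position sets to coincide, which is (1); restricting $\rho'$ to the remaining, non-logical variables yields the renaming $\rho$ of (2) identifying the contexts $s[\,]_{p_1,\dots,p_n}$ and $t[\,]_{p_1,\dots,p_n}$. Conversely, (1) and (2) let one reassemble a candidate $\rho'$ on the full variable set, so the problem reduces to the constraint part.

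The core of the argument is to show that $\vDash_\xM (\ECO{\vec{x}'}{\varphi'})\rho' \Leftrightarrow (\ECO{\vec{y}'}{\psi'})$ is equivalent to the conjunction of (3), (4) and (5). The formula $\ECO{\vec{x}'}{\varphi'}$, whose only free variables are $\vec{w}$, describes exactly the set of value tuples attained by $(s|_{p_1},\dots,s|_{p_n})$ under $\ECO{\vec{x}}{\varphi}$, and likewise for $t$; the biconditional asserts that these definable subsets of $\Val^n$ agree under the bijection $w_i \leftrightarrow w'_i$. My plan is to prove that such a set is completely determined by three invariants: the coordinates frozen to a fixed value (captured by $\Pos_{\Val!}(s)$ and $\Val!(p)$, matched by (4)), the partition into forced-equal coordinates (captured by $\sim_{\Pos_{X\cup\Val}(s)}$, matched by (3)), and a residual constraint on the free representatives. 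Since $\mu_X$ is precisely the projection collapsing each forced-equal class to its representative and each frozen coordinate to $\Val!(p)$, the formula $(\ECO{\vec{x}}{\varphi})\mu_X$ is this residual constraint over $\tilde{X}$; transporting it along the renaming $\theta|_{\tilde{X}}\colon\tilde{X}\to\tilde{Y}$ and comparing with $(\ECO{\vec{y}}{\psi})\mu_Y$ is exactly (5).

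The hard part will be establishing that (3), (4) and (5) form a \emph{complete} invariant of the definable tuple set. For necessity I would argue contrapositively: if the frozen-coordinate data or the forced-equality partition differ, an explicit tuple separates the two sets, contradicting the biconditional; once these skeleta agree, any discrepancy in the residual constraints produces a distinguishing tuple on the free representatives. For sufficiency I would use $\mu_X,\mu_Y$ to factor each existential constraint through its residual form and verify, using the uniqueness of $\Val!(p)$ and the fact that $\sim_{\Pos_{X\cup\Val}(s)}$ is an equivalence relation, that the two formulas are logically equivalent. The main bookkeeping subtlety is to fix the class representatives consistently across $s$ and $t$ (as (5) demands a shared representative for each class $[p_i]_\sim$), so that $\theta = \SET{\langle s|_{p_i}, t|_{p_i}\rangle}$ restricts to a genuine renaming $\theta|_{\tilde{X}}$ between the free representatives $\tilde{X}$ and $\tilde{Y}$; getting this alignment right is where most of the care is needed.
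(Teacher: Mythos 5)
The paper you are asked to compare against does not in fact prove this proposition: it is imported as background from the authors' earlier work~\cite{TSNA25LOPSTR-arxiv} (note the citation in the proposition header), and no proof of it appears anywhere in the text or appendix. So there is no in-paper proof to measure your attempt against, and I can only judge it on its own merits.

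On those merits, your plan is sound and I see no genuine gap. The reduction step is legitimate: $\PG$ preserves both satisfiability and equivalence (this is stated in the preliminaries), both translates are pattern-general, and $\sim$ is an equivalence relation, so \Cref{thm:complete characterization of equivalence for most sim-general constrained terms} applies to the translates. Your decoding of the renaming clause into conditions (1)--(2) works exactly as sketched: the $w_i$ are fresh and occur precisely at $\Pos_{X\cup\Val}(s)$, so bijectivity of the renaming together with $\rho'(X')=Y'$ forces the two position sets to coincide, and restricting $\rho'$ to $\Var(s)\setminus X$ gives the context renaming of (2). The crux of your argument --- that the set of attainable value tuples is completely determined by its forced-equality partition, its frozen coordinates, and its joint projection onto class representatives --- is correct, and the clean way to finish it is the pullback observation: every tuple in the set is reconstructed from its representative coordinates via the forced equalities and frozen values, hence the set equals the pullback of its projection, so agreement of the three invariants (which is exactly (3)+(4)+(5), since $(\ECO{\vec{x}}{\varphi})\mu_X$ defines precisely that joint projection) implies equality of the two tuple sets, and conversely. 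Your flagged bookkeeping worry about $\theta|_{\tilde{X}}$ also resolves: a variable occurring at two positions of $\Pos_{X\cup\Val}(s)$ trivially forces those positions into one $\sim$-class, so distinct non-frozen classes carry distinct representative variables, and choosing the same representative position on both sides (as clause (5) stipulates) makes $\theta|_{\tilde{X}}$ a genuine bijection from $\tilde{X}$ to $\tilde{Y}$. What remains beyond your sketch is routine verification, not a missing idea.
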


\section{Most General Constrained Rewriting}
\label{sec:constrained rewriting}

In this section,
we present the definition of
most general rewriting and prove its well-definedness.
Furthermore, we also show some properties needed
in later sections.
 

We start by devising the notion of constrained rewrite rules tailored to our needs.
Similar to constrained equations~\cite{ANS24},
we attach the respective set of logical variables explicitly to constrained rewrite rules.
Previously to this, rewriting relied heavily on the notion of logical variables ($\LVar(\ldots)$) for this.

\begin{definition}[Constrained Rewrite Rule]
A \emph{constrained rewrite rule} is a quadruple of two terms $\ell$, $r$, a set of variables
$Z$ and a logical constraint $\pi$, 
written as $\CRu{Z}{\ell}{r}{\pi}$,
where 
$\ell$ and $r$ are of the same sort
and $Z$ satisfies
$(\Var(r) \setminus \Var(\ell)) \cup \Var(\pi)  \subseteq Z\subseteq \Var(\ell, r, \pi)$.
When no confusion arises, we abbreviate constrained rewrite rules
by \emph{constrained rules} or just \emph{rules};
sometimes we attach a label to it, as $\rho\colon \CRu{Z}{\ell}{r}{\pi}$,
in order to ease referencing it.
Let $\rho\colon \CRu{Z}{\ell}{r}{\pi}$ be a constrained rule.
Then, $\rho$ is \emph{left-linear} if $\ell$ is linear;
the set of \emph{extra-variables} of $\rho$
is given by $\ExVar(\rho) = \Var(r) \setminus \Var(\ell)$\footnote{
Do not confuse the notation $\ExVar(\rho)$ with $\xE\xV\m{ar}(\rho) = \Var(r) \setminus
(\Var(\ell) \cup \Var(\pi))$ from~\cite{SM23,SMM24}.}.
\end{definition}

It follows naturally, that a constrained rule  $\ell \R r~\CO{\varphi}$ as 
defined previously can easily be transformed into $\CRu{X}{\ell}{r}{\varphi}$
by taking $X = \LVar(\ell \R r~\CO{\varphi})$ ($:= \Var(\pi) \cup (\Var(r) \setminus \Var(\ell))$).
We also do not impose the constraint 
$\ell|_\epsilon \in \xFTe$ as it is done in~\cite{KN13frocos}, 
in order to deal with calculation steps via rule steps as in~\cite{SMM24}.

\begin{example}
For example, 
$\CRu{\{ x,y \}}{\m{f}(x)}{\m{g}(y)}{x \ge \m{1}}$
and 
$\CRu{\{ x \}}{\m{h}(x,y)}{\m{g}(y)}{x \ge \m{1}}$
are constrained rewrite rules.
However, none of these are: 
$\CRu{\{ x,y,z \}}{\m{f}(x)}{\m{g}(y)}{x \ge \m{1}}$ 
($z$ does not appear in $\Var(\m{f}(x), \m{g}(y), x \ge \m{1})$)
and $\CRu{\{ x \}}{\m{f}(x)}{\m{g}(y)}{x \ge \m{1}}$ 
($y \in \Var(r) \setminus \Var(\ell)$ does not appear in $\SET{x}$).
The rewrite rule 
$\rho\colon \m{f}(x) \to \m{g}(y)~ [x \ge \m{1} \land x + \m{1} \ge y]$
in \Cref{exa:intro I} is written
as $\CRu{\{ x,y \}}{\m{f}(x)}{\m{g}(y)}{x \ge \m{1} \land x + \m{1} \ge y}$.
\end{example}

It turns out that certain complications arise if we deal with non-left-linear rules,
\emph{hence in the following we consider only left-linear constrained rules and 
leave the extension to non-left-linear rules for future work}.
Before presenting the definition of most general constrained rewrite steps,
we introduce the notion of $\rho$-redex for a rule $\rho$.

\begin{definition}[$\rho$-Redex]
Let $\CTerm{X}{s}{\vec{x}}{\varphi}$ be a satisfiable existentially constrained term. Suppose
that $\rho\colon \CRu{Z}{\ell}{r}{\pi}$ is a left-linear constrained rule
satisfying $\Var(\rho) \cap \Var(s,\varphi) = \varnothing$. We say
\emph{$\CTerm{X}{s}{\vec{x}}{\varphi}$ has a $\rho$-redex at position $p \in
\Pos(s)$ using substitution $\gamma$} if the following is satisfied:
\textup{\Bfnum{1.}}
$\Dom(\gamma) = \Var(\ell)$,
\textup{\Bfnum{2.}}
$s|_p = \ell\gamma$,
\textup{\Bfnum{3.}}
$\gamma(x) \in \Val \cup X$ for all $x \in \Var(\ell) \cap Z$, and
\textup{\Bfnum{4.}}
$\vDash_{\xM} (\ECO{\vec{x}}{\varphi}) \Rightarrow (\ECO{\vec{z}}{\pi\gamma})$,
where $\SET{\vec{z}} = \Var(\pi) \setminus \Var(\ell)$.
\end{definition}

\begin{example}
\label{exa:redex}
Consider a constrained rule
$\rho\colon \CRu{\SET{x',y'}}{\m{f}(x')}{\m{g}(y')}{x' \ge \m{0} \land y' > x'}$.
The constrained term 
$\CTerm{\SET{x}}{\m{f}(x)}{}{x>\m{0}}$
has a $\rho$-redex at the root position using the 
substitution $\gamma = \SET{ x' \mapsto x }$.
\end{example}

\begin{definition}[Most General Rewrite Steps for Constrained Terms]
    Let $\CTerm{X}{s}{\vec{x}}{\varphi}$ be a satisfiable existentially constrained term.
    Suppose that $\rho\colon \CRu{Z}{\ell}{r}{\pi}$ is a left-linear constrained rule
    satisfying $\Var(\rho) \cap \Var(s,\varphi) = \varnothing$.
    Suppose $\CTerm{X}{s}{\vec{x}}{\varphi}$ has a $\rho$-redex 
    at position $p \in \Pos(s)$ using substitution $\gamma$.
    Then we have a rewrite step
    $\CTerm{X}{s}{\vec{x}}{\varphi} \R_\rho \CTerm{Y}{t}{\vec{y}}{\psi}$
    (or $\CTerm{X}{s}{\vec{x}}{\varphi} \R^p_{\rho,\gamma} \CTerm{Y}{t}{\vec{y}}{\psi}$
    with explicit $p$ and $\gamma$)
    where
\textup{\Bfnum{1.}}
        $t = s[r\gamma]$,
\textup{\Bfnum{2.}}
        $\psi = \varphi \land \pi\gamma$,
\textup{\Bfnum{3.}}
        $\SET{\vec{y}} = \Var(\psi) \setminus \Var(t)$, and
\textup{\Bfnum{4.}}
        $Y = \ExVar(\rho) \cup (X \cap \Var(t))$.
We call this notion \emph{most general rewrite steps}; the reason behind this naming 
will become clear in the remainder of the paper.
\end{definition}

\begin{example}[Cont'd of \Cref{exa:redex}]
\label{exa:rewriting constrained terms}
Observe that $\gamma(x') = x \in \SET{x}$
and $\vDash_\xM x > \m{0} \Rightarrow (\ECO{y'}{x \ge \m{0} \land y' > x)}$.
Therefore, we obtain a rewrite step 
\[
\begin{array}{l@{\>}c@{\>}l}
          \lefteqn{\CTerm{\SET{x}}{\m{f}(x)}{}{x>\m{0}}} \\
& \R_{\rho,\gamma} & \CTerm{\SET{y'}}{\m{g}(y')}{x}{x>\m{0} \land x \ge \m{0} \land y' > x}\\
\end{array}
\]
Note that
$(x' \ge \m{0} \land y' > x')\gamma = (x \ge \m{0} \land y' > x)$,
$\Var(x \ge \m{0} \land y' > x) \setminus \Var(\m{g}(y')) = \SET{x}$, and
$\ExVar(\rho) \cup (\SET{x} \cap \Var(\m{g}(y'))) = \SET{y'}$.
\end{example}

In a constrained rewrite step 
$\CTerm{X}{s}{\vec{x}}{\varphi} \to \CTerm{Y}{t}{\vec{y}}{\psi}$,
variables of $X$ will sometimes either be moved to $Y$ or $\SET{\vec{y}}$, or
even be eliminated such that they do not appear in $\CTerm{Y}{t}{\vec{y}}{\psi}$ anymore.
The latter behavior may happen in rewrite rules with eliminating variables,
i.e., where $\Var(\ell) \setminus \Var(r) \neq \varnothing$.
The following example illustrates such situations.

\begin{example}
Let $\rho\colon 
\CRu{\SET{v',w',x',y',z'}}{\m{f}(v',w',x',y',z')}{v'}{y' \geq \m{0}}
(= \CRu{Z}{\ell}{r}{\pi})$
be a constrained rule. Consider a rewrite step
\[
\begin{array}{@{}l@{\>}c@{\>}l@{}}
\CTerm{X}{s}{\vec{x}}{\varphi}  
&=& \CTerm{\SET{v,w,x,y,z}}{\m{g}(\m{f}(v,w,x,y,z),w)}{}{x \geq \m{0}}\\
&\R^{1}_{\rho,\gamma}& \CTerm{\SET{v,w}}{\m{g}(v,w)}{x,y}{x > \m{1} \land y \geq \m{0}}\\
&=& \CTerm{Y}{t}{\vec{y}}{\psi}
\end{array}
\]
where and $\gamma = \SET{ v' \mapsto v, \ldots, z' \mapsto z }$.
We trace now where the variables in $X$ move during this rewrite step:
\begin{itemize}
\item 
We have $v \in Y$ because
$v' \in\Var(r)$,
$\gamma(v') = v$ and
$v \in \Var(t)$.
\item 
We have $w \in Y$ because
$s|_2 = t|_2 = w$, hence
$w \in \Var(t)$.
\item 
We have $x \in \SET{\vec{y}}$. Note 
$x \in \Var(s)$, but 
$x \notin \Var(r\gamma)$ and $x \notin \Var(t)$;
however, since $x \in\Var(\varphi)$, 
we have $x \in \SET{\vec{y}}$.
\item
We have $y \in \SET{\vec{y}}$;
note that
$y \in \Var(s)$, but 
$y \notin \Var(r\gamma)$, $y \notin \Var(t)$
and $y \notin \Var(\varphi)$;
however, $y \in \Var(\pi\gamma)$ 
and hence $y \in \Var(\psi)$.
\item
We have $z \notin Y \cup \SET{\vec{y}}$;
although $z \in \Var(s)$, we have that
$z \notin \Var(t)$,
$z \notin \Var(r)$ and 
$z \notin \Var(\psi)$.
\end{itemize}
\end{example}

In the following we show that rewriting existentially constrained terms is
well-defined, in the sense that each rewrite step results in an existentially
constrained term. Before that we give certain characterizations which are used in 
later proofs; proofs of these lemmas are presented in the appendix.

\begin{restatable}[Characterization of Free Variables in Reducts]{lemma}{LemmaIVvii}
\label{lem:free variables of constraint of reducts}
Let $\CTerm{X}{s}{\vec{x}}{\varphi}$ be a satisfiable existentially constrained term.
Suppose that $\rho\colon \CRu{Z}{\ell}{r}{\pi}$ is a left-linear constrained rule
satisfying $\Var(\rho) \cap \Var(s,\varphi) = \varnothing$.
If $\CTerm{X}{s}{\vec{x}}{\varphi} \R_{\rho} \CTerm{Y}{t}{\vec{y}}{\psi}$
then $\FVar(\ECO{\vec{y}}{\psi}) \subseteq \ExVar(\rho) \cup (X \cap \Var(t))$.
\end{restatable}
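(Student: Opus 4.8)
The plan is to compute $\FVar(\ECO{\vec{y}}{\psi})$ explicitly and then bound it by a short chain of set inclusions, the only real content being a bookkeeping of where each class of variables can reside. First I would unfold the definitions: since $\FVar(\ECO{\vec{y}}{\psi}) = \Var(\psi) \setminus \SET{\vec{y}}$ and the rewrite step sets $\SET{\vec{y}} = \Var(\psi) \setminus \Var(t)$, the elementary identity $A \setminus (A \setminus B) = A \cap B$ gives $\FVar(\ECO{\vec{y}}{\psi}) = \Var(\psi) \cap \Var(t)$. As $Y = \ExVar(\rho) \cup (X \cap \Var(t))$, it then suffices to prove $\Var(\psi) \cap \Var(t) \subseteq \ExVar(\rho) \cup (X \cap \Var(t))$.

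Next I would establish two ``source'' inclusions for $\Var(\psi)$ and one ``target'' inclusion for $\Var(t)$. Using $\psi = \varphi \land \pi\gamma$ we have $\Var(\psi) = \Var(\varphi) \cup \Var(\pi\gamma)$. From the definition of existentially constrained terms, $\Var(\varphi) = \FVar(\ECO{\vec{x}}{\varphi}) \cup \SET{\vec{x}} \subseteq X \cup \SET{\vec{x}}$. For $\Var(\pi\gamma)$ I would split $\Var(\pi)$ along $\Var(\ell) = \Dom(\gamma)$: the variables of $\pi$ outside $\Var(\ell)$ are fixed by $\gamma$ and form exactly the set $\SET{\vec{z}}$ of the $\rho$-redex definition, whereas each $v \in \Var(\pi) \cap \Var(\ell) \subseteq Z \cap \Var(\ell)$ satisfies $\gamma(v) \in \Val \cup X$ by redex condition~\Bfnum{3.}, so $\Var(\gamma(v)) \subseteq X$; together this yields $\Var(\pi\gamma) \subseteq X \cup \SET{\vec{z}}$, hence $\Var(\psi) \subseteq X \cup \SET{\vec{x}} \cup \SET{\vec{z}}$. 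For the reduct, since $\ell\gamma = s|_p$ every $\gamma(v)$ with $v \in \Var(\ell)$ has its variables inside $\Var(s)$, while $\gamma$ fixes the extra-variables $\ExVar(\rho) = \Var(r) \setminus \Var(\ell)$; as $t = s[r\gamma]_p$ this gives $\Var(t) \subseteq \Var(s) \cup \ExVar(\rho)$.

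Finally I would intersect. The crux is to argue that the two auxiliary variable classes $\SET{\vec{x}}$ and $\SET{\vec{z}}$ contribute nothing harmful to $\Var(t)$. For the bound variables, $\SET{\vec{x}} \cap \Var(s) = \varnothing$ by the defining condition $\BVar(\ECO{\vec{x}}{\varphi}) \cap \Var(s) = \varnothing$, and $\SET{\vec{x}} \cap \ExVar(\rho) = \varnothing$ because $\SET{\vec{x}} \subseteq \Var(\varphi)$ while $\ExVar(\rho) \subseteq \Var(\rho)$ and the freshness hypothesis gives $\Var(\rho) \cap \Var(s,\varphi) = \varnothing$; hence $\SET{\vec{x}} \cap \Var(t) = \varnothing$. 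For $\SET{\vec{z}} \subseteq \Var(\pi) \subseteq \Var(\rho)$, the same freshness hypothesis yields $\SET{\vec{z}} \cap \Var(s) = \varnothing$, so $\SET{\vec{z}} \cap \Var(t) \subseteq \ExVar(\rho)$. Distributing the intersection then gives $\Var(\psi) \cap \Var(t) \subseteq (X \cap \Var(t)) \cup (\SET{\vec{x}} \cap \Var(t)) \cup (\SET{\vec{z}} \cap \Var(t)) \subseteq (X \cap \Var(t)) \cup \ExVar(\rho) = Y$, as required. I expect the only genuine care to be needed in this last step, namely in showing that the bound variables $\SET{\vec{x}}$ and the fresh rule variables $\SET{\vec{z}}$ cannot reappear in the reduct $t$, since it is precisely there that both the disjointness of bound variables from the term and the freshness of the rule are jointly used; everything else is routine set algebra.
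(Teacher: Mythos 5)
Your proof is correct, and it uses exactly the same ingredients as the paper's proof---the reduction $\FVar(\ECO{\vec{y}}{\psi}) = \Var(\psi) \cap \Var(t)$, redex condition~(3) (that $\gamma(x) \in \Val \cup X$ for $x \in \Var(\ell) \cap Z$), the freshness assumption $\Var(\rho) \cap \Var(s,\varphi) = \varnothing$, and the disjointness $\SET{\vec{x}} \cap \Var(s) = \varnothing$---but it organizes them differently. The paper argues pointwise: it takes $x \in \Var(\varphi \land \pi\gamma) \cap \Var(t)$, splits on which conjunct $x$ lies in, and in each case traces whether $x$ can enter $t$ through $\gamma$ or not, using small proof-by-contradiction sub-arguments (e.g., showing that any $x \in \Var(\varphi) \cap \Var(t)$ must already lie in $\Var(s)$, hence in $\FVar(\ECO{\vec{x}}{\varphi}) \subseteq X$). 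You instead prove two global inclusions, $\Var(\psi) \subseteq X \cup \SET{\vec{x}} \cup \SET{\vec{z}}$ and $\Var(t) \subseteq \Var(s) \cup \ExVar(\rho)$, and finish by distributing the intersection and showing the cross terms $\SET{\vec{x}} \cap \Var(t)$ and $\SET{\vec{z}} \cap \Var(t) \setminus \ExVar(\rho)$ vanish. The mathematical content coincides, but your version is more modular: the inclusion $\Var(t) \subseteq \Var(s) \cup \ExVar(\rho)$ packages the paper's repeated ``where can a variable of $r\gamma$ come from'' reasoning into a single reusable statement, and the conclusion is then pure set algebra rather than case-by-case variable tracing; the mild cost is that you must separately verify the two disjointness claims that the paper's contradiction arguments absorb in passing.
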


\begin{restatable}[Characterization of Bound Variables]{lemma}{LemmaIVviii}
\label{lem:bound variables of constraint of reducts}
Suppose the satisfiable existentially constrained term
$\CTerm{X}{s}{\vec{x}}{\varphi}$
such that $\CTerm{X}{s}{\vec{x}}{\varphi} \R_{\rho,\gamma} \CTerm{Y}{t}{\vec{y}}{\psi}$,
where 
$\rho\colon \CRu{Z}{\ell}{r}{\pi}$ 
is a left-linear constrained rewrite rule satisfying $\Var(\rho) \cap \Var(s,\varphi) = \varnothing$.
Then the following statements hold:
\begin{enumerate}
    \item 
$\BVar(\ECO{\vec{x}}{\varphi}) \subseteq \BVar(\ECO{\vec{y}}{\psi})$.
    \item 
$\BVar(\ECO{\vec{x}}{\varphi}) \cap \Var(l\gamma, r\gamma, \pi\gamma) = \varnothing$.
    \item 
$Y \cup \BVar(\ECO{\vec{y}}{\psi}) = \ExVar(\rho) \cup (X \cap \Var(t)) \cup \Var(\psi)$.
\end{enumerate}
\end{restatable}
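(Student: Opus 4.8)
The plan is to unfold every definition occurring in the conclusion and reduce all three statements to elementary set-membership computations on $\Var(t)$ and $\Var(\psi)$, where $t = s[r\gamma]_p$ and $\psi = \varphi \land \pi\gamma$. Two facts will serve as the shared workhorse. First, since $\ell\gamma = s|_p$ and $\Dom(\gamma) = \Var(\ell)$, the range of $\gamma$ satisfies $\Var(\gamma(x)) \subseteq \Var(s|_p) \subseteq \Var(s)$ for every $x \in \Var(\ell)$; consequently, for $u \in \SET{\ell, r, \pi}$ one obtains the inclusion $\Var(u\gamma) \subseteq \Var(s) \cup (\Var(u) \setminus \Var(\ell))$, where the second summand consists of genuine rule variables. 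Second, the disjointness hypothesis $\Var(\rho) \cap \Var(s,\varphi) = \varnothing$ guarantees that any such rule variable occurs neither in $\Var(s)$ nor in $\Var(\varphi)$. I will also record the elementary inclusion $\Var(t) \subseteq \Var(s) \cup \Var(r\gamma)$, valid because replacing a subterm can only drop variables of $s$ and add those of $r\gamma$.

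For parts (1) and (2) I would first establish a single auxiliary fact: $\BVar(\ECO{\vec{x}}{\varphi}) = \SET{\vec{x}}$ is disjoint from each of $\Var(\ell\gamma)$, $\Var(r\gamma)$, $\Var(\pi\gamma)$ and from $\Var(t)$. Using the workhorse inclusion together with $\SET{\vec{x}} \subseteq \Var(\varphi)$ and the defining condition $\SET{\vec{x}} \cap \Var(s) = \varnothing$ of an existentially constrained term, the part of $\Var(u\gamma)$ inside $\Var(s)$ misses $\SET{\vec{x}}$, while the part in $\Var(u) \setminus \Var(\ell) \subseteq \Var(\rho)$ misses $\Var(\varphi) \supseteq \SET{\vec{x}}$. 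This immediately gives part (2). Part (1) then follows because $\SET{\vec{x}} \subseteq \Var(\varphi) \subseteq \Var(\psi)$ while $\SET{\vec{x}} \cap \Var(t) = \varnothing$, so $\SET{\vec{x}} \subseteq \Var(\psi) \setminus \Var(t) = \SET{\vec{y}} = \BVar(\ECO{\vec{y}}{\psi})$.

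For part (3), after substituting $Y = \ExVar(\rho) \cup (X \cap \Var(t))$ and $\SET{\vec{y}} = \Var(\psi) \setminus \Var(t)$, the inclusion $\supseteq$ is immediate and the inclusion $\subseteq$ reduces to the single statement $\Var(\psi) \cap \Var(t) \subseteq \ExVar(\rho) \cup (X \cap \Var(t))$. I would split $\Var(\psi) = \Var(\varphi) \cup \Var(\pi\gamma)$. A variable $w \in \Var(\varphi) \cap \Var(t)$ lies in $X \cup \SET{\vec{x}}$ since $\FVar(\ECO{\vec{x}}{\varphi}) \subseteq X$, and it cannot lie in $\SET{\vec{x}}$ by the disjointness from part (1); hence $w \in X \cap \Var(t)$. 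For $w \in \Var(\pi\gamma) \cap \Var(t)$, I would invoke the $\rho$-redex condition $\gamma(x) \in \Val \cup X$ on $x \in \Var(\ell) \cap Z \supseteq \Var(\ell) \cap \Var(\pi)$ to refine the workhorse inclusion to $\Var(\pi\gamma) \subseteq X \cup (\Var(\pi) \setminus \Var(\ell))$: if $w \in X$ we are done, and otherwise $w$ is a rule variable disjoint from $\Var(s)$, so from $\Var(t) \subseteq \Var(s) \cup \Var(r\gamma)$ and $\Var(r\gamma) \subseteq \Var(s) \cup (\Var(r) \setminus \Var(\ell))$ we conclude $w \in \Var(r) \setminus \Var(\ell) = \ExVar(\rho)$.

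The main obstacle is exactly this last case of part (3): tracking how a constraint variable introduced by $\pi\gamma$ can reappear in the rewritten term $t$. Here the redex condition $\gamma(x) \in \Val \cup X$ is indispensable, since it forces the $\gamma$-instances of the variables shared between $\ell$ and $\pi$ back into $X$ (values contribute no variables), leaving only the genuinely extra variables of the rule to be absorbed by $\ExVar(\rho)$. Once this case is handled, everything else is routine bookkeeping driven by the two workhorse facts.
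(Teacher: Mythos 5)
Your proof is correct: every step checks out and all three parts are established. It is, however, organized differently from the paper's proof in two respects. First, you reverse the dependency between parts (1) and (2): the paper proves (1) directly, by a contradiction argument showing that a variable of $\Var(\varphi)\setminus\Var(s)$ occurring in $s[r\gamma]_p$ would have to lie in $\Var(\ell\gamma)$ or in $\ExVar(\rho)$, and then uses (1) to dispose of the $r\gamma$-case of (2); you instead prove the disjointness facts of (2) first, uniformly through the inclusion $\Var(u\gamma)\subseteq\Var(s)\cup(\Var(u)\setminus\Var(\ell))$ for $u\in\SET{\ell,r,\pi}$, after which (1) drops out in one line. Second, for part (3) the paper observes that the only nontrivial inclusion is $\Var(\psi)\cap\Var(t)=\FVar(\ECO{\vec{y}}{\psi})\subseteq\ExVar(\rho)\cup(X\cap\Var(t))$ and simply cites the preceding lemma (Characterization of Free Variables in Reducts), which is exactly this statement; you re-derive it inline, with the same case split ($\Var(\varphi)\cap\Var(t)$ versus $\Var(\pi\gamma)\cap\Var(t)$) and the same essential use of the redex condition $\gamma(x)\in\Val\cup X$ for $x\in\Var(\ell)\cap Z$. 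Your route buys a self-contained and more uniform argument---in particular, handling $w\in\Var(\varphi)\cap\Var(t)$ via $\Var(\varphi)\subseteq X\cup\SET{\vec{x}}$ and $\SET{\vec{x}}\cap\Var(t)=\varnothing$ is slicker than the paper's contradiction argument establishing $w\in\Var(s)$---while the paper's factoring avoids duplicated work, since the free-variables lemma is needed independently (e.g.\ in the well-definedness theorem). One cosmetic slip: in part (3) your labels for the two inclusions are swapped relative to the usual reading of the claimed equality---the immediate direction is $Y\cup\BVar(\ECO{\vec{y}}{\psi})\subseteq\ExVar(\rho)\cup(X\cap\Var(t))\cup\Var(\psi)$, and it is the converse that reduces to the displayed statement---but this does not affect the substance.
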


We are ready to show that our rewrite steps are well-defined.

\begin{restatable}[Well-Definedness of Rewrite Steps]{theorem}{TheoremIVix}
Let $\rho$ be a left-linear constrained rewrite rule
and $\CTerm{X}{s}{\vec{x}}{\varphi}$ a satisfiable existentially constrained term
such that $\Var(\rho) \cap \Var(s,\varphi) = \varnothing$.
If $\CTerm{X}{s}{\vec{x}}{\varphi} \R_{\rho} \CTerm{Y}{t}{\vec{y}}{\psi}$
then $\CTerm{Y}{t}{\vec{y}}{\psi}$ is a 
satisfiable
existentially constrained term.
\end{restatable}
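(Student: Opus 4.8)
The plan is to check two things about $\CTerm{Y}{t}{\vec{y}}{\psi}$: that it meets the structural requirements of an existentially constrained term (namely, $\SET{\vec{y}} \subseteq \Var(\psi)$, $\FVar(\ECO{\vec{y}}{\psi}) \subseteq Y \subseteq \Var(t)$, and $\BVar(\ECO{\vec{y}}{\psi}) \cap \Var(t) = \varnothing$) and that $\ECO{\vec{y}}{\psi}$ is satisfiable. Several of the structural requirements are immediate from the definition of a most general rewrite step. Since $\SET{\vec{y}} = \Var(\psi) \setminus \Var(t)$ and $\BVar(\ECO{\vec{y}}{\psi}) = \SET{\vec{y}}$, both $\SET{\vec{y}} \subseteq \Var(\psi)$ and $\BVar(\ECO{\vec{y}}{\psi}) \cap \Var(t) = (\Var(\psi) \setminus \Var(t)) \cap \Var(t) = \varnothing$ hold trivially. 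The inclusion $\FVar(\ECO{\vec{y}}{\psi}) \subseteq Y$ is precisely \Cref{lem:free variables of constraint of reducts}. For the remaining inclusion $Y \subseteq \Var(t)$, I would use $Y = \ExVar(\rho) \cup (X \cap \Var(t))$: the second set lies in $\Var(t)$ by construction, and because $\Dom(\gamma) = \Var(\ell)$ leaves the extra-variables $\ExVar(\rho) = \Var(r) \setminus \Var(\ell)$ untouched, they occur in $r\gamma$ and hence in $t = s[r\gamma]_p$.

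The substantial part is satisfiability of $\ECO{\vec{y}}{\psi}$, which I would first reduce to satisfiability of the plain constraint $\psi = \varphi \land \pi\gamma$: any valuation satisfying $\psi$ witnesses $\ECO{\vec{y}}{\psi}$ by reading off its own assignment on $\vec{y}$ (these assignments are values, since every domain element occurs as a value constant). To satisfy $\psi$ I would exploit both hypotheses. Satisfiability of $\CTerm{X}{s}{\vec{x}}{\varphi}$ supplies a valuation $\alpha$ and values $\vec{u}$ with $\vDash_{\xM,\alpha} \varphi\SET{\vec{x} \mapsto \vec{u}}$, i.e.\ $\vDash_{\xM,\alpha} \ECO{\vec{x}}{\varphi}$. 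The $\rho$-redex condition $\vDash_\xM (\ECO{\vec{x}}{\varphi}) \Rightarrow (\ECO{\vec{z}}{\pi\gamma})$, with $\SET{\vec{z}} = \Var(\pi) \setminus \Var(\ell)$, then forces $\vDash_{\xM,\alpha} \ECO{\vec{z}}{\pi\gamma}$ under this very same $\alpha$, yielding values $\vec{w}$ with $\vDash_{\xM,\alpha} \pi\gamma\SET{\vec{z} \mapsto \vec{w}}$.

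The key step, and the one I expect to be the main obstacle, is merging these two witnessing assignments into a single valuation satisfying $\varphi \land \pi\gamma$ at once. This requires a careful variable-disjointness analysis. Using the redex condition $\gamma(x) \in \Val \cup X$ for $x \in \Var(\ell) \cap Z$ (together with $\Var(\pi) \subseteq Z$) I would first show $\Var(\pi\gamma) \subseteq \SET{\vec{z}} \cup X$, and recall $\Var(\varphi) \subseteq \SET{\vec{x}} \cup X$. The freshness condition $\Var(\rho) \cap \Var(s,\varphi) = \varnothing$ together with $\BVar(\ECO{\vec{x}}{\varphi}) \cap \Var(s) = \varnothing$ and $X \subseteq \Var(s)$ then give $\SET{\vec{x}} \cap \Var(\pi\gamma) = \varnothing$, $\SET{\vec{z}} \cap \Var(\varphi) = \varnothing$, and $\Var(\varphi) \cap \Var(\pi\gamma) \subseteq X$. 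Consequently the combined substitution $\SET{\vec{x} \mapsto \vec{u}, \vec{z} \mapsto \vec{w}}$ acts on $\varphi$ exactly as $\SET{\vec{x} \mapsto \vec{u}}$ and on $\pi\gamma$ exactly as $\SET{\vec{z} \mapsto \vec{w}}$, while the only shared variables (all in $X$) are interpreted consistently by the common valuation $\alpha$. Hence $\vDash_{\xM,\alpha} \psi\SET{\vec{x} \mapsto \vec{u}, \vec{z} \mapsto \vec{w}}$, so $\psi$ is satisfiable, and therefore $\ECO{\vec{y}}{\psi}$ is satisfiable as well.
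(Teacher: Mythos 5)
Your proposal is correct and takes essentially the same route as the paper's proof: the structural conditions are discharged exactly as in the paper (via \Cref{lem:free variables of constraint of reducts} together with the observation that $\gamma$ fixes $\ExVar(\rho)$, so $\ExVar(\rho) \subseteq \Var(r\gamma) \subseteq \Var(t)$), and satisfiability is obtained by combining the witness for $\ECO{\vec{x}}{\varphi}$ with the redex implication $(\ECO{\vec{x}}{\varphi}) \Rightarrow (\ECO{\vec{z}}{\pi\gamma})$ into a single satisfying assignment for $\varphi \land \pi\gamma$. If anything, your explicit disjointness analysis ($\SET{\vec{x}} \cap \Var(\pi\gamma) = \varnothing$ and $\SET{\vec{z}} \cap \Var(\varphi) = \varnothing$) spells out details that the paper compresses into ``one can extend the valuation $\xi$ to $\xi'$'' after recording only $\SET{\vec{x}} \cap \SET{\vec{z}} = \varnothing$.
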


\begin{proof}
Suppose
$\CTerm{X}{s}{\vec{x}}{\varphi} \R^p_{\rho,\gamma} \CTerm{Y}{t}{\vec{y}}{\psi}$,
where $\rho\colon \CRu{Z}{\ell}{r}{\pi}$.
Then we have 
(1) $\Dom(\gamma) = \Var(\ell)$,
(2) $s|_p = \ell\gamma$,
(3) $\gamma(x) \in \Val \cup X$ for any $x \in \Var(\ell) \cap Z$, 
(4) $\vDash_\xM (\ECO{\vec{x}}{\varphi}) \Rightarrow (\ECO{\vec{z}}{\pi\gamma})$,
where $\SET{\vec{z}} = \Var(\pi) \setminus \Var(\ell)$,
and 
$t = s[r\gamma]$,
$\psi = \varphi \land \pi\gamma$,
$\SET{\vec{y}} = \Var(\psi) \setminus \Var(t)$, and
$Y = \ExVar(\rho) \cup (X \cap \Var(t))$.
Clearly, $\ECO{\vec{y}}{\psi}$ is an existential quantified constraint.

We show 
$\FVar(\ECO{\vec{y}}{\psi}) \subseteq Y \subseteq \Var(t)$
and $\BVar(\ECO{\vec{y}}{\psi}) \cap \Var(t) =  \varnothing$.
From \Cref{lem:free variables of constraint of reducts},
$\FVar(\ECO{\vec{y}}{\psi}) \subseteq \ExVar(\rho) \cup (X \cap \Var(t)) = Y$ follows.
Since $\SET{\vec{y}} = \Var(\psi) \setminus \Var(t)$,
$\SET{\vec{y}} \cap \Var(t) = \varnothing$ clearly holds.
It remains to show $Y = \ExVar(\rho) \cup (X \cap \Var(t)) \subseteq \Var(t)$.
However, as $X \cap \Var(t) \subseteq \Var(t)$, it suffices to show
$\ExVar(\rho) \subseteq \Var(t)$.
Let $x \in \ExVar(\rho) = \Var(r) \setminus\Var(\ell)$.
Then we obtain by $\gamma(x) = x$ that $x \notin \Var(\ell) = \Dom(\gamma)$.
Hence, from $x \in \Var(r)$, it follows that $x \in \Var(r\gamma) \subseteq \Var(t)$.

Finally, we show that $\CTerm{Y}{t}{\vec{y}}{\psi}$ is satisfiable.
By the satisfiability of $\CTerm{X}{s}{\vec{x}}{\varphi}$,
there exists a valuation $\xi$ such that $\vDash_{\xM,\xi} \ECO{\vec{x}}{\varphi}$.
Hence by (4), $\vDash_{\xM,\xi} \ECO{\vec{z}}{\pi\gamma}$.
Since $\SET{\vec{x}} \subseteq \Var(\varphi)$ and $\SET{\vec{z}} \subseteq \Var(\rho)$,
we have $\SET{\vec{x}} \cap \SET{\vec{z}} = \varnothing$.
Thus, one can extend the valuation $\xi$ to $\xi'$
such that $\vDash_{\xM,\xi'} \varphi \land \pi\gamma$.
This also gives $\vDash_{\xM,\xi'} \ECO{\vec{y}}{\psi}$.
Hence it follows that $\CTerm{Y}{t}{\vec{y}}{\psi}$ is satisfiable.
\end{proof}

\begin{remark}
In our new notion an instance of the constraint of the used rewrite rule is
attached to the constraint of the constrained term. Previously, the constraint
remained unchanged~\cite{KN13frocos}.
This implies that in each rewrite step a fresh variant of a constrained rewrite rule is needed
in order to prevent variable clashes.
Therefore, the constraint part of a constrained term 
may grow along rewriting sequences.
However, we expect that this does not cause any troubles in actual
implementations as rewrite steps can be combined with simplification of constraints.
For example, in \Cref{exa:rewriting constrained terms} one may perform simplifications
after the rewrite step:
\[
\begin{array}{@{}l@{\>}c@{\>}l@{}}
           \lefteqn{\CTerm{\SET{x}}{\m{f}(x)}{}{x>\m{0}}} \\
&\R_{\rho} & \CTerm{\SET{y'}}{\m{g}(y')}{x}{x>\m{0} \land x \ge \m{0} \land y' > x}\\
&\sim  & \CTerm{\SET{y'}}{\m{g}(y')}{}{y'>\m{1}}
\end{array}
\]
In the following, as we focus on the theoretical aspects of rewrite steps, 
we do not consider simplifications of constraints.
\end{remark}

\section{Simulating the State-of-the-Art of Non-Quantified Constrained Rewriting}
\label{sec:Simulating Most General Constrained Rewriting in Non-Quantfied Constrained Rewriting}

Our new formalism for constrained rewriting 
is closely related to the original definition which was introduced
in~\cite{KN13frocos}. 
In this and the following sections, we formally describe this
relation. 
To reflect the idea behind the original constrained terms of~\cite{KN13frocos}
and to prepare it for a suitable comparison to our existentially constrained terms,
we introduce the concept of \emph{non-quantified} 
constrained terms extended with the $\Pi$-notation as follows:

\begin{definition}[Non-Quantified Terms Extended with $\Pi$-notation]
A \emph{non-quantified term (extended with $\Pi$-notation)}
is a triple $\langle X,s,\varphi \rangle$,
written as $\CTerm{X}{s}{}{\varphi}$, of a set $X$ of theory variables, 
a term $s$, and a logical constraint $\varphi$ such that 
$\Var(\varphi) \subseteq X \subseteq \Var(\varphi,s)$.
\end{definition}

\begin{example}
\label{exp:non-quantified constrained terms}
For example, $\CTerm{\SET{x,y}}{\m{f}(x)}{}{x < \m{2} \land \m{0} > y}$
and $\CTerm{\SET{x,y,z}}{\m{h}(x,y)}{}{x < y \land x + y + \m{1} = z}$
are non-quantified constrained terms.
\end{example}

For non-quantified constrained terms,
satisfiability and equivalence are
defined in the usual way. A constrained term $s~\CO{\varphi}$ without
$\Pi$-notation can be easily lifted to one with $\Pi$-notation as
$\CTerm{\Var(\varphi)}{s}{}{\varphi}$.
Throughout this paper, we disambiguate the two notions of constrained terms
by explicitly naming them \emph{non-quantified constrained terms} and 
existentially constrained terms. 

We focus now on rewrite steps of 
non-quantified constrained terms (extended with $\Pi$-notation), 
which reflects the original notion of
rewriting constrained terms~\cite{KN13frocos}.
For equational theories, Aoto et al.\ in~\cite{ANS24} introduced special $\Pi$-notations
in order to have explicit sets of logical variables which the ones that need
to be instantiated by values. The definition of constrained equation
$\CEqn{Z}{\ell}{r}{\pi}$ is very similar to $\CRu{Z}{\ell}{r}{\pi}$, albeit a
slight difference is added to the definition of constrained equations 
in~\cite{ANS24}. In particular, we request that $Z \subseteq \Var(\ell,r,\pi)$
for a constrained rule $\CRu{Z}{\ell}{r}{\pi}$ in order to avoid
redundant variables in $Z$. The variables in $Z$ of $\CRu{Z}{\ell}{r}{\pi}$
and $\CEqn{Z}{\ell}{r}{\pi}$ are the ones that need to be instantiated by
values in rewrite steps. Rewrite steps using a constrained rewrite
rule $\rho\colon \CRu{Z}{\ell}{r}{\pi}$ on a non-quantified constrained
term are performed as follows: 

\begin{definition}[Constrained Rewriting of Non-Quantified Terms]
Let $\CTerm{X}{s}{}{\varphi}$ be a satisfiable non-quantified constrained term
and suppose $\Var(\rho) \cap (X \cup \Var(s,\varphi)) = \varnothing$.
Then we obtain the rewrite step
$\CTerm{X}{s}{}{\varphi} \R^p_{\rho,\sigma} \CTerm{Y}{t}{}{\varphi}$
if there exists a position $p \in \FPos(s)$, a substitution $\sigma$ with
$\Dom(\sigma) = \Var(\ell,r,\pi)$,
$\ell\sigma = s|_p$, $\sigma(x) \in \Val \cup X$ for all $x \in Z$, and
$\vDash_\xM \varphi \Rightarrow \pi\sigma$,
where $t = s[r\sigma]_p$ and $Y = X \cap \Var(t, \varphi)$.
\end{definition}

It is 
easy to check that non-quantified constrained rules in~\cite{KN13frocos} 
are covered by taking 
$X = Y = \Var(\varphi)$ and $Z := \LVar(\ell \to r~\CO{\pi})$.

\begin{example}
\label{exp:rewrite step of non-quantified constrained terms}
We revisit \Cref{exa:intro I,exa:intro II}.
Consider the constrained rewrite rule
$\rho\colon 
\CRu{\SET{x',y'}}{\m{f}(x')}{\m{g}(y')}{x' \ge \m{1} \land x' + \m{1} \ge y'}$.
Then the rewrite steps in \Cref{exa:intro I} are
encoded as follows.

\begin{itemize}
\item 
Take $p = \varepsilon$ and $\sigma = \{ x' \mapsto x, y' \mapsto \m{3} \}$.
We obtain the rewrite step
$\CTerm{\SET{x}}{\m{f}(x)}{}{x > \m{2}}
\R^p_{\rho,\sigma} 
\CTerm{\SET{x}}{\m{g}(\m{3})}{}{x > \m{2}}$.
Note here that
$\sigma(x') = x \in \Val \cup X$ and $\sigma(y') = \m{3} \in \Val \cup X$.
Also,
we have $(x' \ge \m{1} \land x' + \m{1} \ge y')\sigma = (x \ge \m{1} \land x + \m{1} \ge \m{3})$
and
$\vDash_\xM (x > \m{2} \Rightarrow (x \ge \m{1} \land x + \m{1} \ge \m{3})$.

\item 
Take $p = \varepsilon$ and $\sigma = \{ x' \mapsto x, y' \mapsto x \}$.
We obtain the rewrite step
$\CTerm{\SET{x}}{\m{f}(x)}{}{x > \m{2}}
\R^p_{\rho,\sigma} 
\CTerm{\SET{x}}{\m{g}(x)}{}{x > \m{2}}$.
Note here that $\sigma(x') = \sigma(y') = x \in \Val \cup X$.
Also,
we have $(x' \ge \m{1} \land x' + \m{1} \ge y')\sigma = (x \ge \m{1} \land x + \m{1} \ge x)$
and
$\vDash_\xM (x > \m{2} \Rightarrow (x \ge \m{1} \land x + \m{1} \ge x)$.

\item
We have a rewrite step 
$\CTerm{\SET{x,y}}{\m{f}(x)}{}{x > \m{2} \land \m{0} > y}
\R^p_{\rho,\sigma} 
\CTerm{\SET{x,y}}{\m{g}(y)}{}{x > \m{2} \land \m{0} > y}$,
where $p = \varepsilon$ and $\sigma = \{ x' \mapsto x, y' \mapsto y \}$. 
Note here that 
$\sigma(x') = x \in \Val \cup X$ and $\sigma(y') = y \in \Val \cup X$.
Also,
we have $(x' \ge \m{1} \land x' + \m{1} \ge y')\sigma = (x \ge \m{1} \land x + \m{1} \ge y)$
and
$\vDash_\xM (x > \m{2} \land \m{0} > y \Rightarrow (x \ge \m{1} \land x + \m{1} \ge y)$.
\end{itemize}
Similarly, 
let $\rho\colon \CRu{\SET{x',y',z'}}{\m{h}(x',y')}{\m{g}(z')}{(x' + y') + \m{1} = z'}$
from which we obtain
$\CTerm{\SET{x,y,z}}{\m{h}(x,y)}{}{x < y \land x + y + \m{1} = z}
\R^p_{\rho,\sigma} 
\CTerm{\SET{x,y,z}}{\m{g}(z)}{}{x < y \land x + y + \m{1} = z}$
as in \Cref{exa:intro II}.
\end{example}

Below, we show the relation between the current state-of-the-art of constrained
rewrite steps and our new definition. 
We begin by defining translations between them.

\begin{definition}[Existential Extension and Existential Removing Translations]\ 
\label{def:existential-extension-and-existential-removing-translations}
\begin{enumerate}
\item
An \emph{existential extension} $\ext$ of a non-quantified constrained term
is defined as $\ext(\CTerm{X}{s}{}{\varphi}) = \CTerm{Y}{s}{\vec{x}}{\varphi}$
where $\SET{\vec{x}} = \Var(\varphi) \setminus \Var(s)$ and $Y = X \setminus \SET{\vec{x}}$.
\item 
An \emph{existential removing} $\rmv$ of an existentially constrained term
is defined as $\rmv(\CTerm{X}{s}{\vec{x}}{\varphi}) = \CTerm{X \cup \SET{\vec{x}}}{s}{}{\varphi}$.
\end{enumerate}
\end{definition}

\begin{example}[Cont'd of \Cref{exp:non-quantified constrained terms}]
\label{exa:rmv and ext transformations}
The two non-quantified constrained terms in
\Cref{exp:non-quantified constrained terms}
are translated to existentially constrained terms
as follows:
$\ext(\CTerm{\SET{x,y}}{\m{f}(x)}{}{x < \m{2} \land \m{0} > y})
= \CTerm{\SET{x}}{\m{f}(x)}{\SET{y}}{x < \m{2} \land \m{0} > y}$
and 
$\ext(\CTerm{\SET{x,y,z}}{\m{h}(x,y)}{}{x < y \land x + y + \m{1} = z})
= \CTerm{\SET{x,y}}{\m{h}(x,y)}{\SET{z}}{x < y \land x + y + \m{1} = z}$.
Similarly, we have 
$\rmv(\CTerm{\SET{x}}{\m{f}(x)}{\SET{y}}{x < \m{2} \land \m{0} > y})
= \CTerm{\SET{x,y}}{\m{f}(x)}{}{x < \m{2} \land \m{0} > y}$
and 
$\rmv(\CTerm{\SET{x,y}}{\m{h}(x,y)}{\SET{z}}{x < y \land x + y + \m{1} = z})
= \CTerm{\SET{x,y,z}}{\m{h}(x,y)}{}{x < y \land x + y + \m{1} = z}$.
\end{example}

The next lemmata straightforwardly follow from \Cref{def:existential-extension-and-existential-removing-translations}.

\begin{lemma}
For any non-quantified constrained term $\CTerm{X}{s}{}{\varphi}$,
$\ext(\CTerm{X}{s}{}{\varphi})$ is an existentially constrained term;
for any existentially constrained term $\CTerm{X}{s}{\vec{x}}{\varphi}$
$\rmv(\CTerm{X}{s}{\vec{x}}{\varphi})$ is a non-quantified constrained term.
\end{lemma}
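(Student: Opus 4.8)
The plan is to prove both halves of the lemma by direct verification of the defining
conditions of the respective term notions, using the constraints relating the three
components $X$, $s$, $\varphi$ (resp.\ $\vec{x}$) in each case.

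\textbf{First direction.}
Let $\CTerm{X}{s}{}{\varphi}$ be a non-quantified constrained term, so by definition
$\Var(\varphi) \subseteq X \subseteq \Var(\varphi, s)$. Write
$\ext(\CTerm{X}{s}{}{\varphi}) = \CTerm{Y}{s}{\vec{x}}{\varphi}$ with
$\SET{\vec{x}} = \Var(\varphi) \setminus \Var(s)$ and $Y = X \setminus \SET{\vec{x}}$.
To conclude that this is an existentially constrained term I must check, in order: that
$\ECO{\vec{x}}{\varphi}$ is a well-formed existential constraint, i.e.\
$\SET{\vec{x}} \subseteq \Var(\varphi)$, which is immediate since
$\SET{\vec{x}} = \Var(\varphi) \setminus \Var(s) \subseteq \Var(\varphi)$; that
$\FVar(\ECO{\vec{x}}{\varphi}) \subseteq Y \subseteq \Var(s)$; and that
$\BVar(\ECO{\vec{x}}{\varphi}) \cap \Var(s) = \varnothing$. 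The last condition is direct,
since $\BVar(\ECO{\vec{x}}{\varphi}) = \SET{\vec{x}} = \Var(\varphi) \setminus \Var(s)$
is disjoint from $\Var(s)$ by construction. For the free-variable chain I compute
$\FVar(\ECO{\vec{x}}{\varphi}) = \Var(\varphi) \setminus \SET{\vec{x}}
= \Var(\varphi) \cap \Var(s)$, and I show this is contained in
$Y = X \setminus \SET{\vec{x}}$ using $\Var(\varphi) \subseteq X$, and contained in
$\Var(s)$ trivially; the containment $Y \subseteq \Var(s)$ then follows from
$X \subseteq \Var(\varphi, s)$ together with the removal of $\SET{\vec{x}}$.

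\textbf{Second direction.}
Let $\CTerm{X}{s}{\vec{x}}{\varphi}$ be an existentially constrained term, so
$\FVar(\ECO{\vec{x}}{\varphi}) \subseteq X \subseteq \Var(s)$,
$\BVar(\ECO{\vec{x}}{\varphi}) = \SET{\vec{x}} \subseteq \Var(\varphi)$, and
$\SET{\vec{x}} \cap \Var(s) = \varnothing$. Writing
$\rmv(\CTerm{X}{s}{\vec{x}}{\varphi}) = \CTerm{X \cup \SET{\vec{x}}}{s}{}{\varphi}$, I
must verify the defining chain for non-quantified terms, namely
$\Var(\varphi) \subseteq X \cup \SET{\vec{x}} \subseteq \Var(\varphi, s)$. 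For the left
inclusion I split $\Var(\varphi) = \FVar(\ECO{\vec{x}}{\varphi}) \cup \SET{\vec{x}}$
and use $\FVar(\ECO{\vec{x}}{\varphi}) \subseteq X$. For the right inclusion I note
$X \subseteq \Var(s) \subseteq \Var(\varphi, s)$ and
$\SET{\vec{x}} \subseteq \Var(\varphi) \subseteq \Var(\varphi, s)$.

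\textbf{Expected obstacle.}
I do not anticipate a genuine mathematical difficulty here; the statement is an
unfolding of the two definitions together with the side conditions on $\ext$ and
$\rmv$. The only point demanding care is the bookkeeping of set inclusions --- in
particular correctly identifying $\FVar(\ECO{\vec{x}}{\varphi})$ with
$\Var(\varphi) \cap \Var(s)$ in the first direction and ensuring the partition
$\Var(\varphi) = \FVar(\ECO{\vec{x}}{\varphi}) \uplus \SET{\vec{x}}$ is used
consistently. Since the lemma is flagged in the text as following straightforwardly
from \Cref{def:existential-extension-and-existential-removing-translations}, I expect
the proof to be short and entirely routine once these inclusions are laid out.
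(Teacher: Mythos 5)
Your proof is correct and takes exactly the route the paper intends: the paper offers no explicit proof of this lemma, stating only that it ``straightforwardly follows'' from \Cref{def:existential-extension-and-existential-removing-translations}, and your direct verification of the defining inclusions is precisely that routine unfolding. The key bookkeeping steps you identify --- computing $\FVar(\ECO{\vec{x}}{\varphi}) = \Var(\varphi) \cap \Var(s)$, the disjointness of $\SET{\vec{x}} = \Var(\varphi) \setminus \Var(s)$ from $\Var(s)$, deriving $Y \subseteq \Var(s)$ from $X \subseteq \Var(\varphi,s)$, and splitting $\Var(\varphi) = \FVar(\ECO{\vec{x}}{\varphi}) \cup \SET{\vec{x}}$ in the $\rmv$ direction --- are all correct and complete.
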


\begin{lemma}
The translation $\rmv \circ \ext$ is the identity translation on non-quantified constrained terms;
the translation $\ext \circ \rmv$ is the identity translation on existentially constrained terms.
\end{lemma}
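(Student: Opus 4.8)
The plan is to prove both identities directly by unfolding the definitions of $\ext$ and $\rmv$ from \Cref{def:existential-extension-and-existential-removing-translations}, chasing the set of logical variables and the bound-variable sequence through each composition. Since both translations leave the term $s$ and the constraint $\varphi$ untouched, the only thing that actually moves is the partition of variables between the logical-variable set and the existentially bound block; hence the entire argument reduces to a pair of routine set-theoretic computations, and the main (mild) obstacle is bookkeeping the identity $\Var(\varphi) \setminus \Var(s)$ consistently with the well-formedness conditions of the two kinds of constrained terms.

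For the first identity, I would start with an arbitrary non-quantified constrained term $\CTerm{X}{s}{}{\varphi}$, for which $\Var(\varphi) \subseteq X \subseteq \Var(\varphi,s)$. Applying $\ext$ gives $\CTerm{Y}{s}{\vec{x}}{\varphi}$ with $\SET{\vec{x}} = \Var(\varphi) \setminus \Var(s)$ and $Y = X \setminus \SET{\vec{x}}$. Applying $\rmv$ then yields $\CTerm{Y \cup \SET{\vec{x}}}{s}{}{\varphi}$, so it suffices to show $Y \cup \SET{\vec{x}} = X$. Expanding, $Y \cup \SET{\vec{x}} = (X \setminus \SET{\vec{x}}) \cup \SET{\vec{x}} = X \cup \SET{\vec{x}}$, and this equals $X$ precisely because $\SET{\vec{x}} = \Var(\varphi) \setminus \Var(s) \subseteq \Var(\varphi) \subseteq X$. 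The constraint and term components are literally unchanged, so the composite is the identity.

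For the second identity, I would take an arbitrary existentially constrained term $\CTerm{X}{s}{\vec{x}}{\varphi}$, whose well-formedness gives $\FVar(\ECO{\vec{x}}{\varphi}) \subseteq X \subseteq \Var(s)$ and $\BVar(\ECO{\vec{x}}{\varphi}) \cap \Var(s) = \varnothing$, i.e.\ $\SET{\vec{x}} \cap \Var(s) = \varnothing$. Applying $\rmv$ produces the non-quantified term $\CTerm{X \cup \SET{\vec{x}}}{s}{}{\varphi}$, and then $\ext$ recomputes the bound block as $\SET{\vec{x}'} = \Var(\varphi) \setminus \Var(s)$ together with the new logical-variable set $(X \cup \SET{\vec{x}}) \setminus \SET{\vec{x}'}$. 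The key observation is that $\Var(\varphi) \setminus \Var(s) = \SET{\vec{x}}$: the inclusion $\SET{\vec{x}} \subseteq \Var(\varphi) \setminus \Var(s)$ holds since $\SET{\vec{x}} = \BVar(\ECO{\vec{x}}{\varphi}) \subseteq \Var(\varphi)$ and $\SET{\vec{x}} \cap \Var(s) = \varnothing$, while the reverse inclusion follows because every variable of $\varphi$ not already in $X \subseteq \Var(s)$ must lie in $\SET{\vec{x}}$, using $\FVar(\ECO{\vec{x}}{\varphi}) = \Var(\varphi) \setminus \SET{\vec{x}} \subseteq X$. Recovering $\SET{\vec{x}'} = \SET{\vec{x}}$ then forces the logical-variable set back to $(X \cup \SET{\vec{x}}) \setminus \SET{\vec{x}} = X$, since $X \cap \SET{\vec{x}} = \varnothing$ from $X \subseteq \Var(s)$ and $\SET{\vec{x}} \cap \Var(s) = \varnothing$. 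The only subtlety worth flagging is that $\ext$ returns an unordered set $\Var(\varphi)\setminus\Var(s)$ whereas the original datum carried an ordered sequence $\vec{x}$; one should note that equivalence of existentially constrained terms is insensitive to the ordering of the bound block (or simply fix a canonical ordering), so the recovered term is the original one up to this harmless identification.
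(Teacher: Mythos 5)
Your proposal is correct: the paper gives no explicit proof (it states the lemma "straightforwardly follows" from the definition of $\ext$ and $\rmv$), and your definition-unfolding computation — reducing both identities to $\SET{\vec{x}} \subseteq X$ resp.\ $\Var(\varphi)\setminus\Var(s) = \SET{\vec{x}}$ and $X \cap \SET{\vec{x}} = \varnothing$ via the well-formedness conditions — is precisely that intended straightforward argument. Your closing remark about the bound block being recovered only as a set rather than as the original ordered sequence is a legitimate subtlety the paper glosses over, and handling it by canonical ordering or identification up to reordering is the right fix.
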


We 
show that any rewrite step on
existentially constrained terms 
results in a $\Rs$-rewrite step on non-quantified
constrained terms obtained by existential removing. 
We need the following lemma.

\begin{restatable}{lemma}{LemmaIVxxx}
\label{lem:equivalence of non-quantified terms with extra constraint}
Let $\CTerm{X}{s}{}{\varphi}$ be a non-quantified constrained term.   
Suppose $\vDash_\xM \varphi \Rightarrow \ECO{\vec{z}}{\pi}$
with $\SET{\vec{z}} = \Var(\pi) \setminus (X \cup \Var(\varphi))$
and $\Var(s) \cap \SET{\vec{z}} = \varnothing$.
Then $\CTerm{X}{s}{}{\varphi} \sim \CTerm{X \cup \Var(\pi)}{s}{}{\varphi \land \pi}$.
\end{restatable}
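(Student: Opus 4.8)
The plan is to unfold the definition of equivalence for non-quantified constrained terms into its two subsumption directions and establish each separately. Writing $W = X \cup \Var(\pi)$, I first record the book-keeping facts used throughout: since $\Var(\varphi) \subseteq X$ we have $\SET{\vec{z}} = \Var(\pi) \setminus X$, hence $\SET{\vec{z}} \cap X = \varnothing$ and $W = X \cup \SET{\vec{z}}$; moreover $\Var(\pi) \cap \Var(s) \subseteq X$ because $\SET{\vec{z}} \cap \Var(s) = \varnothing$. A preliminary step checks that $\CTerm{W}{s}{}{\varphi \land \pi}$ is a well-formed non-quantified constrained term, i.e.\ $\Var(\varphi \land \pi) \subseteq W \subseteq \Var(\varphi \land \pi, s)$, which is immediate from $\Var(\varphi) \subseteq X \subseteq \Var(\varphi, s)$.

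For the direction $\CTerm{X}{s}{}{\varphi} \subsetsim \CTerm{W}{s}{}{\varphi \land \pi}$, I take an arbitrary $X$-valued substitution $\sigma$ with $\sigma \vDash_\xM \varphi$ and must produce a $W$-valued $\gamma$ with $\gamma \vDash_\xM \varphi \land \pi$ and $s\sigma = s\gamma$. The idea is to keep $\sigma$ on $\Var(s,\varphi)$ and only extend it on $\vec{z}$. Concretely, I read $\sigma$ as a valuation $\rho$ on $X$ (using $\sigma(X) \subseteq \Val$), observe $\vDash_{\xM,\rho} \varphi$ from $\vDash_\xM \varphi\sigma$, and feed this into the hypothesis $\vDash_\xM \varphi \Rightarrow \ECO{\vec{z}}{\pi}$ to obtain values $\vec{v} \in \Val^*$ with $\vDash_{\xM,\rho} \pi\SET{\vec{z} \mapsto \vec{v}}$. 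I then set $\gamma = \sigma|_{\Var(s,\varphi)} \cup \SET{\vec{z} \mapsto \vec{v}}$, which is well-defined because the two domains are disjoint. It remains to verify the three requirements: $\gamma$ is $W$-valued (on $X$ it agrees with the $X$-valued $\sigma$, on $\vec{z}$ it takes the values $\vec{v}$); $\gamma \vDash_\xM \varphi \land \pi$ (the $\varphi$-part holds since $\gamma$ and $\sigma$ agree on $\Var(\varphi) \subseteq X$, and the $\pi$-part reduces to $\vDash_{\xM,\rho} \pi\SET{\vec{z}\mapsto\vec{v}}$ once one notes that $\Var(\pi) \cap X$ is exactly where $\gamma$ copies $\sigma$); and $s\sigma = s\gamma$, which follows because $\gamma$ and $\sigma$ agree on $\Var(s)$ thanks to $\SET{\vec{z}} \cap \Var(s) = \varnothing$.

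The reverse direction $\CTerm{W}{s}{}{\varphi \land \pi} \subsetsim \CTerm{X}{s}{}{\varphi}$ is the easy one: given a $W$-valued $\tau$ with $\tau \vDash_\xM \varphi \land \pi$, the substitution $\tau$ itself witnesses the subsumption, since $X \subseteq W$ makes $\tau$ already $X$-valued, dropping the conjunct $\pi$ gives $\tau \vDash_\xM \varphi$, and $s\tau = s\tau$ holds trivially. Combining the two directions yields $\sim$. I expect the only delicate point to be the forward direction, specifically the translation between the valuation-level statement $\vDash_{\xM,\rho} \pi\SET{\vec{z}\mapsto\vec{v}}$ supplied by the hypothesis and the substitution-level statement $\vDash_\xM \pi\gamma$ demanded by $\gamma \vDash_\xM \varphi \land \pi$; this hinges precisely on the disjointness facts $\SET{\vec{z}} = \Var(\pi)\setminus X$ and $\SET{\vec{z}}\cap\Var(s) = \varnothing$, which guarantee that $\gamma$ instantiates the free variables $\Var(\pi)\cap X$ of $\pi\SET{\vec{z}\mapsto\vec{v}}$ exactly as $\rho$ does.
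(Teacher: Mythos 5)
Your proof is correct and takes essentially the same route as the paper's: the forward subsumption extends the given $X$-valued $\sigma$ with values for $\vec{z}$ supplied by the hypothesis $\vDash_\xM \varphi \Rightarrow \ECO{\vec{z}}{\pi}$ (the paper packages this witness as the composition $\sigma \circ \kappa$ with $\kappa = \SET{\vec{z} \mapsto \vec{v}}$ rather than your disjoint union $\sigma|_{\Var(s,\varphi)} \cup \kappa$, but since values are ground these coincide on all relevant variables), and the reverse direction is witnessed by the given substitution itself. The valuation-to-substitution transfer you flag as the delicate point, resting on $\SET{\vec{z}} \cap (\Var(s) \cup \Var(\varphi)) = \varnothing$, is exactly the step the paper's proof performs via $\Var(\pi\kappa\sigma) = \varnothing$.
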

\begin{proof}
($\subsetsim$)    
Let $\sigma$ be an $X$-valued substitution such that $\sigma \vDash_{\xM} \varphi$.
Then we have 
$X \cup \Var(\varphi) \subseteq \VDom(\sigma)$ and $\vDash_{\xM} \varphi\sigma$.
Let $\gamma$ be a valuation 
such that $\gamma(x) = \sigma(x)$ for any $x \in X \cup \Var(\varphi)$.
From the latter, we have
$\vDash_{\xM,\gamma} \varphi$ and
hence $\vDash_{\xM,\gamma} \ECO{\vec{z}}{\pi}$ by our assumption.
Thus, there exists a sequence $\vec{v} \in \Val^*$ of values so that 
$\vDash_{\xM,\gamma} \pi\kappa$, 
where $\kappa = \SET{ \vec{z} \mapsto \vec{v} }$.
As $\SET{\vec{z}} = \Var(\pi) \setminus (X \cup \Var(\varphi))$,
we have $\Var(\pi\kappa\sigma) = \varnothing$ and $\vDash_\xM \pi\kappa\sigma$.
Take a substitution $\delta = \sigma \circ \kappa$.
Then, it follows $\delta$ is $\Var(\pi)$-valued and $\vDash_\xM \varphi\delta$.
For any $x \in X$, if $x \in \Dom(\kappa)$,
then $\delta(x) = \sigma(\kappa(x)) = \kappa(x) \in \Val$.
If $x \notin \Dom(\kappa)$,
then $\delta(x) = \sigma(\kappa(x)) = \sigma(x) \in \Val$,
because $\sigma$ is $X$-valued.
Thus, $\delta$ is also $X$-valued.
By our assumption $\Var(s) \cap \SET{\vec{z}} = \varnothing$,
and thus $s\delta = s \kappa \sigma = s \sigma$.
Also, as $\SET{\vec{z}} = \Var(\pi) \setminus (X \cup \Var(\varphi))$,
we have $\SET{\vec{z}} \cap \Var(\varphi) = \varnothing$,
and therefore, 
$\varphi\delta = \varphi\kappa\sigma = \varphi\sigma$.
Hence, $\Var(\varphi\delta) = \varnothing$ and $\vDash_\xM \varphi\delta$.
All in all, $\delta$ is $X \cup \Var(\pi)$-valued, $s \sigma = s\delta$, and 
$\delta \vDash_\xM \varphi \land \pi$.
The ($\supsetsim$)-part is trivial.
\end{proof}

\begin{restatable}[Simulation of Rewrite Steps by Existential Removing]{theorem}{TheoremIVxxxi}
Let $\rho$ be a left-linear constrained rewrite rule.
If we have $\CTerm{X}{s}{\vec{x}}{\varphi} \R_\rho \CTerm{Y}{t}{\vec{y}}{\psi}$
then $\rmv(\CTerm{X}{s}{\vec{x}}{\varphi}) \Rs \rmv(\CTerm{Y}{t}{\vec{y}}{\psi})$.
\end{restatable}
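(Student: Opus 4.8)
The plan is to make the relation $\Rs = {\sim}\cdot{\R}\cdot{\sim}$ explicit by constructing the intermediate non-quantified constrained terms by hand. Write $X_0 = X \cup \SET{\vec{x}}$, so that $\rmv(\CTerm{X}{s}{\vec{x}}{\varphi}) = \CTerm{X_0}{s}{}{\varphi}$, and suppose the given step is $\CTerm{X}{s}{\vec{x}}{\varphi} \R^p_{\rho,\gamma} \CTerm{Y}{t}{\vec{y}}{\psi}$ with $\rho\colon \CRu{Z}{\ell}{r}{\pi}$, so $t = s[r\gamma]_p$ and $\psi = \varphi \land \pi\gamma$. By \Cref{lem:bound variables of constraint of reducts}(3) the target unfolds as $\rmv(\CTerm{Y}{t}{\vec{y}}{\psi}) = \CTerm{Y \cup \SET{\vec{y}}}{t}{}{\psi}$ with $Y \cup \SET{\vec{y}} = \ExVar(\rho) \cup (X \cap \Var(t)) \cup \Var(\psi)$; the reduct is satisfiable, so $\rmv$ is defined on it. I would fix a renaming $\delta$ sending the genuine rule-side variables $(\Var(r)\cup\Var(\pi))\setminus\Var(\ell) = \ExVar(\rho)\cup\SET{\vec{z}}$ (here $\SET{\vec{z}}=\Var(\pi)\setminus\Var(\ell)$) to fresh variables disjoint from everything, and set $\chi$ to be the conjunction of trivial equalities $w = w$ over $w \in \ExVar(\rho)\delta \setminus \Var(\pi\gamma\delta)$.

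For the pre-$\sim$ step I would apply \Cref{lem:equivalence of non-quantified terms with extra constraint} with extra constraint $\pi\gamma\delta \land \chi$, obtaining $\CTerm{X_0}{s}{}{\varphi} \sim \CTerm{X_A}{s}{}{\varphi_A}$ where $\varphi_A = \varphi \land \pi\gamma\delta \land \chi$ and $X_A = X_0 \cup \Var(\pi\gamma\delta) \cup \ExVar(\rho)\delta$. Its hypothesis $\vDash_\xM \varphi \Rightarrow \ECO{\vec{z}\delta}{\pi\gamma\delta}$ follows from the fourth $\rho$-redex condition $\vDash_\xM (\ECO{\vec{x}}{\varphi}) \Rightarrow (\ECO{\vec{z}}{\pi\gamma})$ by weakening $\varphi$ to $\ECO{\vec{x}}{\varphi}$ (which is always valid) and renaming by $\delta$; the side condition $\Var(s)\cap(\SET{\vec{z}}\delta\cup\ExVar(\rho)\delta)=\varnothing$ holds by freshness. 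The purpose of the tautological $\chi$ is precisely to force those extra variables of $r$ that do \emph{not} occur in $\pi$ — which are nonetheless logical variables of the reduct — into the logical-variable set $X_A$.

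Next I would take the plain rewrite step with the substitution $\sigma$ that agrees with $\gamma$ on $\Var(\ell)$ and with $\delta$ on $\ExVar(\rho)\cup\SET{\vec{z}}$. Since $\gamma$ is identity on the latter variables, $r\sigma = r\gamma\delta$ and $\pi\sigma = \pi\gamma\delta$, hence the reduct term is $t\delta$ and its constraint stays $\varphi_A$. One checks $\ell\sigma = \ell\gamma = s|_p$; that $\sigma(Z)\subseteq\Val\cup X_A$, because the images of $\Var(\ell)\cap Z$ lie in $\Val\cup X$ by redex condition 3 while the remaining variables of $Z$ land in $\Var(\pi\gamma\delta)\cup\ExVar(\rho)\delta\subseteq X_A$; and that $\vDash_\xM \varphi_A \Rightarrow \pi\sigma$ holds since $\pi\sigma=\pi\gamma\delta$ is a conjunct of $\varphi_A$. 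This yields $\CTerm{X_A}{s}{}{\varphi_A} \R_\rho \CTerm{Y_B}{t\delta}{}{\varphi_A}$ with $Y_B = X_A \cap \Var(t\delta,\varphi_A)$. Finally I would close with a post-$\sim$ step through \Cref{thm: characterization of equivalence using renaming} using $\delta^{-1}$: it simultaneously renames $t\delta$ back to $t$ and turns $\varphi_A$ into $\varphi\land\pi\gamma\land\chi\delta^{-1}$, which is logically equivalent to $\psi$ because $\chi\delta^{-1}$ is a tautology over variables already in the logical set; the remaining obligation is the bookkeeping identity $\delta^{-1}(Y_B) = (X\cap\Var(t))\cup\Var(\varphi)\cup\Var(\pi\gamma)\cup\ExVar(\rho) = Y\cup\SET{\vec{y}}$, which uses $\SET{\vec{x}}\cap\Var(t)=\varnothing$, $\Var(\varphi)\subseteq X_0$, and $\ExVar(\rho)\subseteq\Var(t)$.

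I expect the main obstacle to be reconciling the freshness precondition of non-quantified rewriting with the fact that $\pi\gamma$ already carries the rule's own variables $\SET{\vec{z}}$: inserting $\pi\gamma$ verbatim into the constraint would break $\Var(\rho)\cap\Var(\varphi)=\varnothing$ and block the $\R$ step, so the fresh renaming $\delta$ has to be threaded through all three phases and undone only at the very end. The second delicate point, and the reason a naive $\sigma = \gamma$ fails, is the treatment of extra variables of $r$ that are absent from $\pi$ (which genuinely occur, e.g.\ $y$ in $\CRu{\SET{x,y}}{\m{f}(x)}{\m{g}(y)}{x\ge\m{1}}$): these are logical variables of the reduct yet cannot be carried as rule variables, which is exactly what the padding $\chi$ and the final logical-set computation based on \Cref{lem:bound variables of constraint of reducts} are designed to absorb.
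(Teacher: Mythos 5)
Your construction is correct and follows essentially the same route as the paper's proof: the same explicit ${\sim}\cdot{\R}\cdot{\sim}$ decomposition, the same pre-equivalence via \Cref{lem:equivalence of non-quantified terms with extra constraint} with tautological padding to force the extra variables of $r$ into the logical-variable set, and the same final bookkeeping via \Cref{lem:bound variables of constraint of reducts}(3). The only substantive difference is which side gets renamed to meet the freshness precondition of the middle $\R$-step: the paper inserts $\pi\gamma$ verbatim into the constraint and therefore must rewrite with a fresh variant $\rho'$ of the rule (landing directly on $t$, which makes the closing $\sim$ trivial), whereas you rename the rule-side variables of the added constraint by $\delta$ and rewrite with $\rho$ itself, landing on $t\delta$ and undoing $\delta$ at the end; these are dual resolutions of the same obstacle and both work. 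One small repair: \Cref{thm: characterization of equivalence using renaming} cannot be cited for your closing step, since it is stated for existentially constrained terms and $\CTerm{Y \cup \SET{\vec{y}}}{t}{}{\psi}$ is not one (here $\SET{\vec{y}} \cap \Var(t) = \varnothing$, so the requirement that the logical variables occur in the term fails); the equivalence-under-renaming claim for non-quantified terms should instead be checked directly from the definition of $\sim$, which is routine and is the same kind of ``trivial to see'' step the paper itself leaves unargued.
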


\begin{proof}
Let $\rho\colon \CRu{Z}{\ell}{r}{\pi}$ be a left-linear constrained rewrite rule and 
suppose that $\CTerm{X}{s}{\vec{x}}{\varphi} \R^p_{\rho,\gamma} \CTerm{Y}{t}{\vec{y}}{\psi}$ 
such that $\Var(s,\varphi) \cap \Var(\rho) = \varnothing$.
Then we have~%
(1) $\Dom(\gamma) = \Var(\ell)$,~%
(2) $s|_p = \ell\gamma$,~%
(3) $\gamma(x) \in \Val \cup X$ for any $x \in \Var(\ell) \cap Z$, and~%
(4) $\vDash_\xM (\ECO{\vec{x}}{\varphi}) \Rightarrow (\ECO{\vec{z}}{\pi\gamma})$,
where $\SET{\vec{z}} = \Var(\pi) \setminus \Var(\ell)$,
$t = s[r\gamma]$,
$\psi = \varphi \land \pi\gamma$,
$\SET{\vec{y}} = \Var(\psi) \setminus \Var(t)$, and
$Y = \ExVar(\rho) \cup (X \cap \Var(t))$.

Note that we have 
$\rmv(\CTerm{X}{s}{\vec{x}}{\varphi}) = \CTerm{X \cup \SET{\vec{x}}}{s}{}{\varphi}$.
We begin by showing the following equivalence:
(5)
$\CTerm{X \cup \SET{\vec{x}}}{s}{}{\varphi}\\
\sim
\CTerm{X \cup \SET{\vec{x}} \cup \Var(\pi\gamma) \cup \ExVar(\rho)}
      {s}
      {}
      {\varphi \land \pi\gamma \land \bigwedge_{z \in \ExVar(\rho)} (z = z)}
$.
For this, we use \Cref{lem:equivalence of non-quantified terms with extra constraint}.
By~(4),  
we have that $\vDash_\xM \varphi \Rightarrow \ECO{\vec{z}}{\pi\gamma}$ holds.
Since we have $\vDash_\xM \bigwedge_{z \in \ExVar(\rho)} (z = z)$,
also $\vDash_\xM \varphi \Rightarrow \ECO{\vec{z}}{\pi\gamma \land \bigwedge_{z \in \ExVar(\rho)} (z = z)}$
holds.
Now, we have
\[
\begin{array}{@{}l@{\>}c@{\>}l@{}}
\lefteqn{\textstyle \Var(\pi\gamma \land \bigwedge_{z \in \ExVar(\rho)} (z = z))
\setminus (X \cup \SET{\vec{x}} \cup \Var(\varphi))}\\
&=& ((\Var(\pi) \setminus \Var(\ell))
\cup (\bigcup_{y \in \Var(\ell) \cap \Var(\pi)} \Var(\gamma(y)))\\
&& {}\cup \ExVar(\rho))\, \setminus \, (X \cup \SET{\vec{x}} \cup \Var(\varphi))\\
&=&
((\Var(\pi) \setminus \Var(\ell))
\cup \ExVar(\rho))
\setminus (X \cup \SET{\vec{x}} \cup \Var(\varphi))\\
&=& \Var(r,\pi) \setminus \Var(\ell).
\end{array}
\]
Take 
$\SET{\pvec{z}'} = \Var(r,\pi) \setminus  \Var(\ell) =
\Var(\pi\gamma \land \bigwedge_{z \in \ExVar(\rho)} (z = z))
\setminus (X \cup \SET{\vec{x}} \cup \Var(\varphi))$.
By $\Var(\rho) \cap \Var(s) = \varnothing$,
we have $\SET{\pvec{z}'} \cap \Var(s) = \varnothing$.
Furthermore, since $\SET{\pvec{z}'} \supseteq \ExVar(\rho) = \SET{\vec{z}}$,
we have
$\vDash_\xM \varphi \Rightarrow \ECO{\pvec{z}'}{\pi\gamma \land \bigwedge_{z \in \ExVar(\rho)} (z = z)}$.
Thus, by \Cref{lem:equivalence of non-quantified terms with extra constraint},
we conclude the equivalence~(5).

We proceed to show the rewrite step
$\CTerm{X \cup \SET{\vec{x}} \cup \Var(\pi\gamma) \cup \ExVar(\rho)}
      {s}
      {}
      {\varphi \land \pi\gamma \land \bigwedge_{z \in \ExVar(\rho)} (z = z)}
\R^p_{\rho'}
\CTerm{Y \cup \SET{\vec{y}} \cup \ExVar(\rho)}
      {t}
      {}
      {\varphi \land \pi\gamma \land \bigwedge_{z \in \ExVar(\rho)} (z = z)}
$
over 
non-quantified constrained terms, 
by a variant $\rho'\colon \CRu{Z'}{\ell'}{r'}{\pi'}$ of the constrained rewrite rule $\rho$
such that $\Var(\rho) \cap \Var(\rho') = \varnothing$.
Accordingly, we have a variable renaming $\xi$ such 
that $Z = \xi(Z')$, $\ell = \ell'\xi$, $r = r'\xi$, and $\pi = \pi'\xi$.
We also suppose that
$\Var(\rho') \cap
\Var(s,\varphi \land \pi\gamma \land \bigwedge_{z \in \ExVar(\rho)} (z = z)) =
\varnothing$.

Take $\delta = \gamma \circ \xi$.
Then, by~(1) and our assumption on $\xi$, 
we have $\Dom(\delta) = \Var(\ell',r',\pi')$. 
Also, $\ell'\delta = \ell'\xi\gamma = \ell\gamma = s|_p$ holds
by~(2) and the definition of $\delta$.

We now show $\delta(x) \in \Val \cup X \cup \SET{\vec{x}} \cup \Var(\pi\gamma) \cup \ExVar(\rho)$
for all $x \in Z'$.
To this end assume that $x \in Z'$.
If $x \in \Var(\ell')$,
then $\delta(x) = x\xi\gamma = \gamma(y)$ for some $y \in \Var(\ell) \cap Z$.
Hence, $\delta(x) \in \Val \cup X$ by~(3).
Thus, suppose $x \in Z' \setminus \Var(\ell')$.
Then, because of $Z' \subseteq \Var(\ell',r',\pi')$, 
we know $x \in \Var(r',\pi') \setminus \Var(\ell')$.
Thus, $\delta(x) = x\xi\gamma = \gamma(y)$ for some $y \in \Var(r,\pi) \setminus \Var(\ell)$.
Hence, if $y \in \Var(\pi) \setminus \Var(\ell)$ then $\delta(x) = \gamma(y) \in \Var(\pi\gamma)$,
and if $y \in \Var(r) \setminus \Var(\ell)$ then $\delta(x) = \gamma(y) = y \in \ExVar(\rho)$.

Furthermore, we have $\vDash_\xM 
(\varphi \land \pi\gamma \land \bigwedge_{z \in \ExVar(\rho)} (z = z))
\Rightarrow \pi'\delta$, because $\pi'\delta = \pi'\xi\gamma = \pi\gamma$.
We also have $t = s[r\gamma]_{p} = s[r'\xi\gamma]_p = s[r'\delta]_p$.
Thus, it remains to show that
$Y \cup \SET{\vec{y}} \cup \ExVar(\rho)
= 
(X \cup \SET{\vec{x}} \cup \Var(\pi\gamma) \cup \ExVar(\rho)) \cap 
\Var(t, \varphi \land \pi\gamma \land \bigwedge_{z \in \ExVar(\rho)} (z = z))$.
We denote the right-hand side of the equation by $rhs$, and obtain:
$(rhs) 
= ((X \cup \SET{\vec{x}} \cup \Var(\pi\gamma) \cup \ExVar(\rho)) \cap \Var(t))
\cup  ((X \cup \SET{\vec{x}} \cup \Var(\pi\gamma) \cup \ExVar(\rho)) \cap 
\Var(\varphi \land \pi\gamma \land \bigwedge_{z \in \ExVar(\rho)} (z = z)))
=  (X \cap \Var(t)) \cup (\Var(\pi\gamma) \cap \Var(t))
\cup \Var(\varphi \land \pi\gamma) \cup \ExVar(\rho)
= 
(X \cap \Var(t)) \cup 
(\Var(\varphi \land \pi\gamma \land \bigwedge_{z \in \ExVar(\rho)} (z = z)))
\cup \ExVar(\rho)
= 
Y \cup \SET{\vec{y}} \cup \ExVar(\rho) 
$
(by \Cref{lem:bound variables of constraint of reducts}).
Putting all of this together gives the claimed rewrite step.
To complete the proof, we need to show the equivalence 
$\CTerm{Y \cup \SET{\vec{y}} \cup \ExVar(\rho)}
      {t}
      {}
      {\varphi \land \pi\gamma \land \bigwedge_{z \in \ExVar(\rho)} (z = z)}
\sim  \CTerm{Y \cup \SET{\vec{y}}}{t}{\vec{y}}{\varphi \land \pi\gamma})
= \rmv(\CTerm{Y}{t}{\vec{y}}{\psi})
$
which should be trivial to see.
\end{proof}

\begin{example}
Let us consider the most general rewrite step 
$\CTerm{\SET{x}}{\m{f}(x)}{}{x>\m{0}}
\R_{\rho} \CTerm{\SET{y'}}{\m{g}(y')}{x}{x>\m{0} \land x \ge \m{0} \land y' > x}$
of \Cref{exa:rewriting constrained terms}
where
$\rho\colon \CRu{\SET{x',y'}}{\m{f}(x')}{\m{g}(y')}{x' \ge \m{0} \land y' > x'}$.
We obtain the rewrite step over the non-quantified constrained terms
$
\rmv(\CTerm{\SET{x}}{\m{f}(x)}{}{x>\m{0}}) 
\Rs
\rmv(\CTerm{\SET{y'}}{\m{g}(y')}{x}{x>\m{0} \land x \ge \m{0} \land y'> x})
$
as follows:
\[
\begin{array}{@{}l@{\>}c@{\>}l@{}}
\lefteqn{\rmv(\CTerm{\SET{x}}{\m{f}(x)}{}{x>\m{0}})} \\
&=& \CTerm{\SET{x}}{\m{f}(x)}{}{x>\m{0}}\\
&\sim& \CTerm{\SET{x,y'}}{\m{f}(x)}{}{x>\m{0} \land x \ge \m{0} \land y' > x}\\
&\to_{\rho}& \CTerm{\SET{x,y'}}{\m{g}(y')}{}{x>\m{0} \land x \ge \m{0} \land y' > x}\\
&=&  \rmv(\CTerm{\SET{y'}}{\m{g}(y')}{x}{x>\m{0} \land x \ge \m{0} \land y'> x})
\end{array}
\]
\end{example}

\section{Embedding Non-Quantified Constrained Rewriting into the Most General Form}
\label{sec:comparison with previous format}

In this section, we concern ourselves with
the problem of characterizing the opposite direction.
That is,  whether any rewrite step on non-quantified constrained terms 
results in a rewrite step on existentially constrained terms
that are obtained by the existential extension.
Naively this is not the case as depicted by the following example.

\begin{example}
\label{exa:embedding of non-quantified rewrite step to most general rewrite steps}
In \Cref{exp:rewrite step of non-quantified constrained terms},
we encoded the rewrite steps of \Cref{exa:intro I}
by rewriting of non-quantified constrained terms with
the constrained rewrite rule
$\rho\colon \CRu{\SET{x',y'}}{\m{f}(x')}{\m{g}(y')}{x' \ge \m{1} \land x' + \m{1} \ge y'}$.
We had 
$\CTerm{\SET{x}}{\m{f}(x)}{}{x > \m{2}} \R_{\rho}  \CTerm{\SET{x}}{\m{g}(\m{3})}{}{x > \m{2}}$
and
$\CTerm{\SET{x}}{\m{f}(x)}{}{x > \m{2}} \R_{\rho} \CTerm{\SET{x}}{\m{g}(x)}{}{x > \m{2}}$.
Applying the following most general rewrite steps to
$\ext(\CTerm{\SET{x}}{\m{f}(x)}{}{x > \m{2}}) = \CTerm{\SET{x}}{\m{f}(x)}{}{x > \m{2}})$
yields
$\CTerm{\SET{x}}{\m{f}(x)}{}{x > \m{2}}
\to_\rho
\CTerm{\SET{x,y'}}{\m{g}(y')}{}{x > \m{2} \land x \ge 1 \land x + \m{1} \ge y'}$.
Unfortunately, neither
$\ext(\CTerm{\SET{x}}{\m{g}(\m{3})}{}{x > \m{2}}) = \CTerm{\SET{x}}{\m{g}(\m{3})}{}{x > \m{2}})$
nor 
$\ext(\CTerm{\SET{x}}{\m{f}(x)}{}{x > \m{2}}) = \CTerm{\SET{x}}{\m{f}(x)}{}{x > \m{2}})$
is obtained.
However, note that
$\CTerm{\SET{x,y'}}{\m{g}(y')}{}{x > \m{2} \land x \ge \m{1} \land x + \m{1} \ge y'}
\supsetsim \CTerm{\SET{x}}{\m{g}(\m{3})}{}{x > \m{2}}$,
as well as
$\CTerm{\SET{x,y'}}{\m{g}(y')}{}{x > \m{2} \land x \ge \m{1} \land x + \m{1} \ge y'}
\supsetsim \CTerm{\SET{x}}{\m{g}(x)}{}{x > \m{2}}$ holds.
\end{example}

In fact, as the example above demonstrates,
we can give the following characterization of
non-quantified constrained rewriting
in our formalism of most general rewriting.
Namely most general rewrite steps
subsume any reduct of non-quantified rewrite steps---
this relation serves as
the main motivation for us to call our new definition \emph{most general rewrite steps}.

\begin{restatable}[Simulation of Rewrite Steps by Existential Extension]{theorem}{TheoremIVxxxii}
Let $\rho$ be a left-linear constrained rewrite rule.
If we have $\CTerm{X}{s}{}{\varphi} \R_{\rho} \CTerm{Y}{t}{}{\varphi}$,
then $\ext(\CTerm{X}{s}{}{\varphi}) \R \cdot \supsetsim \ext(\CTerm{Y}{t}{}{\varphi})$.
\end{restatable}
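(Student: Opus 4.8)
The plan is to take the given non-quantified rewrite step $\CTerm{X}{s}{}{\varphi} \R^p_{\rho,\sigma} \CTerm{Y}{t}{}{\varphi}$ and construct both a most general rewrite step on $\ext(\CTerm{X}{s}{}{\varphi})$ and an explicit subsumption witness relating its reduct to $\ext(\CTerm{Y}{t}{}{\varphi})$. First I would unfold the definitions: from the non-quantified step we have a position $p \in \FPos(s)$ and a substitution $\sigma$ with $\Dom(\sigma) = \Var(\ell,r,\pi)$, $\ell\sigma = s|_p$, $\sigma(x) \in \Val \cup X$ for all $x \in Z$, $\vDash_\xM \varphi \Rightarrow \pi\sigma$, and $t = s[r\sigma]_p$, $Y = X \cap \Var(t,\varphi)$. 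Writing $\ext(\CTerm{X}{s}{}{\varphi}) = \CTerm{X'}{s}{\vec{x}}{\varphi}$ with $\SET{\vec{x}} = \Var(\varphi)\setminus\Var(s)$ and $X' = X\setminus\SET{\vec{x}}$, the aim is to produce a $\rho$-redex at $p$ for this existentially constrained term.

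The key difficulty is that the given $\sigma$ instantiates extra variables of $\rho$ by concrete values or $X$-variables (as in $y'\mapsto\m{3}$), whereas a most general step keeps those extra variables fresh and symbolic. So I would \emph{not} reuse $\sigma$ directly; instead, using a fresh variant $\rho$ already satisfying $\Var(\rho)\cap\Var(s,\varphi)=\varnothing$, I take $\gamma = \sigma|_{\Var(\ell)}$ as the matching substitution. Then $\Dom(\gamma)=\Var(\ell)$ and $\ell\gamma = \ell\sigma = s|_p$ hold. For condition \Bfnum{3.} of the $\rho$-redex, for $x\in\Var(\ell)\cap Z$ I have $\gamma(x)=\sigma(x)\in\Val\cup X$, and I must argue these land in $\Val\cup X'$; values are fine, and any $X$-variable occurring in $s|_p\subseteq s$ cannot be among the bound $\SET{\vec{x}}$ (which are disjoint from $\Var(s)$), so it lies in $X'$. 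The crucial step is verifying the implication \Bfnum{4.}: $\vDash_\xM (\ECO{\vec{x}}{\varphi}) \Rightarrow (\ECO{\vec{z}}{\pi\gamma})$ where $\SET{\vec{z}}=\Var(\pi)\setminus\Var(\ell)$. This follows because $\vDash_\xM\varphi\Rightarrow\pi\sigma$ and $\pi\sigma$ factors as $\pi\gamma$ with the remaining variables $\vec{z}$ instantiated by $\sigma$ to values or $X$-terms, witnessing the existential. This yields a most general step $\ext(\CTerm{X}{s}{}{\varphi}) \R^p_{\rho,\gamma} \CTerm{Y_0}{t_0}{\vec{y_0}}{\psi_0}$ with $t_0=s[r\gamma]_p$ and $\psi_0=\varphi\land\pi\gamma$.

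It then remains to show $\CTerm{Y_0}{t_0}{\vec{y_0}}{\psi_0} \supsetsim \ext(\CTerm{Y}{t}{}{\varphi})$. Here I would produce, for every substitution respecting the existential constraint of the right-hand reduct $\ext(\CTerm{Y}{t}{}{\varphi})$, a corresponding respecting substitution for $\CTerm{Y_0}{t_0}{\vec{y_0}}{\psi_0}$ giving equal instances of the terms. The witnessing substitution is built by composing with the part of $\sigma$ acting on the extra variables $\SET{\vec{z}}$ (and on $\ExVar(\rho)$): since $t=s[r\sigma]_p = (s[r\gamma]_p)\sigma' = t_0\sigma'$ where $\sigma'$ resolves the fresh symbolic extra variables to whatever $\sigma$ chose, and since $\sigma$ respects $\varphi\Rightarrow\pi\sigma$, the instance constraint $\psi_0$ is satisfiable along the same valuation. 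The \emph{main obstacle} will be bookkeeping the variable sets carefully: tracking how $\Val(t)$ and the $X$-versus-bound partition interact when $\sigma$ has collapsed an extra variable to a value or reused an existing variable, and confirming that the subsumption direction is $\supsetsim$ rather than $\sim$ exactly because the most general reduct can be strictly more general (its fresh $y'$ ranges over all satisfying values while the concrete reduct fixed one). I expect the redex conditions to be routine once the fresh-variant setup is fixed, and would concentrate the technical effort on the final subsumption, appealing to the definition of $\supsetsim$ directly rather than to any of the equivalence characterizations.
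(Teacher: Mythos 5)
Your proposal follows essentially the same route as the paper's proof: restrict the given substitution to $\Var(\ell)$ to obtain the matching substitution for the most general step, verify the redex conditions (with the existential in condition 4 witnessed by the given substitution's choices for the extra variables), and then establish $\supsetsim$ directly from its definition by extending any substitution respecting the concrete reduct's constraint with the given substitution's action on $\Var(r,\pi)\setminus\Var(\ell)$. This matches the paper's own construction (there written $\sigma = \gamma|_{\Var(\ell)}$ and $\theta' = \theta \cup \SET{z \mapsto z\gamma\theta \mid z \in \Var(r,\pi)\setminus\Var(\ell)}$), so the proposal is correct and takes essentially the same approach.
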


\begin{proof}
Assume $\CTerm{X}{s}{}{\varphi} \R^p_{\rho,\gamma} \CTerm{Y}{t}{}{\varphi}$
where $\rho\colon\CRu{Z}{\ell}{r}{\pi}$ is a left-linear constrained rewrite rule
with $\Var(\rho) \cap (X \cup \Var(s,\varphi)) = \varnothing$.
Note that this is a rewrite step for non-quantified constrained terms.
Thus,  we have~%
(1) $\Dom(\gamma) = \Var(\ell,r,\pi)$,~%
(2) $s|_p = \ell\gamma$,~%
(3) $\gamma(x) \in \Val \cup X$ for all $x \in Z$, and~%
(4) $\vDash_\xM \varphi \Rightarrow \pi\gamma$.
Furthermore, $t = s[r\gamma]_p$ and $Y = X \cap \Var(t, \varphi)$.
W.l.o.g.\ we assume $\Var(\rho) \cap (Y \cup \Var(t,\varphi)) = \varnothing$.
Let $\ext(\CTerm{X}{s}{}{\varphi}) = \CTerm{X\setminus \SET{\vec{x}}}{s}{\vec{x}}{\varphi}$
with $\SET{\vec{x}} = \Var(\varphi) \setminus \Var(s)$
and 
$\ext(\CTerm{Y}{t}{}{\psi}) = \CTerm{Y \setminus \SET{\vec{y}}}{t}{\vec{y}}{\psi}$,
with $\vec{y} = \Var(\psi) \setminus \Var(t)$.    

We first show that
$\CTerm{X\setminus \SET{\vec{x}}}{s}{\vec{x}}{\varphi}$ has a $\rho$-redex
at position $p$ using a substitution $\sigma = \gamma|_{\Var(\ell)}$.
For this, we need to show~%
(5) $\Dom(\sigma) = \Var(\ell)$,~%
(6) $s|_p = \ell\sigma$,~%
(7) $\sigma(x) \in \Val \cup (X \setminus \SET{\vec{x}})$ for any $x \in \Var(\ell) \cap Z$, and~%
(8) $\vDash_\xM (\ECO{\vec{x}}{\varphi}) \Rightarrow (\ECO{\vec{z}}{\pi\sigma})$,
where $\SET{\vec{z}} = \Var(\pi) \setminus \Var(\ell)$.
(5) is obvious by definition of $\sigma$.
Similarly,~(6) follows as $s|_p = \ell\gamma = \ell\sigma$ by~(2).
We have $\sigma(x) \in \Val \cup X$ for any $x \in \Var(\ell) \cap Z$,
because $\sigma(x) = \gamma(x)$ for all $x \in \Var(\ell)$ and $\gamma(x) \in \Val \cup X$ by~(3);
moreover, if $\sigma(x) \in X$, 
then, as $\sigma(x)  \in \Var(\ell\sigma) \subseteq \Var(s)$ by $x \in \Var(\ell)$,
we have $\sigma(x) \in X \cap \Var(s)$, 
which implies that $\sigma(x) \notin \SET{\vec{x}} = \Var(\varphi) \setminus \Var(s)$.
Hence~(7) $\sigma(x)  \in \Val \cup (X \setminus \SET{\vec{x}})$ for any $x \in \Var(\ell) \cap Z$.

It remains to show~%
(8) $\vDash_\xM (\ECO{\vec{x}}{\varphi}) \Rightarrow (\ECO{\vec{z}}{\pi\sigma})$.
First of all, note that by definition of $\sigma$ and $\Var(\rho) \cap \Var(s) = \varnothing$, 
we have $\gamma = \gamma' \circ \sigma$ where
$\gamma' = \SET{ z \mapsto \gamma(z) \mid z \in \Var(r,\pi) \setminus \Var(\ell) }$.
Suppose $\vDash_{\xM,\xi} \ECO{\vec{x}}{\varphi}$ for a valuation $\xi$.
Then, for some $\kappa = \SET{ \vec{x} \mapsto \vec{v} }$ with $\vec{v} \in \Val^*$
we have $\vDash_{\xM,\xi \circ \kappa} \varphi$.
Hence, from~(4) 
it follows $\vDash_{\xM,\xi \circ \kappa} \pi\gamma$.
Then, by $\gamma = \gamma' \circ \sigma$,
we have $\vDash_{\xM,\xi \circ \kappa} \ECO{\vec{z}}{\pi\sigma}$
where $\SET{\vec{z}} = \Var(\pi) \setminus \Var(\ell)$.
Also, $\FVar(\ECO{\vec{z}}{\pi\sigma}) \subseteq \Var(\ell\gamma) \subseteq \Var(s)$.
Since $\Dom(\kappa) = \SET{\vec{x}} = \Var(\varphi) \setminus \Var(s)$,
we know $\Dom(\kappa) \cap \FVar(\ECO{\vec{z}}{\pi\sigma}) = \varnothing$,
and thus $(\ECO{\vec{z}}{\pi\sigma})\kappa = \ECO{\vec{z}}{\pi\sigma}$.
Hence, $\vDash_{\xM,\xi} \ECO{\vec{z}}{\pi\sigma}$ follows
and therefore we have shown that~%
(8) $\vDash_\xM (\ECO{\vec{x}}{\varphi}) \Rightarrow (\ECO{\vec{z}}{\pi\sigma})$
holds.

We conclude that $\CTerm{X\setminus \SET{\vec{x}}}{s}{\vec{x}}{\varphi}$ has a $\rho$-redex
at position $p$ using the substitution $\sigma$.
Consequently, we obtain the rewrite step 
$\CTerm{X\setminus \SET{\vec{x}}}{s}{\vec{x}}{\varphi}
\R^p_{\rho,\sigma} \CTerm{Y'}{t'}{\pvec{y}'}{\psi'}$
where
$t' = s[r\sigma]_p$,
$\psi' = \varphi \land \pi\sigma$,
$\SET{\pvec{y}'} = \Var(\psi') \setminus \Var(t')$, and
$Y' = \ExVar(\rho) \cup ((X \setminus \SET{\vec{x}}) \cap \Var(t'))$.

It remains to show
$\CTerm{Y'}{t'}{\pvec{y}'}{\psi'}
\supsetsim \CTerm{Y\setminus \SET{\vec{y}}}{t}{\vec{y}}{\psi}$.

Prior to that, let us show 
$\SET{\vec{y}} \subseteq  \SET{\pvec{y}'}$.
Let $w \in \SET{\vec{y}} = \Var(\varphi) \setminus \Var(t)$.
Since $\SET{\pvec{y}'} = \Var(\varphi \land \pi \gamma ) \setminus \Var(t')$,
it suffices to show that for any $w \in \Var(\varphi)$,
it holds that $w \in \Var(t')$ implies $w \in \Var(t)$.
Since $t' = s[r\sigma]$ and $t = s[r\gamma]$,
the case $w \in \Var(s[~])$ is trivial.
Thus, suppose $w \in \Var(r\sigma)$.
Then either $w \in \ExVar(\rho)$ or
$w \in \Var(x\sigma)$ for some $x \in \Var(\ell) \cap \Var(r)$.
The former case contradicts with $w \in \Var(\varphi)$ because $\Var(\varphi) \cap \Var(\rho) = \varnothing$.
In the latter case, $w \in \Var(r\gamma)$ follows,
as $\sigma(x) = \gamma(x)$ for $x \in \Var(\ell)$.

We proceed to show that 
$\CTerm{Y'}{t'}{\pvec{y}'}{\psi'}
\supsetsim \CTerm{Y\setminus \SET{\vec{y}}}{t}{\vec{y}}{\psi}$.
Let $\theta$ be a 
$(Y \setminus \SET{{\vec{y}}})$-valued substitution
such that $\theta \vDash_{\xM} \ECO{\vec{y}}{\psi}$,
i.e.\ 
$((\Var(\psi) \cup Y) \setminus \SET{{\vec{y}}}) \subseteq \VDom(\theta)$
and $\vDash_{\xM} (\ECO{\vec{y}}{\psi})\theta$.
We will prove there exists a substitution $\theta'$
such that it is $Y'$-valued and $\theta' \vDash \ECO{\pvec{y}'}{\psi'}$, i.e.\ 
$(Y' \cup (\Var(\psi') \setminus \SET{\pvec{y}'})) \subseteq \VDom(\theta')$
and $\vDash_\xM (\ECO{\pvec{y}'}{\psi'})\theta'$,
which satisfies $t\theta  = t'\theta'$.
For this purpose, we assume w.l.o.g.\ 
\begin{itemize}
    \item $\Dom(\theta) \cap \Var(\rho) = \varnothing$,
    \item $\Dom(\theta) \cap \SET{\vec{y}} = \varnothing$, and
    \item $\Var(\theta(z)) \cap \SET{\vec{y}} = \varnothing$ 
    for any $z \in \FVar(\ECO{\vec{y}}{\psi})$.
\end{itemize}
Note here also $\SET{\vec{y}} \cap \Var(t) = \varnothing$, as 
$\SET{\vec{y}} = \Var(\varphi) \setminus \Var(t)$.
Let us define $\theta' = \theta \cup \SET{ z \mapsto z\gamma\theta 
\mid z \in \Var(r,\pi) \setminus \Var(\ell) }$.
As a first step, we show that $\theta(\gamma(x)) = \theta'(\sigma(x))$ for
any $x \in \Var(\rho)$.
Suppose that $x \in \Var(\ell)$.
Then $\gamma(x) = \sigma(x)$ and since $x\gamma$ is a subterm of $\ell\gamma$,
we have that $\Var(\gamma(x)) \cap \Var(\rho) = \varnothing$.
In particular, $\Var(\gamma(x)) \cap (\Var(r,\pi) \setminus \Var(\ell)) = \varnothing$,
thus, by definition of $\theta'$, we have $\theta'(\gamma(x)) = \theta(\gamma(x))$.
Hence $\theta(\gamma(x)) = \theta'(\gamma(x)) = \theta'(\sigma(x))$.
Otherwise, suppose that $x \in (\Var(r,\pi) \setminus \Var(\ell))$.
Then, by definition of $\theta'$, we have $\theta'(x) = \theta(\gamma(x))$.
Since $\Dom(\sigma) = \Var(\ell)$, we have $\sigma(x) = x$ and
thus $\theta(\gamma(x)) = \theta'(x) = \theta'(\sigma(x))$.

We conclude that $\theta(\gamma(x)) = \theta'(\sigma(x))$ for any $x \in \Var(\rho)$.
From this, it follows that $\pi\gamma\theta = \pi\sigma\theta'$
and $r\gamma\theta = r\sigma\theta'$.
We further have $\theta'(x) = \theta(x)$ for any $x \in \Var(s)$
by definition of $\theta'$ as $\Var(s) \cap \Var(\rho) = \varnothing$.
Therefore, we have
$t\theta = s[r\gamma]\theta = s\theta[r\gamma\theta]
= s\theta'[r\sigma\theta'] = s[r\sigma]\theta' =  t'\theta'$.

Let us claim that $\theta'(Y') \subseteq \Val$.
Assume $y \in Y'$.
We are now going to show that $\theta'(y) \in \Val$ by distinguishing two cases.
\begin{itemize}
\item 
Assume $y \in \ExVar(\rho)$.
Then, $y \in \Var(r) \setminus \Var(\ell)$ and hence
$\theta'(y) = \theta(\gamma(y))$.
Since $\Var(r,\pi) \setminus \Var(\ell) \subseteq Z$ by the definition of $\rho$,
we have that $\gamma(y) \in \Val \cup X$ by our condition~(3).
In case $\gamma(y) \in \Val$,
clearly, $\theta'(y) = \theta(\gamma(y)) = \gamma(y) \in \Val$.
Otherwise, we have $\gamma(y) \in X$.
As $y \in \ExVar(\rho)$ and $\Dom(\gamma) = \Var(\ell,r,\pi)$,
$\gamma(y)$ appears in $t = s[r\gamma]$.
Thus, we have $\gamma(y) \in \Var(t)$.
Moreover, by $\SET{\vec{y}} = \Var(\varphi) \setminus \Var(t)$, 
we know that $\gamma(y) \notin \SET{\vec{y}}$.
Thus, $\gamma(y) \in 
(X \cap \Var(t,\varphi)) \setminus \SET{\vec{y}}
= Y \setminus \SET{\vec{y}}$.
As $\theta$ is $(Y \setminus \SET{\vec{y}})$-valued,
we conclude $\theta'(y) = \theta(\gamma(y)) \in \Val$.

\item 
Assume $y \notin \ExVar(\rho)$.
As $Y' = \ExVar(\rho) \cup ((X \setminus \SET{\vec{x}}) \cap \Var(t'))$,
we have $y \in ((X \setminus \SET{\vec{x}}) \cap \Var(t'))$.
If $y \notin \Var(t) = \Var(s[r\gamma]_p)$,
then, as $y \in \Var(t') = \Var(s[r\sigma]_p)$ and $\sigma = \gamma|_{\Var(\ell)}$,
we have $y \in \ExVar(\rho)$.
This contradicts our assumption and
therefore $y \in \Var(t)$.
Then, since $y \in X$, we have $y \in (X \cap \Var(t,\varphi)) = Y$.
Also, by $\SET{\vec{y}} = \Var(\varphi) \setminus \Var(t)$,
we know that $y \notin \SET{\vec{y}}$.
Thus, it follows that $y \in Y \setminus \SET{\vec{y}}$,
and moreover, that $\theta(y) \in \Val$ as $\theta$ is $(Y \setminus \SET{\vec{y}})$-valued.
As $y \notin \ExVar(\rho)$, we have $\theta'(y) = \theta(y)$
and we obtain $\theta'(y) \in \Val$.
\end{itemize}

It remains to show that 
$\theta' \vDash_{\xM} \ECO{\pvec{y}'}{\psi'}$.
First, 
$\FVar(\ECO{\pvec{y}'}{\psi'}) \subseteq \VDom(\theta')$
follows from $\FVar(\ECO{\pvec{y}'}{\psi'}) \subseteq Y'$
and the fact that $\theta'$ is $Y'$-valued as shown above.
We are now going to show $\vDash_{\xM} (\ECO{\pvec{y}'}{\psi'})\theta'$.
From $\vDash_{\xM} (\ECO{\vec{y}}{\varphi})\theta$,
there exists a valuation $\eta = \SET{ \vec{y} \mapsto \vec{v} }$ with $\vec{v} \in \Val^*$
such that $\vDash_{\xM} \varphi\eta\theta$.
Using the second and third assumption on $\theta$ which we have 
assumed above w.l.o.g.,
it follows that $\vDash_{\xM} \varphi\theta\eta$.
Thus, from~(4), we have that $\vDash_\xM \pi\gamma\theta\eta$,
and hence $\vDash_\xM \pi\sigma\theta'\eta$.
Again, using the second and third assumptions on $\theta$,
it follows that $\vDash_\xM \pi\sigma\eta\theta'$.
Also, from $\Var(\rho) \cap \Var(\varphi) = \varnothing$,
we have that $\varphi\theta = \varphi\theta'$,
and using our assumptions on $\theta$ 
it also follows that $\varphi\eta\theta = \varphi\eta\theta'$.
Therefore, $\vDash_\xM \varphi\eta\theta'$ holds and we conclude that
$\vDash_\xM (\varphi \land \pi\sigma)\eta\theta'$.
We obtain that $\vDash_\xM \ECO{\vec{y}}(\varphi \land \pi\sigma)\theta'$,
and $\SET{\vec{y}} \subseteq  \SET{\pvec{y}'}$ implies that
$\vDash_\xM \ECO{\pvec{y}'}(\varphi \land \pi\sigma)\theta'$.
We ultimately have $\theta' \vDash_\xM \ECO{\pvec{y}'}{\psi'}$ which concludes 
the proof.
\end{proof}

\section{Uniqueness of Reducts and Partial Commutation of Rewriting and Equivalence}
\label{sec:commutativity of rewriting and equivalence}

The aim of this section is to show 
two useful properties of most general rewriting.
First, we show that 
applying the same constrained rewrite rule at the same position yields a unique 
term.
Second, we show that our new notion of rewriting
commutes with the equivalence transformation
for pattern-general constrained terms.

\begin{restatable}[Uniqueness of Reducts]{theorem}{TheoremIVx}
Let $\CTerm{X}{s}{\vec{x}}{\varphi}$ be 
an existentially constrained term
and $p \in \Pos(s)$.
Suppose that $\rho,\rho'$ are renamed variants of the same left-linear constrained rule
satisfying $\Var(\rho) \cap \Var(s,\varphi) = \varnothing$
and $\Var(\rho') \cap \Var(s,\varphi) = \varnothing$.
If $\CTerm{X}{s}{\vec{x}}{\varphi} \R^p_{\rho} \CTerm{Y}{t}{\vec{y}}{\psi}$ and
$\CTerm{X}{s}{\vec{x}}{\varphi} \R^p_{\rho'} \CTerm{Y'}{t'}{\pvec{y}'}{\psi'}$
then $\CTerm{Y}{t}{\vec{y}}{\psi} \sim \CTerm{Y'}{t'}{\pvec{y}'}{\psi'}$.
\end{restatable}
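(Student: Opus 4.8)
The plan is to exhibit a single renaming witnessing the equivalence of the two reducts and to invoke \Cref{thm: characterization of equivalence using renaming}. Since $\rho$ and $\rho'$ are renamed variants of one base rule, I first fix a renaming $\delta$ with $\rho' = \rho\delta$, so that for $\rho\colon \CRu{Z}{\ell}{r}{\pi}$ and $\rho'\colon \CRu{Z'}{\ell'}{r'}{\pi'}$ we have $\ell' = \ell\delta$, $r' = r\delta$, $\pi' = \pi\delta$, and $Z' = \delta(Z)$. Let $\gamma,\gamma'$ be the substitutions of the two redexes at $p$. Both satisfy $s|_p = \ell\gamma = \ell'\gamma' = \ell\delta\gamma'$, and since $\rho$ is left-linear, matching against the linear term $\ell$ is unique; hence $\gamma(x) = \gamma'(\delta(x))$ for every $x \in \Var(\ell)$. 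This identity, together with $\Dom(\gamma) = \Var(\ell)$ and $\Dom(\gamma') = \Var(\ell')$, is the engine that drives the whole argument.

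Next I would define the witnessing renaming $\delta'$ to agree with $\delta$ on the surviving rule variables $W = (\Var(r) \cup \Var(\pi)) \setminus \Var(\ell)$ and with the identity on $\Var(s,\varphi)$. These two domain pieces are disjoint, as are their images (because $\Var(\rho),\Var(\rho')$ avoid $\Var(s,\varphi)$), so the partial assignment is injective and therefore extends to a renaming of $\xV$. Using the matching identity and a case split on variable occurrences---extra variables of $r$ versus matched variables of $\ell$, and the bound variables $\vec{z} = \Var(\pi) \setminus \Var(\ell)$ versus matched constraint variables---I would verify $r\gamma\delta' = r\delta\gamma' = r'\gamma'$ and likewise $\pi\gamma\delta' = \pi'\gamma'$. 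Since $\delta'$ fixes the context $s[~]_p$ and $\varphi$, this yields $t\delta' = s[r'\gamma']_p = t'$ and $\psi\delta' = \varphi \land \pi'\gamma' = \psi'$.

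With $t\delta' = t'$ and $\psi\delta' = \psi'$ in hand, the remaining bookkeeping is routine. Because $\delta'$ is a renaming it maps $\Var(t)$ bijectively onto $\Var(t')$ and $\Var(\psi)$ onto $\Var(\psi')$, so it carries $\SET{\vec{y}} = \Var(\psi) \setminus \Var(t)$ onto $\SET{\pvec{y}'} = \Var(\psi') \setminus \Var(t')$; consequently $(\ECO{\vec{y}}{\psi})\delta' = \ECO{\pvec{y}'}{\psi'}$ and the biconditional required by \Cref{thm: characterization of equivalence using renaming} holds trivially. For $\delta'(Y) = Y'$ I would note $\delta'(\ExVar(\rho)) = \ExVar(\rho')$ (since $\delta$ renames extra variables bijectively) and that $\delta'$ is the identity on $X \subseteq \Var(s)$, whence $X \cap \Var(t) = X \cap \Var(t')$ follows from $t\delta' = t'$ and injectivity. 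Both reducts are satisfiable by the well-definedness theorem, so \Cref{thm: characterization of equivalence using renaming} applies and gives $\CTerm{Y}{t}{\vec{y}}{\psi} \sim \CTerm{Y'}{t'}{\pvec{y}'}{\psi'}$.

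The main obstacle is the very first step: extracting $\gamma(x) = \gamma'(\delta(x))$ from the shared redex. This is precisely where left-linearity is indispensable---if $\ell$ carried a repeated variable, the matching substitution need not be uniquely determined across the two variants, and the transporting renaming $\delta'$ could fail to exist. Everything after that point is careful but mechanical variable tracking of the same flavor already carried out in the well-definedness proof.
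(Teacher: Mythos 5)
Your proposal is correct and follows essentially the same route as the paper's own proof: both fix the renaming between the two rule variants, derive the matcher-transport identity ($\gamma(x) = \gamma'(\delta(x))$ on $\Var(\ell)$ in your notation, $\sigma(\gamma(x)) = \gamma'(\sigma(x))$ on all of $\Var(\rho)$ in the paper's), push it through to $t\delta' = t'$ and $\psi\delta' = \psi'$, perform the same bookkeeping for $\SET{\vec{y}}$ and $Y$, and close with \Cref{thm: characterization of equivalence using renaming}. One small inaccuracy in your closing remark, which does not affect the proof: left-linearity is not what makes the matcher unique---a matching substitution with domain $\Var(\ell)$ is unique whenever it exists, linear or not---rather, left-linearity is assumed because the rewrite relation itself is only defined for left-linear rules.
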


\begin{proof}
Suppose that
$\CTerm{X}{s}{\vec{x}}{\varphi} \R^p_{\rho,\gamma} \CTerm{Y}{t}{\vec{y}}{\psi}$
and 
$\CTerm{X}{s}{\vec{x}}{\varphi} \R^p_{\rho',\gamma'} \CTerm{Y'}{t'}{\pvec{y}'}{\psi'}$,
where
$\rho\colon \CRu{Z}{\ell}{r}{\pi}$ and $\rho'\colon \CRu{Z'}{\ell'}{r'}{\pi'}$.
Then we have 
(1) $\Dom(\gamma) = \Var(\ell)$,
(2) $s|_p = \ell\gamma$,
(3) $\gamma(x) \in \Val \cup X$ for any $x \in \Var(\ell) \cap Z$, and
(4) $\vDash_\xM (\ECO{\vec{x}}{\varphi}) \Rightarrow (\ECO{\vec{z}}{\pi\gamma})$,
where $\SET{\vec{z}} = \Var(\pi) \setminus \Var(\ell)$,
$t = s[r\gamma]$,
$\psi = \varphi \land \pi\gamma$,
$\SET{\vec{y}} = \Var(\psi) \setminus \Var(t)$, and
$Y = \ExVar(\rho) \cup (X \cap \Var(t))$.
Similarly, we have 
(1') $\Dom(\gamma') = \Var(\ell')$,
(2') $s|_p = \ell'\gamma'$,
(3') $\gamma'(x) \in \Val \cup X$ for any $x \in \Var(\ell') \cap Z'$, and
(4') $\vDash_\xM (\ECO{\vec{x}}{\varphi}) \Rightarrow (\ECO{\pvec{z}'}{\pi'\gamma'})$,
where $\SET{\pvec{z}'} = \Var(\pi') \setminus \Var(\ell')$,
$t' = s[r'\gamma']$,
$\psi' = \varphi \land \pi'\gamma'$,
$\SET{\pvec{y}'} = \Var(\psi') \setminus \Var(t')$, and
$Y' = \ExVar(\rho') \cup (X \cap \Var(t'))$.

Since $\rho,\rho'$ are renamed variants of the same rule,
we have renaming $\sigma\colon \Var(\rho) \to \Var(\rho')$
such that
$\sigma(Z) = Z'$, 
$\sigma(\ell) = \ell'$, 
$\sigma(r) = r'$, and
$\sigma(\pi) = \pi'$.

For each $x \in \Var(r,\pi) \setminus \Var(\ell)$, 
we have by $x \notin \Var(\ell) = \Dom(\gamma)$ that $\gamma(x) = x$,
and by $\sigma(x) \notin \Var(\ell') = \Dom(\gamma')$ that $\gamma'(\sigma(x)) = \sigma(x)$.
Therefore, $\sigma(\gamma(x)) = \sigma(x) = \gamma'(\sigma(x))$ for all $x \in \Var(r,\pi) \setminus \Var(\ell)$.
Suppose $x \in \Var(\ell)$.
Then from $\gamma(x) \in \Var(s)$ and $\Var(s) \cap \Var(\rho) = \varnothing$,
we know $\gamma(x) \notin \Var(\rho)$, and hence $\gamma(x) \notin \Dom(\sigma)$.
Thus, $\sigma(\gamma(x))= \gamma(x)$.
Moreover, since $\ell\gamma = s|_p = \ell'\gamma' = \ell\sigma\gamma'$,
by fixing $x = \ell|_q$ for a position $q$, 
we obtain $\gamma(x) = s|_{pq} = \gamma'(\sigma(x))$.
Thus, $\sigma(\gamma(x))= \gamma(x) = \gamma'(\sigma(x))$ for all $x \in \Var(\ell)$.

Putting all of this together gives
$\sigma(\gamma(x))= \gamma'(\sigma(x))$ for all $x \in \Var(\rho)$.
Hence, we have $r\gamma\sigma = r\sigma\gamma' = r'\gamma'$
and $\pi\gamma\sigma = \pi\sigma\gamma' = \pi'\gamma'$.
Therefore, we have 
$t\sigma = s[r\gamma]_p\sigma = s[r\gamma\sigma]_p = s[r'\gamma']_p = t'$ and 
$\psi\sigma = (\varphi \land \pi\gamma)\sigma = \varphi \land \pi\gamma\sigma 
= \varphi \land \pi'\gamma' = \psi'$.
In particular, it also follows that $\sigma(\Var(t)) = \Var(t\sigma) = \Var(t')$
and $\sigma(\Var(\psi)) = \Var(\psi\sigma) = \Var(\psi')$.
Thus, as $\sigma$ is bijective on $\Var(\rho)$, we obtain
$\sigma(\SET{\vec{y}}) 
= \sigma(\Var(\psi) \setminus \Var(t))
= \sigma(\SET{ x \in \Var(\rho) \mid x \in \Var(\psi) \land x \notin \Var(t) }
= \SET{ \sigma(x) \in \Var(\rho) \mid x \in \Var(\psi) \land x \notin \Var(t) }
= \SET{ \sigma(x) \in \Var(\rho) \mid \sigma(x) \in \Var(\psi') \land \sigma(x) \notin \Var(t') }
= \Var(\psi') \setminus \Var(t') = \SET{\pvec{y}'}$.
From $X \cap \Var(\rho) = \varnothing$, we know that $X \cap \Dom(\sigma) = \varnothing$.
Thus, $\sigma(X) = X$. Therefore,
$\sigma(Y) 
= \sigma(\ExVar(\rho) \cup (X \cap \Var(t)))
= \sigma((\Var(r) \setminus \Var(\ell)) \cup (X \cap \Var(t)))
= (\Var(r\sigma) \setminus \Var(\ell\sigma)) \cup (X \cap \Var(t\sigma))
= \ExVar(\rho') \cup (X \cap \Var(t')) = Y'$.
Since $\sigma$ is a bijective renaming
by 
\Cref{thm: characterization of equivalence using renaming}
we have $\CTerm{Y}{t}{\vec{y}}{\psi} \sim \CTerm{Y'}{t'}{\pvec{y}'}{\psi'}$.
\end{proof}

Uniqueness of reducts implies that for LCTRSs with
a finite number of constrained rewrite rules, there exists 
only a finite number of reducts for any existentially constrained term
(modulo equivalence).
This stands in sharp contrast to the original notion
of rewriting non-quantified terms and represents
a useful property to check convergence of constrained terms.

\begin{restatable}[Commutation of Rewrite Steps and Equivalence for Pattern-General Terms]{theorem}{LemmaIVxi}
\label{lem:commutativity of rewrite steps and equivalence for pattern-general terms}
\label{thm:commutativity of rewrite steps and equivalence for pattern-general terms}
Let 
$\rho$ be a left-linear constrained rewrite rule and
$\CTerm{X}{s}{\vec{x}}{\varphi},
\CTerm{Y}{t}{\vec{y}}{\psi}$ be existentially constrained terms
that are satisfiable and pattern-general.
If $\CTerm{X'}{s'}{\pvec{x}'}{\varphi'} \Lb[\rho] \CTerm{X}{s}{\vec{x}}{\varphi} \sim  \CTerm{Y}{t}{\vec{y}}{\psi}$
then
$\CTerm{X'}{s'}{\pvec{x}'}{\varphi'} \sim \CTerm{Y'}{t'}{\pvec{y}'}{\psi'} \Lb[\rho]  \CTerm{Y}{t}{\vec{y}}{\psi}$
for some $\CTerm{Y'}{t'}{\pvec{y}'}{\psi'}$ (see below). 
\begin{center}
\tikzset{snake it/.style={decorate, decoration=snake}}
\begin{tikzpicture}[]
\node (t11) {$\CTerm{Y}{t}{\vec{y}}{\psi}$};
\node[right of = t11, node distance = 1.2in] (t12) {$\CTerm{X}{s}{\vec{x}}{\varphi}$};
\node[below of = t11, node distance = .35in] (t21) {$\CTerm{Y'}{t'}{\pvec{y}'}{\psi'}$};
\node[right of = t21, node distance = 1.2in] (t22) {$\CTerm{X'}{s'}{\pvec{x}'}{\varphi'}$};
\draw[snake it] (t11) -- (t12);
\draw [->,dash pattern=on 2pt off 1pt] (t11) -- (t21);
\draw [->] (t12) -- (t22);
\draw[snake it,dash pattern=on 2pt off 1pt] (t21) -- (t22);
\end{tikzpicture}
\end{center}
\end{restatable}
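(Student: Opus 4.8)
The plan is to extract from the given equivalence a concrete renaming, to transport the redex along it, and then to check that the two reducts are related by that same renaming. Since $\CTerm{X}{s}{\vec{x}}{\varphi}$ and $\CTerm{Y}{t}{\vec{y}}{\psi}$ are satisfiable and pattern-general, \Cref{thm:complete characterization of equivalence for most sim-general constrained terms} applies to the hypothesis $\CTerm{X}{s}{\vec{x}}{\varphi}\sim\CTerm{Y}{t}{\vec{y}}{\psi}$ and yields a renaming $\delta$ with $s\delta=t$, $\delta(X)=Y$, and $\vDash_\xM(\ECO{\vec{x}}{\varphi})\delta\Leftrightarrow(\ECO{\vec{y}}{\psi})$. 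Writing the given step as $\CTerm{X}{s}{\vec{x}}{\varphi}\R^p_{\rho,\gamma}\CTerm{X'}{s'}{\pvec{x}'}{\varphi'}$, I would first invoke uniqueness of reducts to assume, without loss of generality, that $\Var(\rho)$ is disjoint from $\Var(s,t,\varphi,\psi)$, and correspondingly that $\delta$ is the identity on $\Var(\rho)$ (the rule variables being fresh, $\delta$ can be chosen to fix them while still satisfying the three conditions above).

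Next I would transport the redex. Define $\gamma'$ on $\Var(\ell)$ by $\gamma'(x)=\gamma(x)\delta$. Then $\CTerm{Y}{t}{\vec{y}}{\psi}$ has a $\rho$-redex at $p$ using $\gamma'$: the domain condition is immediate; $t|_p=(s\delta)|_p=(\ell\gamma)\delta=\ell\gamma'$ since $\delta$ fixes $\Var(\ell)$; for $x\in\Var(\ell)\cap Z$ we have $\gamma(x)\in\Val\cup X$, so $\gamma'(x)\in\Val\cup\delta(X)=\Val\cup Y$; and applying $\delta$ to the redex condition $\vDash_\xM(\ECO{\vec{x}}{\varphi})\Rightarrow(\ECO{\vec{z}}{\pi\gamma})$ gives $\vDash_\xM(\ECO{\vec{x}}{\varphi})\delta\Rightarrow(\ECO{\vec{z}}{\pi\gamma'})$, because $\delta$ fixes the fresh $\vec{z}$ and $(\pi\gamma)\delta=\pi\gamma'$; chaining with $(\ECO{\vec{x}}{\varphi})\delta\Leftrightarrow(\ECO{\vec{y}}{\psi})$ yields $\vDash_\xM(\ECO{\vec{y}}{\psi})\Rightarrow(\ECO{\vec{z}}{\pi\gamma'})$. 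This produces the matching step $\CTerm{Y}{t}{\vec{y}}{\psi}\R^p_{\rho,\gamma'}\CTerm{Y'}{t'}{\pvec{y}'}{\psi'}$ with $t'=t[r\gamma']_p$ and $\psi'=\psi\wedge\pi\gamma'$.

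It then remains to show $\CTerm{X'}{s'}{\pvec{x}'}{\varphi'}\sim\CTerm{Y'}{t'}{\pvec{y}'}{\psi'}$, for which I would apply \Cref{thm: characterization of equivalence using renaming} with the same $\delta$. Both reducts are satisfiable by well-definedness of rewrite steps. One checks $s'\delta=t'$: since $r\gamma'=r\gamma\delta$ (again as $\delta$ fixes the rule variables), one gets $s'\delta=(s[r\gamma]_p)\delta=(s\delta)[r\gamma']_p=t'$. The condition $\delta(X')=Y'$ follows from $X'=\ExVar(\rho)\cup(X\cap\Var(s'))$ and $Y'=\ExVar(\rho)\cup(Y\cap\Var(t'))$ together with $\delta(\ExVar(\rho))=\ExVar(\rho)$ and $\delta(X\cap\Var(s'))=\delta(X)\cap\delta(\Var(s'))=Y\cap\Var(t')$. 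Finally one needs the constraint biconditional $\vDash_\xM(\ECO{\pvec{x}'}{\varphi'})\delta\Leftrightarrow(\ECO{\pvec{y}'}{\psi'})$; here $\varphi'\delta=\varphi\delta\wedge\pi\gamma'$ and $\psi'=\psi\wedge\pi\gamma'$ share the conjunct $\pi\gamma'$, so the statement should reduce to the original biconditional $(\ECO{\vec{x}}{\varphi})\delta\Leftrightarrow(\ECO{\vec{y}}{\psi})$ once the quantifier prefixes are matched.

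I expect this last point to be the main obstacle. The delicate part is the bookkeeping of the existential prefixes $\pvec{x}'$ and $\pvec{y}'$: one must argue that $\delta$ carries the bound variables of the first reduct onto those of the second, splitting them into the part inherited from $\varphi$ (resp.\ $\psi$), which is controlled by the original biconditional, and the fresh rule variables in $\vec{z}$ that do not reach the term, which are fixed by $\delta$ and bound on both sides by the common conjunct $\pi\gamma'$. The characterization of bound variables in reducts (\Cref{lem:bound variables of constraint of reducts}) is exactly the tool I would use to keep this classification precise and to ensure that adjoining the conjunct $\pi\gamma$ neither frees a previously bound variable nor binds a free one in a way that would break the correspondence under $\delta$.
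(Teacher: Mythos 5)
Your first half is correct and coincides with the paper's own proof: the paper likewise extracts a renaming from \Cref{thm:complete characterization of equivalence for most sim-general constrained terms} and transports the redex with the matcher $\SET{ x \mapsto \sigma(\gamma(x)) \mid x \in \Var(\ell) }$, which is exactly your $\gamma'$ (your $\delta$ is the paper's $\sigma$); the four redex conditions are verified the same way. Your endgame genuinely differs: the paper checks the five conditions of \Cref{thm:complete characterizatin of equivalence of constrained terms} for the reducts, whereas you invoke \Cref{thm: characterization of equivalence using renaming}. That route is legitimate and in fact leaner, since the paper itself establishes $s'\sigma = t'$, $\sigma(X') = Y'$, and the biconditional $\vDash_\xM(\ECO{\pvec{x}'}{\varphi'})\sigma\Leftrightarrow(\ECO{\pvec{y}'}{\psi'})$ (its equation~(\ref{eq:14})) on the way to those five conditions, and these three facts alone already close the proof via \Cref{thm: characterization of equivalence using renaming}. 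Your verifications of $s'\delta = t'$ and $\delta(X') = Y'$ are also sound.

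The genuine gap is the third hypothesis, the constraint biconditional, which you leave open, and the plan you sketch for it cannot work as stated. You propose to ``match the quantifier prefixes,'' i.e.\ to argue that $\delta$ carries $\SET{\pvec{x}'}$ onto $\SET{\pvec{y}'}$, splitting both into a part inherited from $\varphi$ resp.\ $\psi$ and a part contributed by $\pi\gamma'$. For the rule part this is fine (the paper indeed proves $\sigma(\SET{\pvec{x}'''}) = \SET{\pvec{y}'''}$), but for the inherited part it is false in general: equivalence of existentially constrained terms imposes no correspondence between bound variables. For instance, $\CTerm{\SET{x}}{\m{f}(x)}{}{x>\m{0}} \sim \CTerm{\SET{y}}{\m{f}(y)}{w}{y>\m{0} \land w = w}$ holds via $x \mapsto y$, yet one prefix is empty and the other is not, and this mismatch persists in $\pvec{x}'$ versus $\pvec{y}'$ after any rewrite step; no renaming can ``match'' them. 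Since the biconditional is a purely semantic statement, it must be proven by a valuation argument, and this is exactly the technical core of the paper's proof (its derivation of equation~(\ref{eq:14}) with the $\pvec{x}''/\pvec{x}'''$ decomposition): from a valuation with a witness for $\ECO{\pvec{x}'\delta}{\varphi\delta \land \pi\gamma'}$, restrict the witness to $\vec{x}\delta$ (legitimate since $\SET{\vec{x}} \subseteq \SET{\pvec{x}'}$ by \Cref{lem:bound variables of constraint of reducts}(1)) to satisfy $(\ECO{\vec{x}}{\varphi})\delta$, pass through the original biconditional to obtain a witness for $\ECO{\vec{y}}{\psi}$, and recombine it with the valuation's own values on the remaining bound variables of $\psi'$; this recombination is only sound because $\SET{\vec{y}} \cap \Var(\pi\gamma') = \varnothing$ (\Cref{lem:bound variables of constraint of reducts}(2)) and because $\SET{\pvec{x}'\delta}$ is disjoint from $Y'$, so the two manipulations do not interfere. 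You cite the right lemma, but in service of a prefix correspondence that does not exist; until this biconditional is actually established, the proposal does not prove the theorem.
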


\begin{proof}
From \Cref{thm:complete characterization of equivalence for most sim-general constrained terms}
it follows that
there is a renaming $\sigma\colon \Var(s,\varphi) \to \Var(t,\psi)$
such that 
(1) $s\sigma = t$, 
(2) $Y = \sigma(X)$, and
(3) $\vDash_\xM ((\ECO{\vec{x}}{\varphi})\sigma \Leftrightarrow (\ECO{\vec{y}}{\psi}))$.

Suppose that $\CTerm{X}{s}{\vec{x}}{\varphi} \R^p_{\rho,\gamma} \CTerm{X'}{s'}{\pvec{x}'}{\varphi'}$,
where $\rho\colon \CRu{Z}{\ell}{r}{\pi}$
such that
$\Var(\rho) \cap \Var(s,\varphi) = \varnothing$.
W.l.o.g.\ we assume $\Var(\rho) \cap \Var(s,\varphi,t,\psi) = \varnothing$.
By definition, we have 
(4) $\Dom(\gamma) = \Var(\ell)$,
(5) $s|_p = \ell\gamma$,
(6) $\gamma(x) \in \Val \cup X$ for any $x \in \Var(\ell) \cap Z$, and
(7) $\vDash_\xM (\ECO{\vec{x}}{\varphi}) \Rightarrow (\ECO{\vec{z}}{\pi\gamma})$,
where $\SET{\vec{z}} = \Var(\pi) \setminus \Var(\ell)$,
$s' = s[r\gamma]$,
$\varphi' = \varphi \land \pi\gamma$,
$\SET{\pvec{x}'} = \Var(\varphi') \setminus \Var(s')$, and
$X' = \ExVar(\rho) \cup (X \cap \Var(s'))$.

We show 
a rewrite step
$\CTerm{Y}{t}{\vec{y}}{\psi} \R^p_{\rho,\delta}  \CTerm{Y'}{t'}{\pvec{y}'}{\psi'}$,
where $\delta = \SET{ x \mapsto \sigma(\gamma(x)) \mid x \in \Var(\ell) }$.
First we show that
$\CTerm{Y}{t}{\vec{y}}{\psi}$ has a $\rho$-redex at $p$ by $\delta$,
that is,
(4') $\Dom(\delta) = \Var(\ell)$,
(5') $t|_p = \ell\delta$,
(6') $\delta(x) \in \Val \cup Y$ for any $x \in \Var(\ell) \cap Z$, and
(7') $\vDash_\xM (\ECO{\vec{y}}{\psi}) \Rightarrow (\ECO{\vec{z}}{\pi\delta})$,
where $\SET{\vec{z}} = \Var(\pi) \setminus \Var(\ell)$.

(4') is satisfied by definition of $\delta$.
(5') follows as $\ell\delta = \ell\gamma\sigma = (s|_p)\sigma = (s\sigma)|_p = t|_p$, using~(1) and (5).
To show (6'), assume $x \in \Val(\ell) \cap Z$. 
Then $\gamma(x) \in \Val \cup X$ by~(6), and
$\delta(x) = \sigma(\gamma(x)) \in \Val \cup \sigma(X) = \Val \cup Y$, by using~(2).
It remains to show~(7').
From~(7) we have $\vDash_\xM (\ECO{\vec{x}}{\varphi}) \Rightarrow (\ECO{\vec{z}}{\pi\gamma})$,
thus $\vDash_\xM (\ECO{\vec{x}}{\varphi})\sigma \Rightarrow (\ECO{\vec{z}}{\pi\gamma})\sigma$.
From~(3) follows that
$\vDash_\xM (\ECO{\vec{y}}{\psi}) \Rightarrow (\ECO{\vec{z}}{\pi\gamma})\sigma$.
By $\sigma\colon \Var(s,\varphi) \to \Var(t,\psi)$ and $\SET{\vec{z}} \subseteq \Var(\rho)$,
we conclude from $\Var(\rho) \cap \Var(s,\varphi) = \varnothing$ 
that $\SET{\vec{z}} \cap \Var(s,\varphi)  = \varnothing$,
and 
further from $\Var(\rho) \cap \Var(t,\psi) = \varnothing$ 
that $\SET{\vec{z}} \cap \Var(t,\psi) = \varnothing$.
Hence, by definition of $\delta$, 
$(\ECO{\vec{z}}{\pi\gamma})\sigma 
= \ECO{\vec{z}}{\pi\gamma\sigma}
= \ECO{\vec{z}}{\pi\delta}$,
and therefore
$\vDash_\xM (\ECO{\vec{y}}{\psi}) \Rightarrow (\ECO{\vec{z}}{\pi\delta})$ is satisfied.
Let
$t' = s[r\delta]_p$,
$\psi' = \psi \land \pi\delta$,
$\SET{\pvec{y}'} = \Var(\psi') \setminus \Var(t')$, and
$Y' = \ExVar(\rho) \cup (Y \cap \Var(t'))$,
we obtain the rewrite step
$\CTerm{Y}{t}{\vec{y}}{\psi} \R^p_{\rho,\delta}  \CTerm{Y'}{t'}{\pvec{y}'}{\psi'}$.

It remains to show that
$\CTerm{X'}{s'}{\pvec{x}'}{\varphi'} \sim \CTerm{Y'}{t'}{\pvec{y}'}{\psi'}$.
To that end we use \Cref{thm:complete characterizatin of equivalence of constrained terms}.
We show 
(8) $\Pos_{X'\cup\Val}(s') = \Pos_{Y'\cup\Val}(t')~ ( = \SET{p_1,\ldots,p_n})$,
(9) 
$\rho(s'[\,]_{p_1,\ldots,p_n}) = t'[\,]_{p_1,\ldots,p_n}$
for some renaming $\rho\colon \Var(s') \setminus X' \to \Var(t') \setminus Y'$,
(10) for any $i,j \in \SET{1,\ldots,n}$, 
        $\vDash_\xM (\ECO{\vec{x}}{\varphi}) \Rightarrow (s'|_{p_i} = s'|_{p_j})$
        iff 
        $\vDash_\xM (\ECO{\vec{y}}{\psi}) \Rightarrow (t'|_{p_i} = t'|_{p_j})$,
(11) for any $i \in \SET{1,\ldots,n}$ and $v \in \Val$, 
        $\vDash_\xM (\ECO{\vec{x}}{\varphi}) \Rightarrow (s'|_{p_i} = v)$
        iff 
        $\vDash_\xM (\ECO{\vec{y}}{\psi}) \Rightarrow (t'|_{p_i} = v)$,
        and
(12) 
let ${\sim} = {\sim}_{\Pos_{X'\cup \Val}(s')} = {\sim}_{\Pos_{X'\cup \Val}(t')}$
and $\mu_{X'},\mu_{Y'}$ be representative substitutions of
$\CTerm{X'}{s'}{\pvec{x}'}{\varphi'}$ and $\CTerm{Y'}{t'}{\pvec{y}'}{\psi'}$, respectively, 
based on the same representative for
each equivalence class $[p_i]_\sim$ ($1 \le i \le n$), 
and we have 
    $\vDash_\xM (\ECO{\pvec{x}'}{\varphi'})\mu_{X'}\theta|_{\tilde{X'}}
    \Leftrightarrow (\ECO{\pvec{y}'}{\psi'})\mu_{Y'}$
    with a renaming $\theta|_{\tilde{X'}} \colon \tilde{X'} \to \tilde{Y'}$,
    where $\theta =  \SET{ \langle s'|_{p_i}, t'|_{p_i} \rangle \mid 1 \le i \le n }$,
    $\tilde{X'} = \hat{X'} \cap X'$, and $\tilde{Y'} = \hat{Y'}\cap Y'$.
    
We start by showing~(8).
Consequently observe that
\addtocounter{equation}{12}
\begin{equation}\label{eq:13}
s'\sigma = s[r\gamma]_p\sigma = s\sigma[r\gamma\sigma]_p =  t[r\delta]_{p} = t'
\end{equation}
Hence, 
$\sigma(X') 
= \sigma(\ExVar(\rho) \cup \sigma(X \cap \Var(s'))
= \sigma(\ExVar(\rho)) \cup \sigma(X \cap \Var(s'))
= \ExVar(\rho) \cup (\sigma(X) \cap \Var(s'\sigma))
= \ExVar(\rho) \cup (Y \cap \Var(t')) = Y'$ follows.
This implies that
$\Pos_{X'\cup\Val}(s') 
= \Pos_{\sigma(X')\cup\Val}(s'\sigma) 
= \Pos_{Y'\cup\Val}(t')$,
which shows~(8). 
In the following let $\Pos_{X'\cup\Val}(s') = \SET{p_1,\ldots,p_n}$.

We proceed to show~(9). 
We let $s' = s'[s_1',\ldots,s_n']_{p_1,\ldots,p_n}$
and $t' = t'[t_1',\ldots,t_n']_{p_1,\ldots,p_n}$,
using the fact~(8).
Then, from~(\ref{eq:13}), it follows that
$s'\sigma[~]_{p_1,\ldots,p_n} = t'[~]_{p_1,\ldots,p_n}$ and
$s_i'\sigma = t_i'$ for each $1 \le i \le n$.
Note here that $s_i' \in \Val \cup X'$ and $t_i' \in \Val \cup Y'$.
Thus, since $\sigma$ is a renaming, we have
$s_i' \in X'$ iff $t_i' \in Y'$,
$s_i' \in \Val$ iff $t_i' \in \Val$, and
$s_i' = s_j'$ iff $t_i' = t_j'$
for $i,j \in \SET{1, \ldots, n}$.
Hence $\SET{s_1',\ldots,s_n'} \cap X' = \SET{t_1',\ldots,t_n'} \cap Y'$.
Since $\sigma(X') = Y'$, it implies that 
$\eta = \SET{ x \mapsto \sigma(x) \mid x \in \Var(s') \setminus X' }$
is a bijection from $\Var(s') \setminus X'$ to $\Var(t') \setminus Y'$
such that $\eta(s'[~]_{p_1,\ldots,p_n}) = t'[~]_{p_1,\ldots,p_n}$.

Before we proceed to the remainder of the proof, let us introduce a useful
temporary notation and show some related properties.
For $\ECO{\vec{x}}{\varphi}$ 
and $\ECO{\pvec{x}'}{\varphi'} = \ECO{\pvec{x}'}{\varphi \land \pi\gamma}$,
we have, by \Cref{lem:bound variables of constraint of reducts},
that $\SET{\vec{x}} \subseteq \SET{\pvec{x}'}$.
Let $\SET{\pvec{x}''} = \SET{\pvec{x}'} \setminus \SET{\vec{x}}$
and let us write $\varphi = \varphi(\vec{x},\pvec{x}'')$.
Also, for $\pi\gamma$, because $\SET{\vec{x}} \cap \Var(\pi\gamma) = \varnothing$
by \Cref{lem:bound variables of constraint of reducts},
we have $\Var(\pi\gamma) \cap \SET{\pvec{x}'} = \SET{\pvec{x}'''} \subseteq \SET{\pvec{x}''}$.
Thus, using the subsequence $\pvec{x}'''$ of $\pvec{x}''$,
we write $\pi\gamma = \pi\gamma(\pvec{x}''')$.
Similarly, for $\ECO{\vec{y}}{\psi}$ 
and $\ECO{\pvec{y}'}{\psi'} = \ECO{\pvec{y}'}{\psi \land \pi\delta}$,
let us write $\psi = \psi(\vec{y},\pvec{y}'')$
and $\pi\delta = \pi\delta(\pvec{y}''')$ where 
$\SET{\pvec{y}''} = \SET{\pvec{y}'}\setminus \SET{\vec{y}}$
and $\SET{\pvec{y}'''} \subseteq \SET{\pvec{y}''}$.
Note that 
$\Var(\pi\gamma) 
= (\bigcup_{x \in \Var(\ell)\cap \Var(\pi)} \Var(x\gamma)) \cup (\Var(\pi) \setminus \Var(\ell))$.
Moreover, we obtain that
$\sigma(\SET{\pvec{x}'''})
= \sigma(\Var(\pi\gamma) \cap \SET{\pvec{x}'})
= \sigma(
((\bigcup_{x \in \Var(\ell)\cap \Var(\pi)} \Var(x\gamma)) \cup (\Var(\pi) \setminus \Var(\ell)))
\setminus \Var(s'))
= 
((\bigcup_{x \in \Var(\ell)\cap \Var(\pi)} \Var(x\delta)) \cup (\Var(\pi) \setminus \Var(\ell)))
\setminus \Var(t')
= \Var(\pi\delta) \cap \SET{\vec{y}}
= \SET{\pvec{y}'''}$.

From~(3) $\vDash_\xM ((\ECO{\vec{x}}{\varphi})\sigma \Leftrightarrow (\ECO{\vec{y}}{\psi}))$,
we have that, for any valuation $\xi$,
$\vDash_{\xM,\xi} \varphi\sigma(\vec{a},\pvec{x}''\sigma)$ for some $\vec{a} \in \vert \xM \vert^*$
iff 
$\vDash_{\xM,\xi} \psi(\vec{b},\pvec{y}'')$ for some $\vec{b} \in \vert \xM \vert^*$.
Thus, for any valuation $\xi$,
$\vDash_{\xM,\xi} \varphi\sigma(\vec{a},\pvec{x}''\sigma) \land \pi\gamma\sigma(\pvec{x}'''\sigma)$ 
for some $\vec{a} \in \vert \xM \vert^*$
iff 
$\vDash_{\xM,\xi} \psi(\vec{b},\pvec{y}'') \land \pi\delta(\pvec{y}''')$ 
for some $\vec{b} \in \vert \xM \vert^*$,
as we know $\pi\gamma\sigma = \pi\delta$
and $\sigma(\SET{\pvec{x}'''}) = \SET{\pvec{y}'''}$.
In particular, we have
\begin{equation}\label{eq:14}
\vDash_\xM ((\ECO{\pvec{x}'}{\varphi'})\sigma \Leftrightarrow (\ECO{\pvec{y}'}{\psi'}))
\end{equation}

Using~(\ref{eq:14}) and our assumption,  
$\vDash_\xM \ECO{x'}{\varphi'} \Rightarrow (s'|_{p_i} = s'|_{p_j})$
iff 
$\vDash_\xM \ECO{x'}{\varphi'}\sigma \Rightarrow (\sigma(s'|_{p_i}) = \sigma(s'|_{p_j}))$
iff 
$\vDash_\xM \ECO{y'}{\psi'} \Rightarrow (s'\sigma|_{p_i} = s'\sigma|_{p_j})$
iff 
$\vDash_\xM \ECO{y'}{\psi'} \Rightarrow (t'|_{p_i}) = t'|_{p_j})$
for $i,j \in \SET{1, \ldots,  n}$.
Similarly,
$\vDash_\xM \ECO{x'}{\varphi'} \Rightarrow (s'|_{p_i} = v)$
iff 
$\vDash_\xM \ECO{x'}{\varphi'}\sigma \Rightarrow (\sigma(s'|_{p_i}) = v\sigma)$
iff 
$\vDash_\xM \ECO{y'}{\psi'} \Rightarrow (s'\sigma|_{p_i} = v)$
iff 
$\vDash_\xM \ECO{y'}{\psi'} \Rightarrow (t'|_{p_i} = v)$
for $i,j \in \SET{1, \ldots,  n}$.
This shows~(10) and~(11).

Finally, take $\theta = \SET{\langle s'|_{p_i},t'|_{p_i} \rangle \mid 1 \le i \le n}$.
As $\theta|_{\tilde{X'}} \subseteq \sigma$,
from (10), (11) and $s'\sigma = t'$,
it follows that 
$\theta|_{\tilde{X'}} \circ \mu_{X'} = \mu_{Y'} \circ \sigma$.
From (\ref{eq:14}), we have 
$\vDash_\xM (\ECO{\pvec{x}'}{\varphi'})\sigma\mu_{Y'} 
\Leftrightarrow (\ECO{\pvec{y}'}{\psi'})\mu_{Y'}$, 
and thus,
$\vDash_\xM (\ECO{\pvec{x}'}{\varphi'})\mu_{X'}\theta|_{\tilde{X'}}
\Leftrightarrow (\ECO{\pvec{y}'}{\psi'})\mu_{Y'}$. 
As $\mu_{X'},\mu_{Y'}$ are based on the same representative for
each equivalence class $[p_i]_\sim$ ($1 \le i \le n$), 
$\theta|_{\tilde{X'}}$ is a variable renaming from $\tilde{X'}$ to $\tilde{Y'}$.
This shows~(12) and completes the proof.
\end{proof}

In general, the commutation property allows us to
postpone all equivalence translations, depicted by
the following diagram:
\begin{center}
\tikzset{snake it/.style={decorate, decoration=snake}}
\begin{tikzpicture}[]
\node (t11) {$\CTerm{X}{s}{\pvec{x}}{\varphi}$};
\node[right of = t11, node distance = .7in] (t12) {$\cdot$};
\node[below of = t11, node distance = .35in] (t21) {$\CTerm{X'}{s'}{\pvec{x'}}{\varphi'}$};
\node[right of = t21, node distance = .7in] (t22) {$\cdot$};
\node[right of = t22, node distance = .4in] (t23) {$\cdot$};
\node[below of = t21, node distance = .35in] (t31) {$\CTerm{X''}{s''}{\pvec{x}''}{\varphi''}$};
\node[below of = t23, node distance = .35in] (t33) {$\cdot$};
\node[right of = t33, node distance = .4in] (t34) {$\cdot$};
\node[below of = t34, node distance = .35in] (t44) {$\cdot$};
\node[below of = t31, node distance = .35in] (t41) {$\CTerm{X'''}{s'''}{\pvec{x}'''}{\varphi''}$};
\node[right of = t44, node distance = .7in] (t45) {$\CTerm{Y}{t}{\pvec{y}}{\psi}$};
\draw[snake it] (t11) -- (t12);
\draw [->,dash pattern=on 2pt off 1pt] (t11) -- (t21);
\draw [->] (t12) -- (t22);
\draw[snake it,dash pattern=on 2pt off 1pt] (t21) -- (t22);
\draw [snake it]  (t22) -- (t23);
\draw [->,dash pattern=on 2pt off 1pt] (t21) -- (t31);
\draw [->]  (t23) -- (t33);
\draw[snake it,dash pattern=on 2pt off 1pt] (t31) -- (t33);
\draw [snake it] (t33) -- (t34);
\draw [->] (t34) -- (t44);
\draw [->,dash pattern=on 2pt off 1pt] (t31) -- (t41);
\draw[snake it,dash pattern=on 2pt off 1pt] (t41) -- (t44);
\draw [snake it] (t44) -- (t45);
\end{tikzpicture}
\end{center}
Here, the rewrite steps $\CTerm{X}{s}{\pvec{x}}{\varphi} \Rs^* \CTerm{Y}{t}{\pvec{y}}{\psi}$ can be obtained via the rewrite steps
$\CTerm{X}{s}{\pvec{x}}{\varphi} \R \CTerm{X'}{s'}{\pvec{x}'}{\varphi'} \R \CTerm{X''}{s''}{\pvec{x}''}{\varphi''} \R \CTerm{X'''}{s'''}{\pvec{x}'''}{\varphi'''} \sim \CTerm{Y}{t}{\pvec{y}}{\psi}$
if repeated applications of commutation are possible.
However, 
\Cref{thm:commutativity of rewrite steps and equivalence for pattern-general terms}
does not allow repeated applications, because, in general, reducts of most general
rewrite steps do not necessarily lead again to pattern-general constrained terms.
Hence, its restriction 
to pattern-general terms prevents the possibility of repeated applications.

\section{Simulation via Left-Value-Free Rules}
\label{sec:left-value-free rules}

In \Cref{lem:commutativity of rewrite steps and equivalence for pattern-general terms},
we assume that the equivalent existentially constrained terms are pattern-general.
A natural question is, what happens if we drop this condition?

\begin{example}
\label{ex:non-commutation example}
Let $\rho\colon\CRu{\varnothing}{\m{f}(\m{0})}{\m{0}}{\m{true}}$ be a constrained rule.
Consider the equivalence 
$\CTerm{\varnothing}{\m{f}(\m{0})}{}{}
\sim \CTerm{\varnothing}{\m{f}(x)}{}{x = \m{0}}$
and a rewrite step
$\CTerm{\varnothing}{\m{f}(\m{0})}{}{} \R_\rho \CTerm{\varnothing}{\m{0}}{}{}$.
However, 
$\CTerm{\varnothing}{\m{f}(x)}{}{x = \m{0}}$ cannot be rewritten by $\rho$,
because there does not exist a $\gamma$ such that $\m{f}(x) = \m{f}(\m{0})\gamma$.
Thus constrained terms that are not pattern-general,
may not commute with rewrite steps in general.
\end{example}

However, we can recover commutation using a reasonable transformation
on constrained rewrite rules. In this section, we introduce this 
transformation and show that any constrained rewrite rule can be
simulated by using it.

The counterexample of commutation in \Cref{ex:non-commutation example}
uses the rule $\rho\colon\CRu{\varnothing}{\m{f}(\m{0})}{\m{0}}{\m{true}}$,
which is not able to reduce the constrained term
$\CTerm{\varnothing}{\m{f}(x)}{}{x = \m{0}}$.
However, the rule 
$\rho'\colon\CRu{\SET{x}}{\m{f}(x)}{\m{0}}{x = \m{0}}$ has the same effect as $\rho$,
but can reduce the term as follows:
$\CTerm{\varnothing}{\m{f}(x)}{}{x = \m{0}}
\R_{\rho'} \CTerm{\varnothing}{\m{0}}{x}{x = \m{0}}
\sim \CTerm{\varnothing}{\m{0}}{}{}$.
In the following, we show how this can be achieved for arbitrary constrained rewrite rules.
We define left-value-free rewrite rules~\cite{Kop17} in our setting as follows.

\begin{definition}[Left-Value-Free Rewrite Rule]
A left-linear constrained 
rewrite rule $\CRu{Z}{\ell}{r}{\pi}$ is \emph{left-value-free}
if $\Val(\ell) = \varnothing$.
\end{definition}

\begin{definition}[Left-Value-Free Transformation]
Let $\rho\colon \CRu{Z}{\ell}{r}{\pi}$ be a left-linear constrained rule,
$\Pos_{\Val}(\ell) = \SET{p_1,\ldots,p_n}$, and
$\ell|_{p_i} = v_i$ for $1 \le i \le n$.
Taking the fresh variables $x_1,\ldots,x_n$,
we define the \emph{left-value-free transformation} of $\rho$
as $\lvf(\rho) = \CRu{\hat{Z}}{\hat\ell}{r}{\hat\pi}$ 
where $\hat{Z} = Z \cup \SET{x_1,\ldots,x_n}$,
$\hat\ell = \ell[x_1,\ldots,x_n]_{p_1,\ldots,p_n}$,
and $\hat\pi = \pi \land \bigwedge_{i = 1}^{n} (x_i = v_i)$.
\end{definition}

We demonstrate that restricting rewrite rules to left-value-free ones preserves
generality.

\begin{restatable}{theorem}{TheoremIVxv}
\label{thm:preservation of generality by the left-value-freeness}
Let $\rho\colon \CRu{Z}{\ell}{r}{\pi}$ be a left-linear constrained rewrite rule,
$\lvf(\rho) = \hat\rho\colon \CRu{\hat Z}{\hat \ell}{r}{\hat \pi}$,
and 
$\CTerm{X}{s}{\vec{x}}{\varphi}$ an existentially constrained term
such that $\Var(\hat \rho) \cap \Var(s,\varphi) = \varnothing$.
If $\CTerm{X}{s}{\vec{x}}{\varphi} \R^p_{\rho} \CTerm{Y}{t}{\vec{y}}{\psi}$
then there exists a constrained term $\CTerm{Y'}{t'}{\pvec{y}'}{\psi'}$
such that
$\CTerm{X}{s}{\vec{x}}{\varphi} \R^p_{\hat\rho} 
\CTerm{Y'}{t'}{\pvec{y}'}{\psi'}
\sim \CTerm{Y}{t}{\vec{y}}{\psi}$.
\end{restatable}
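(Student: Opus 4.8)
The plan is to lift the given $\rho$-redex directly to a $\hat\rho$-redex on the same existentially constrained term, using a substitution that additionally sends each fresh variable $x_i$ to the value $v_i$ it replaced. Concretely, assuming $\CTerm{X}{s}{\vec{x}}{\varphi} \R^p_{\rho,\gamma} \CTerm{Y}{t}{\vec{y}}{\psi}$, I would set $\hat\gamma = \gamma \cup \SET{x_i \mapsto v_i \mid 1 \le i \le n}$. Since the $x_i$ are fresh (so $x_i \notin \Var(\ell) = \Dom(\gamma)$), the two parts have disjoint domains and $\hat\gamma$ is well-defined with $\Dom(\hat\gamma) = \Var(\hat\ell) = \Var(\ell) \cup \SET{x_1,\ldots,x_n}$. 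By construction $\hat\ell\hat\gamma$ agrees with $\ell\gamma$ away from the positions $p_i$ and equals $v_i = \ell|_{p_i}$ at each $p_i$, hence $\hat\ell\hat\gamma = \ell\gamma = s|_p$.

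Next I would verify the remaining redex conditions for $\hat\rho$. The value-instantiation condition holds because $\hat\gamma$ coincides with $\gamma$ on $\Var(\ell) \cap Z$ (covered by condition~(3) of the $\rho$-redex) and sends each $x_i$ to the value $v_i \in \Val$. For the entailment condition, the key computation is that $\hat\pi\hat\gamma = \pi\gamma \land \bigwedge_{i=1}^n (v_i = v_i)$, which is logically equivalent to $\pi\gamma$, and that the bound variables $\SET{\vec{z}} = \Var(\hat\pi) \setminus \Var(\hat\ell)$ computed for $\hat\rho$ coincide with those for $\rho$, since the fresh $x_i$ lie in both $\Var(\hat\pi)$ and $\Var(\hat\ell)$ and therefore cancel in the difference. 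Thus condition~(4) for $\hat\rho$ reduces exactly to condition~(4) for $\rho$, and $\CTerm{X}{s}{\vec{x}}{\varphi}$ indeed has a $\hat\rho$-redex at $p$ via $\hat\gamma$.

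It then remains to compare the resulting reduct $\CTerm{Y'}{t'}{\pvec{y}'}{\psi'}$ with $\CTerm{Y}{t}{\vec{y}}{\psi}$. Because the $x_i$ do not occur in $r$, I expect $r\hat\gamma = r\gamma$ and hence $t' = t$; likewise $\ExVar(\hat\rho) = \Var(r) \setminus \Var(\hat\ell) = \ExVar(\rho)$, so $Y' = Y$. The constraint becomes $\psi' = \varphi \land \hat\pi\hat\gamma = \psi \land \bigwedge_{i=1}^n (v_i = v_i)$, whose added conjuncts are variable-free and valid; consequently $\Var(\psi') = \Var(\psi)$ and $\SET{\pvec{y}'} = \SET{\vec{y}}$. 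Finally I would apply \Cref{thm: characterization of equivalence using renaming} with the identity renaming: it gives $\CTerm{Y'}{t'}{\pvec{y}'}{\psi'} \sim \CTerm{Y}{t}{\vec{y}}{\psi}$ precisely because $Y' = Y$, $t' = t$, and $\psi' \Leftrightarrow \psi$ is valid.

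The only real care lies in the bookkeeping of variable sets: confirming that the fresh $x_i$ cancel when forming $\SET{\vec{z}}$ for $\hat\rho$ and that they leave no trace in $r$, $\pi$, $\varphi$, or $s$, so that the two reducts differ only by the trivially valid conjuncts $v_i = v_i$. I do not anticipate any deeper obstacle: the left-value-free transformation merely exposes the hidden values of $\ell$ as logical variables pinned to those values, and the extended substitution $\hat\gamma$ simply restores them, leaving the rewrite behaviour unchanged up to equivalence.
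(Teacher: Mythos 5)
Your proposal is correct and follows essentially the same route as the paper's proof: the paper's witnessing substitution $\delta = \gamma \cup \SET{y_i \mapsto v_i}$ is exactly your $\hat\gamma$, the redex conditions are verified by the same bookkeeping (in particular $\Var(\hat\pi)\setminus\Var(\hat\ell) = \Var(\pi)\setminus\Var(\ell)$ and $\hat\pi\hat\gamma \Leftrightarrow \pi\gamma$), and the reducts are compared via $t'=t$, $Y'=Y$, $\SET{\pvec{y}'}=\SET{\vec{y}}$, and $\vDash_\xM \psi' \Leftrightarrow \psi$. The only cosmetic difference is that the paper concludes "from the definition of equivalence" while you invoke \Cref{thm: characterization of equivalence using renaming} with the identity renaming, which is equally valid (both reducts are satisfiable by well-definedness of rewrite steps).
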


\begin{proof}
Suppose
$\ell = \ell[v_1,\ldots,v_n]_{p_1,\ldots,p_n}$,
where
$\SET{p_1,\ldots,p_n} = \Pos_{\Val}(\ell)$.
Let $y_1,\ldots,y_n$ are fresh variables,
$\hat Y = \SET{y_1,\ldots,y_n}$,
$\hat{Z} = Z \cup \hat Y$,
$\hat\ell = \ell[y_1,\ldots,y_n]_{p_1,\ldots,p_n}$, and
$\hat\pi = \pi \land \bigwedge_{i = 1}^{n} (y_i = v_i)$.
We obtain 
$\Var(\hat\ell)  \cap \hat Z   
= (\Var(\ell) \cup \hat Y) \cap (Z \cup \hat Y)
= (\Var(\ell) \cap Z) \cup \hat Y$
and
$\Var(\hat\pi) \setminus \Var(\hat\ell)
= (\Var(\pi) \cup \hat Y) \setminus (\Var(\ell) \cup \hat Y)
= \Var(\pi) \setminus \Var(\ell)$.
Suppose $\CTerm{X}{s}{\vec{x}}{\varphi} \R^p_{\rho,\gamma} \CTerm{Y}{t}{\vec{y}}{\psi}$.
Then we have 
(1) $\Dom(\gamma) = \Var(\ell)$,
(2) $s|_p = \ell\gamma$,
(3) $\gamma(x) \in \Val \cup X$ for any $x \in \Var(\ell) \cap Z$, and
(4) $\vDash_\xM (\ECO{\vec{x}}{\varphi}) \Rightarrow (\ECO{\vec{z}}{\pi\gamma})$,
where $\SET{\vec{z}} = \Var(\pi) \setminus \Var(\ell)$,
$t = s[r\gamma]$,
$\psi = \varphi \land \pi\gamma$,
$\SET{\vec{y}} = \Var(\psi) \setminus \Var(t)$, and
$Y = \ExVar(\rho) \cup (X \cap \Var(t))$.

Let $\delta = \gamma \cup \SET{y_i \mapsto v_i \mid 1 \le i \le n}$.
We first show that 
the term $\CTerm{X}{s}{\vec{x}}{\varphi}$ has a $\hat\rho$-redex 
at $p \in \Pos(s)$ using $\delta$.
By~(1), we have 
$\Dom(\delta) 
= \Dom(\gamma) \cup \hat Y
= \Var(\ell) \cup \hat Y$.
From~(2), we have 
$s|_p = \ell\gamma = \ell[v_1,\ldots,v_n]\gamma 
= \ell\gamma [v_1,\ldots,v_n] 
= \ell\delta[\delta(y_1),\ldots,\delta(y_n)] 
= \ell[y_1,\ldots,y_n] \delta
= \hat\ell\delta$.
We proceed to show that $\delta(x) \in \Val \cup X$
for any $x \in \Var(\hat \ell) \cap \hat Z$.
Let $x \in \Var(\hat\ell) \cap \hat{Z}= (\Var(\ell) \cap Z) \cup \SET{y_1,\ldots,y_n}$.
If $x \in \Var(\ell) \cap Z$, then $\delta(x) = \gamma(x)\in \Val \cup X$.
Otherwise, $x \in \hat Y$, and thus 
$\delta(x) = \delta(y_i) = v_i \in \Val \subseteq \Val \cup X$ for some $1 \leqslant i \leqslant n$.
Using the fact that $\Var(\hat\pi) \setminus \Var(\hat\ell) = \Var(\pi) \setminus \Var(\ell)$,
we show that $\vDash_\xM (\ECO{\vec{x}}{\varphi}) \Rightarrow (\ECO{\vec{z}}{\hat\pi\delta})$.
Here we have 
$\hat\pi\gamma 
= (\pi \land \bigwedge_{i = 1}^{n} (y_i = v_i))\gamma
= \pi\gamma \land \bigwedge_{i = 1}^{n} (v_i = v_i)
= \pi\delta \land \bigwedge_{i = 1}^{n} (v_i = v_i)$.
Clearly $\vDash_\xM \pi\gamma \Leftrightarrow \hat\pi\delta$.
Hence, from~(4), we have 
$\vDash_\xM (\ECO{\vec{x}}{\varphi}) \Rightarrow (\ECO{\vec{z}}{\hat\pi\delta})$.
We conclude that 
$\CTerm{X}{s}{\vec{x}}{\varphi}$ has a $\hat\rho$-redex 
at $p \in \Pos(s)$ using $\delta$.

Hence, we obtain 
$\CTerm{X}{s}{\vec{x}}{\varphi} \R^p_{\hat\rho,\delta} 
\CTerm{Y'}{t'}{\pvec{y}'}{\psi'}$,
where
$t' = s[r\delta]$,
$\psi' = \varphi \land \hat\pi\delta$,
$\SET{\pvec{y}'} = \Var(\psi') \setminus \Var(t')$, and
$Y' = \ExVar(\hat\rho) \cup (X \cap \Var(t'))$.
From $\ExVar(\hat\rho) \cap \hat Y = \varnothing$,
we have $r\delta = r\gamma$ and $\pi\delta = \pi\gamma$.
Thus,~(5) $t' = s[r\delta] = s[r\gamma] = t$.
It also follows from $\psi = \varphi \land \pi\gamma$,
$\psi' = \varphi \land \hat\pi\delta$,
and $\vDash_\xM \pi\gamma \Leftrightarrow \hat\pi\delta$ (shown above),
that~(6) $\vDash_\xM \psi' \Leftrightarrow \psi$.
Using~(5) and 
$\Var(\hat\pi\delta) 
= \Var(\pi\delta \land \bigwedge_{i = 1}^{n} (v_i = v_i)) 
= \Var(\pi\gamma)$, we obtain that,~%
(7) $\SET{\pvec{y}'} 
= \Var(\psi') \setminus \Var(t')
= \Var(\varphi \land \hat\pi\delta) \setminus \Var(t')
= \Var(\varphi \land \pi\gamma) \setminus \Var(t)
= \Var(\psi) \setminus \Var(t)
= \SET{\vec{y}}$.
By $\ExVar(\hat\rho) \cap \hat Y = \varnothing$,
we have $\ExVar(\rho) =  \ExVar(\hat \rho)$.
(8) $Y' 
= \ExVar(\hat\rho) \cup (X \cap \Var(t'))
= \ExVar(\rho) \cup (X \cap \Var(t))
= Y$.
Having~(5)--(8),
it is easy to check from the definition of equivalence
that
$\CTerm{Y'}{t'}{\vec{'y}}{\psi'} \sim \CTerm{Y}{t}{\vec{y}}{\psi}$
holds.
\end{proof}

\begin{example}
\label{ex:non-commutation example II}
Let us revisit \Cref{ex:non-commutation example}.
There we consider the constrained rewrite rule 
$\rho\colon\CRu{\varnothing}{\m{f}(\m{0})}{\m{0}}{\m{true}}$
and the equivalence
$\CTerm{\varnothing}{\m{f}(\m{0})}{}{} 
\sim \CTerm{\varnothing}{\m{f}(x)}{}{x = \m{0}}$.
Let $\hat\rho = \lvf(\rho) 
= \CRu{\SET{y}}{\m{f}(y)}{\m{0}}{\m{true} \land y = \m{0}}$.
Using the rule $\hat\rho$, instead of $\rho$,
yields
$\CTerm{\varnothing}{\m{f}(\m{0})}{}{} 
\R_{\hat\rho} \CTerm{\varnothing}{\m{0}}{}{\m{true} \land \m{0} = \m{0}}$
and 
$\CTerm{\varnothing}{\m{f}(x)}{}{x = \m{0}} 
\R_{\hat\rho} \CTerm{\varnothing}{\m{0}}{x}{\m{true} \land (x = \m{0}) \land (x = \m{0})}$.
In this case, clearly,
$\CTerm{\varnothing}{\m{0}}{}{\m{true} \land \m{0} = \m{0}}
\sim \CTerm{\varnothing}{\m{0}}{x}{\m{true} \land (x = \m{0}) \land (x = \m{0})}$.
\end{example}

In the next section
we show that, if we only consider left-value-free constrained rewrite rules,
then the commutation property holds for all existentially constrained terms 
that are not necessarily pattern-general.
Considering left-value-free rewrite rules
does not weaken the applicability of the commutation property, because
every constrained rewrite rule can be translated to 
a left-value-free rule, while preserving its effect,
using \Cref{thm:preservation of generality by the left-value-freeness}.

\section{General Commutation for Rewrite Steps with Left-Value-Free Rules and Equivalence}
\label{sec:general commutativity left-value-free rules}

By focusing on pattern-general terms, we have shown
in \Cref{lem:commutativity of rewrite steps and equivalence for pattern-general terms}
that rewrite steps commute with the equivalence transformation.
However, this commutation property inhabits issues w.r.t.\ to repeated applications
as explained at the end of \Cref{sec:commutativity of rewriting and equivalence}.

In the following we show that commutation holds without that restriction 
if the employed rule is left-value-free.
Since any rule has an equivalent left-value-free rule,
by \Cref{thm:preservation of generality by the left-value-freeness},
one can repeatedly apply the commutation property.
This guarantees that the equivalence transformations
can be postponed till the end of the rewrite sequence.
We first introduce a useful notation that is used 
in the following proofs.

\begin{restatable}{definition}{DefinitioinIVxvii}
\label{def:bullet translation}
Let $X$ be a set of variables and $s$ a term.
Suppose $\SET{p_1,\ldots,p_n} = \Pos_{X \cup  \Val}(s)$.
We define $s^{\bullet_{X}} = s[x_1,\ldots,x_n]_{p_1,\ldots,p_n}$
for some pairwise distinct fresh variables $x_1,\ldots,x_n$.
We denote by $u^{\bullet_{X}} = s^{\bullet_{X}}|_p$ each subterm $u = s|_p$ of $s$.
Consider a linear term $\ell$ together with a substitution $\gamma$ such that
$s = \ell\gamma$ with $\Dom(\gamma) = \Var(\ell)$.
We define a substitution $\gamma^{\bullet_{X}}$ 
by $\gamma^{\bullet_{X}}(x) = \gamma(x)^{\bullet_{X}}$.
When no confusion arises, the superscript $\bullet_{X}$ will be abbreviated 
by ${\bullet}$.
\end{restatable}

We now develop several properties of $\bullet$-translation;
all proofs are given in the appendix.

\begin{restatable}{lemma}{LemmaIVxix}
\label{lem:bullet translation and back}
Let $X$ be a set of variables and $s$ a term.
Let $s^{\bullet_{X}} = s[x_1,\ldots,x_n]_{p_1,\ldots,p_n}$
and $\sigma = \{ x_i \mapsto s|_{p_i} \mid 1 \le i \le n \}$.
Then, $u^{\bullet_X}\sigma = u$ holds for any subterm $u$ of $s$.
\end{restatable}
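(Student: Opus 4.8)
The plan is to prove the claim in two stages: first establish the special case $u = s$, namely the global identity $s^{\bullet_X}\sigma = s$, and then lift it to an arbitrary subterm $u = s|_p$ by the standard fact that substitution commutes with subterm selection, i.e.\ $(w\sigma)|_p = (w|_p)\sigma$ for every position $p \in \Pos(w)$. This reduces the whole lemma to one easy direct computation plus a routine commutation argument.

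First I would record a structural observation that makes everything well-defined: every position in $\Pos_{X \cup \Val}(s)$ is a leaf of $s$, since a symbol in $X$ is a variable and a symbol in $\Val$ is a nullary value, both of which are leaves. Consequently, replacing the subterms $s|_{p_1},\dots,s|_{p_n}$ by the fresh variables $x_1,\dots,x_n$ does not alter the position structure, so $\Pos(s^{\bullet_X}) = \Pos(s)$, and hence $s^{\bullet_X}|_p$ (and thus $u^{\bullet_X}$) is defined for every $p \in \Pos(s)$.

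Next I would prove the global identity $s^{\bullet_X}\sigma = s$ directly from the definitions. Since the $x_i$ are pairwise distinct and fresh, each $x_i$ occurs in $s^{\bullet_X} = s[x_1,\dots,x_n]_{p_1,\dots,p_n}$ only at position $p_i$, and no $x_i$ occurs in the surrounding context. Hence applying $\sigma = \{x_i \mapsto s|_{p_i} \mid 1 \le i \le n\}$ restores each replaced leaf and leaves the rest untouched, yielding $s^{\bullet_X}\sigma = s[s|_{p_1},\dots,s|_{p_n}]_{p_1,\dots,p_n} = s$.

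Finally, for an arbitrary subterm $u = s|_p$ with $p \in \Pos(s) = \Pos(s^{\bullet_X})$, I would combine the commutation fact with the global identity to obtain $u^{\bullet_X}\sigma = (s^{\bullet_X}|_p)\sigma = (s^{\bullet_X}\sigma)|_p = s|_p = u$. Alternatively, the same result follows by a routine structural induction on $u$, with base cases $p \in \{p_1,\dots,p_n\}$ (where $u^{\bullet_X} = x_i$ and $x_i\sigma = s|_{p_i} = u$) and leaf variables outside $X$ (fixed by $\sigma$), and the inductive step distributing $\sigma$ over the immediate subterms. There is no serious obstacle here; the only points requiring care are the bookkeeping around the fresh variables—ensuring each $x_i$ occurs exactly at $p_i$ so that $\sigma$ acts as the intended inverse—and the leaf observation, which is precisely what guarantees $\Pos(s^{\bullet_X}) = \Pos(s)$ and keeps the subterm-selection commutation applicable.
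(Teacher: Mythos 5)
Your proposal is correct and follows essentially the same route as the paper's proof: first the global identity $s^{\bullet_X}\sigma = s$, then $u^{\bullet_X}\sigma = (s^{\bullet_X}|_p)\sigma = (s^{\bullet_X}\sigma)|_p = s|_p = u$ via commutation of substitution with subterm selection. The extra bookkeeping you supply (freshness of the $x_i$, the leaf observation giving $\Pos(s^{\bullet_X}) = \Pos(s)$) is exactly the justification the paper leaves implicit in its opening ``note that $s^{\bullet}\sigma = s$.''
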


\begin{restatable}{lemma}{LemmaIVxx}
\label{lem:bullet translation and substitution}
Let $X$ be a set of variables, $s$ a term, and $\ell$ a linear term
such that $s|_p = \ell\gamma$ with $\Dom(\gamma) = \Var(\ell)$
for a substitution $\gamma$.
If $\ell$ is value-free 
then $s^{\bullet_{X}}|_p = \ell\gamma^{\bullet_{X}}$.
\end{restatable}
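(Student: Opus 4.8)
The plan is to prove Lemma~\ref{lem:bullet translation and substitution} by relating the $\bullet_X$-translation on $s|_p$ to the one applied to the whole term $s$, then exploiting linearity and value-freeness of $\ell$. First I would unfold the definitions. Write $s^{\bullet_X} = s[x_1,\ldots,x_m]_{q_1,\ldots,q_m}$ where $\SET{q_1,\ldots,q_m} = \Pos_{X\cup\Val}(s)$, so that $s^{\bullet_X}|_p = (s|_p)$ with every subterm at a position from $\Pos_{X\cup\Val}(s)$ lying below or incomparable to $p$ replaced by a fresh variable. The key observation is that the positions in $\Pos_{X\cup\Val}(s)$ that lie below $p$ are exactly (after the prefix $p$ is stripped) the positions in $\Pos_{X\cup\Val}(s|_p)$; hence $s^{\bullet_X}|_p$ is obtained from $s|_p = \ell\gamma$ by replacing each subterm occurring at a position in $\Pos_{X\cup\Val}(\ell\gamma)$ by a fresh variable, matching the definition of $(s|_p)^{\bullet_X}$ up to variable names.

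Next I would analyze where the positions in $\Pos_{X\cup\Val}(\ell\gamma)$ sit relative to the variable positions $\Pos_{\Var(\ell)}(\ell)$ of the pattern $\ell$. Because $\ell$ is value-free, no value of $\ell$ itself contributes to $\Pos_\Val(\ell\gamma)$; every position in $\Pos_{X\cup\Val}(\ell\gamma)$ must therefore lie at or below a variable position of $\ell$, i.e.\ inside some $\gamma(x)$ with $x \in \Var(\ell)$. Concretely, for $p' \in \FPos(\ell)$ the symbol $\ell\gamma(p')=\ell(p')$ is a non-value function symbol and thus not in $X\cup\Val$; and $X$-variables cannot occur in $\ell$ at all since $\Var(\ell)\cap\Var(s,\varphi)=\varnothing$ forces $\Var(\ell)$ disjoint from $X\subseteq\Var(s)$. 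This is the crux: value-freeness guarantees that the $\bullet$-replacement never ``cuts into'' the rigid pattern part $\ell[\,]$, only into the substituted parts $\gamma(x)$.

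With this structural fact in hand, the computation becomes a bookkeeping argument. For each $x\in\Var(\ell)$ at position $p'$ in $\ell$, the translated subterm $(\ell\gamma)^{\bullet_X}|_{p'}$ equals $\gamma(x)^{\bullet_X}$, which by \Cref{def:bullet translation} is exactly $\gamma^{\bullet_X}(x)$. Since the pattern positions of $\ell$ are untouched by the translation, I can reassemble $s^{\bullet_X}|_p = \ell[\gamma^{\bullet_X}(x_1),\ldots]_{\ldots} = \ell\gamma^{\bullet_X}$, using linearity of $\ell$ to ensure each variable occurs once so the substitution is well-defined and consistent. Here \Cref{lem:bullet translation and back} may be invoked to argue that the fresh variables introduced on the $s$-side and on the $s|_p$-side agree up to the common instantiation, or I would simply fix a single global choice of fresh variables $x_1,\ldots,x_m$ so the two translations coincide on the nose.

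The main obstacle I anticipate is the careful matching of the fresh variables and positions between the two $\bullet$-translations: $s^{\bullet_X}$ is defined via $\Pos_{X\cup\Val}(s)$ while the claimed right-hand side $\ell\gamma^{\bullet_X}$ is built from $\gamma(x)^{\bullet_X}$, each computed via its own position set $\Pos_{X\cup\Val}(\gamma(x))$. I must verify that the disjoint union of the per-argument position sets (shifted by the pattern positions $p'$) exactly reconstitutes $\Pos_{X\cup\Val}(\ell\gamma)$ restricted below $p$, with no double counting and no omissions---and that the fresh-variable naming can be chosen coherently across both sides. Value-freeness of $\ell$ is precisely what rules out the degenerate case where a position in $\Pos_{X\cup\Val}(s|_p)$ falls strictly inside the pattern $\ell$ (which would break the decomposition), so I would emphasize that step and keep the remaining index manipulation routine.
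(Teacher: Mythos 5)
Your proposal is correct, but it takes a more direct, position-based route than the paper. The paper proves the statement by structural induction on the subterms $u$ of $\ell$, establishing $(u\gamma)^{\bullet_X} = u\gamma^{\bullet_X}$: the variable case is immediate from the definition of $\gamma^{\bullet_X}$, and the case $u = f(u_1,\ldots,u_n)$ uses value-freeness ($f \notin \Val$) to invoke \Cref{lem:bullet translation and contexts}, which gives $f(u_1,\ldots,u_k)^{\bullet_X} = f(u_1^{\bullet_X},\ldots,u_k^{\bullet_X})$ for subterms with non-value root; the lemma then follows by taking $u = \ell$. You instead restrict $\Pos_{X\cup\Val}(s)$ below $p$, observe it coincides with $\Pos_{X\cup\Val}(\ell\gamma)$, and argue that value-freeness forces every replaced position to lie at or below a variable position of $\ell$, so the rigid pattern is untouched and each $\gamma(x)$ becomes $\gamma(x)^{\bullet_X} = \gamma^{\bullet_X}(x)$, after which linearity lets you reassemble $\ell\gamma^{\bullet_X}$. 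This essentially re-proves the contexts lemma inline, trading the paper's prepared lemma plus induction for explicit position bookkeeping; the crux (replacements never cut into the pattern part of $\ell$) is the same in both. Two remarks. First, the fresh-variable coherence you flag as the main obstacle is a non-issue: \Cref{def:bullet translation} defines $u^{\bullet_X}$ for a subterm $u = s|_q$ as $s^{\bullet_X}|_q$ and $\gamma^{\bullet_X}(x)$ as $\gamma(x)^{\bullet_X}$, so every translation is carved out of the single global translation $s^{\bullet_X}$ rather than computed from per-argument position sets --- this is precisely your ``single global choice'', and linearity of $\ell$ is what makes $\gamma^{\bullet_X}$ well defined. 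Second, your appeal to $\Var(\ell) \cap \Var(s,\varphi) = \varnothing$ is not licensed here --- this lemma mentions no constraint $\varphi$ and carries no disjointness hypothesis --- but it is also unnecessary: since $\Dom(\gamma) = \Var(\ell)$, every variable of $\ell$ is substituted away in $\ell\gamma$, so whether $\Var(\ell)$ meets $X$ is irrelevant; only the function-symbol positions of $\ell$ need the non-value argument, and there value-freeness alone suffices.
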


\begin{restatable}{lemma}{LemmaIVxxi}
\label{lem:bullet translation and validity}
Let $X$ be a set of variables and $s$ a term.
Suppose that $\Pos_{X \cup \Val}(s)= \SET{p_1,\ldots,p_n}$
and $s^{\bullet_X} = s[x_1,\ldots,x_n]_{p_1,\ldots,p_n}$.
Then, for any subterm $u$ of $s$, we have
$\vDash_{\xM} (\bigwedge_{i=1}^n (s(p_i) = x_i)) \Rightarrow u = u^{\bullet_X}$.
\end{restatable}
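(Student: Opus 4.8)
The plan is to derive this validity statement as the semantic counterpart of the purely syntactic identity established in \Cref{lem:bullet translation and back}. First I would fix an arbitrary valuation $\rho$ over $\xM$ that satisfies the premise $\bigwedge_{i=1}^n (s(p_i) = x_i)$. Since every $p_i \in \Pos_{X \cup \Val}(s)$ is a leaf position (its symbol lies in $X \cup \Val$, i.e.\ is a variable or a value), the subterm $s|_{p_i}$ coincides with the symbol $s(p_i)$, so the premise reads $\inter{s|_{p_i}}_{\xM,\rho} = \rho(x_i)$ for each $i$. The goal then reduces to showing $\inter{u}_{\xM,\rho} = \inter{u^{\bullet_X}}_{\xM,\rho}$ for every subterm $u$ of $s$.

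By \Cref{lem:bullet translation and back}, taking $\sigma = \SET{x_i \mapsto s|_{p_i} \mid 1 \le i \le n}$, which satisfies $\Dom(\sigma) = \SET{x_1,\ldots,x_n}$ because the $x_i$ are fresh, we have $u^{\bullet_X}\sigma = u$. Applying the standard substitution lemma for interpretations gives $\inter{u}_{\xM,\rho} = \inter{u^{\bullet_X}\sigma}_{\xM,\rho} = \inter{u^{\bullet_X}}_{\xM,\rho'}$, where $\rho'$ is the valuation defined by $\rho'(y) = \inter{\sigma(y)}_{\xM,\rho}$. The crux is that the premise forces $\rho' = \rho$: for each $x_i$ we have $\rho'(x_i) = \inter{s|_{p_i}}_{\xM,\rho} = \rho(x_i)$, while for every variable $y \notin \SET{x_1,\ldots,x_n}$ we have $\sigma(y) = y$ and hence $\rho'(y) = \rho(y)$. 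Consequently $\inter{u^{\bullet_X}}_{\xM,\rho'} = \inter{u^{\bullet_X}}_{\xM,\rho}$, yielding $\inter{u}_{\xM,\rho} = \inter{u^{\bullet_X}}_{\xM,\rho}$. As $\rho$ was an arbitrary model of the premise, the implication is valid.

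Alternatively, and avoiding any appeal to a separate substitution lemma, one may prove the interpretation identity directly by structural induction on the position $p$ of $u = s|_p$. If $p \in \Pos_{X\cup\Val}(s)$, say $p = p_i$, then $u^{\bullet_X} = x_i$ by the definition in \Cref{def:bullet translation}, and the premise gives $\inter{u}_{\xM,\rho} = \inter{s|_{p_i}}_{\xM,\rho} = \rho(x_i) = \inter{x_i}_{\xM,\rho}$. Otherwise $u = f(u_1,\ldots,u_k)$ with $u^{\bullet_X} = f(u_1^{\bullet_X},\ldots,u_k^{\bullet_X})$, since no replacement occurs at the root of $u$, and the claim follows from the induction hypotheses on the $u_j$ together with the compositionality of $\inter{\cdot}_{\xM,\rho}$.

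I expect the only genuine subtlety to be bookkeeping rather than a deep obstacle. One must interpret the equality $u = u^{\bullet_X}$ for subterms $u$ of arbitrary, possibly non-theory, sort, so the argument implicitly relies on $\inter{\cdot}_{\xM,\rho}$ being defined compositionally on all terms and on equality being congruent; in the inductive presentation this manifests as the base case being discharged by the premise and the step case by congruence of $f$. Everything else amounts to checking that the valuation induced by $\sigma$ agrees with $\rho$ exactly on the fresh variables $x_1,\ldots,x_n$, which is precisely what the antecedent of the implication guarantees.
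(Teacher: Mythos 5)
Your proposal is correct, and your primary argument takes a genuinely different route from the paper's. The paper proves the statement by a fresh structural induction on the subterms $u$ of $s$, with three cases: $u$ at a position of $\Pos_{X\cup\Val}(s)$ (discharged directly by the premise, since there $u^{\bullet_X} = x_i$), $u$ a variable in $\Var(s)\setminus X$ (where $u^{\bullet_X} = u$ and the claim is trivial), and $u = f(u_1,\ldots,u_k)$ with $f \notin \Val$ (discharged by the induction hypothesis, congruence of equality, and \Cref{lem:bullet translation and contexts}). Your primary route instead lifts the purely syntactic identity $u^{\bullet_X}\sigma = u$ of \Cref{lem:bullet translation and back} to the semantic level via the standard substitution/valuation-composition lemma, observing that the premise forces the induced valuation $\rho'$ to agree with $\rho$ on the fresh variables $x_i$ and that $\sigma$ is the identity elsewhere. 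This is more modular---it reuses work already done and needs no new induction---but it commits to $\inter{\cdot}_{\xM,\rho}$ being a genuine compositional interpretation function on \emph{all} terms, including term-sorted ones; since the paper's model $\xM$ is only given over $\SigmaTh$, equality between term-sorted terms is really only available via congruence, which is exactly the weaker principle that the paper's induction (and your alternative sketch) relies on, so you correctly flag this as the one foundational subtlety. Your alternative sketch coincides with the paper's proof, except that your ``otherwise'' case $u = f(u_1,\ldots,u_k)$ silently skips bare variables $y \in \Var(s)\setminus X$, which are not function applications; that case is trivial (there $u^{\bullet_X} = u$) but should be stated separately, as the paper does.
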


Below we use yet another specific characterization of
equivalence given as follows, whose proof is given in the appendix.

\begin{restatable}{lemma}{LemmaIVxii}
\label{lem:equivalence by mapping}
Let $\CTerm{X}{s}{\vec{x}}{\varphi},\CTerm{Y}{t}{\vec{y}}{\psi}$ be satisfiable existentially constrained terms.
Suppose 
(1) $\sigma\colon V \to \xV  \cup \Val$ with 
$V \subseteq X$ and 
(2) $s\sigma = t$,
(3) 
for any $x,y \in V$,
$\sigma(x) = \sigma(y)$ implies $\vDash_\xM (\ECO{\vec{x}}{\varphi}) \Rightarrow x = y$.
(4)
$X \setminus V = Y \setminus \sigma(V)$ and
(5)
$\sigma(V) \cap \xV \subseteq Y$,
(6) 
$\FVar((\ECO{\vec{x}}{\varphi})\sigma) = \FVar(\ECO{\vec{y}}{\psi})$
and 
$\vDash_\xM (\ECO{\vec{x}}{\varphi})\sigma \Leftrightarrow (\ECO{\vec{y}}{\psi})$.
Then 
$\CTerm{X}{s}{\vec{x}}{\varphi} \sim \CTerm{Y}{t}{\vec{y}}{\psi}$.
\end{restatable}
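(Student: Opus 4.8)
The plan is to prove the equivalence $\CTerm{X}{s}{\vec{x}}{\varphi} \sim \CTerm{Y}{t}{\vec{y}}{\psi}$ directly from the definition, i.e.\ by establishing the two subsumptions $\subsetsim$ and $\supsetsim$ separately. In both directions I would use $\sigma$ to \emph{transport} a witnessing substitution from one side to the other: the single term equation $s\sigma = t$ from~(2) is what matches up the term parts, while the biconditional of~(6) carries over satisfaction of the constraint. Conditions~(4) and~(5) are the bookkeeping needed to check that the transported substitution has the correct valued-ness ($Y$-valued, resp.\ $X$-valued), and~(3) guarantees that the transport is well defined when $\sigma$ identifies variables.

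For the $\supsetsim$ direction I would start from an arbitrary $Y$-valued $\beta$ with $\beta \vDash_\xM \ECO{\vec{y}}{\psi}$ and define $\alpha$ by $\alpha(x) = \sigma(x)\beta$ for $x \in V$ and $\alpha(x) = \beta(x)$ for the remaining variables of $s$. Then $\alpha$ is $X$-valued: for $x \in V$ the term $\sigma(x)\beta$ is either a value (when $\sigma(x) \in \Val$) or $\beta(w)\in\Val$ for $w = \sigma(x) \in Y$ by~(5), and for $x \in X \setminus V$ we use $X\setminus V = Y \setminus \sigma(V) \subseteq Y$ from~(4) together with $\beta$ being $Y$-valued. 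The term equation is then immediate, since $\alpha$ agrees with ``first $\sigma$, then $\beta$'' on all of $\Var(s)$ (as $\sigma$ fixes $\Var(s)\setminus V$), whence $s\alpha = s\sigma\beta = t\beta$ by~(2). Finally $\alpha \vDash_\xM \ECO{\vec{x}}{\varphi}$ follows by rewriting $(\ECO{\vec{x}}{\varphi})\alpha$ as $((\ECO{\vec{x}}{\varphi})\sigma)\beta$ and invoking the biconditional and the free-variable equality of~(6) against $\beta \vDash_\xM \ECO{\vec{y}}{\psi}$. Note that here no forcing is needed, because $\alpha$ is \emph{defined} to factor through $\sigma$.

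The $\subsetsim$ direction is the reverse transport, and this is where the work concentrates. Given an $X$-valued $\alpha$ with $\alpha \vDash_\xM \ECO{\vec{x}}{\varphi}$, I would set $\gamma(\sigma(x)) := \alpha(x)$ for those $x \in V$ with $\sigma(x) \in \xV$, and $\gamma(w) := \alpha(w)$ for $w \in Y \setminus \sigma(V) = X \setminus V$. Well-definedness of the first clause is exactly where~(3) enters: if $\sigma(x) = \sigma(x')$ then $\varphi$ forces $x = x'$, and since $\alpha \vDash_\xM \ECO{\vec{x}}{\varphi}$ this yields $\alpha(x) = \alpha(x')$. That $\gamma$ is $Y$-valued again follows from~(5) and~(4), and $\gamma \vDash_\xM \ECO{\vec{y}}{\psi}$ follows from~(6) as before, \emph{provided} $\alpha$ agrees with ``first $\sigma$, then $\gamma$'' on $\Var(\varphi)\cap X$.

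The hard part will be verifying the term equation $s\alpha = t\gamma = s\sigma\gamma$ pointwise on the variable positions of $s$, together with the just-mentioned agreement on $\Var(\varphi)$. At a position carrying $x \in V$ with $\sigma(x) \in \xV$ the two sides match by construction of $\gamma$, and positions outside $V$ are fixed by both; the delicate case is a position carrying $x \in V$ with $\sigma(x) = c \in \Val$, where $t$ already holds the constant $c$ while $s\alpha$ holds $\alpha(x)$. One must therefore show $\alpha(x) = c$, i.e.\ that the given $\alpha$ factors through $\sigma$ even on the value-images, which amounts to the constraint $\ECO{\vec{x}}{\varphi}$ \emph{forcing} each $V$-variable to its $\sigma$-image. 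This forcing property is precisely the content that~(3) must supply (in its value-image instance), and it is the crux of the argument; I expect this to be the only genuinely non-routine point, since $\supsetsim$ avoids it by definition. Once $\alpha(x) = \sigma(x)$ is established for all $x \in V$, the pointwise comparison closes $s\alpha = t\gamma$ and the constraint transfer, and the two subsumptions together yield the claimed equivalence.
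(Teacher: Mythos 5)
Your two transports are exactly the ones in the paper's proof: for $\supsetsim$ the paper pulls the given $Y$-valued substitution back through $\sigma$ precisely as you do, and for $\subsetsim$ it defines the transported substitution on $\sigma(V) \cap \xV$ by sending $\sigma(x)$ to the value of $x$, invoking (3) for well-definedness; your $\supsetsim$ half is complete and correct. The genuine gap is at the point you yourself call the crux. For $x \in V$ with $\sigma(x) = c \in \Val$ you must show that the given $X$-valued substitution $\alpha$ satisfies $\alpha(x) = c$, and you assert this is ``precisely the content that (3) must supply (in its value-image instance)''. Condition (3) has no such instance: it quantifies only over pairs $x,y \in V$ of variables and concludes that the constraint forces $x = y$; it says nothing at all about a variable whose $\sigma$-image is a value. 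So $\alpha(x) = \sigma(x)$ is never established, and your $\subsetsim$ direction does not close.

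Worse, no reading of (1)--(6) can close it, because the statement as given is actually false. Take $\CTerm{\SET{x}}{\m{f}(x)}{}{x > \m{0}}$ and $\CTerm{\varnothing}{\m{f}(\m{1})}{}{\m{true}}$ with $V = \SET{x}$ and $\sigma = \SET{x \mapsto \m{1}}$: conditions (1)--(6) all hold --- (3) holds trivially, $(x > \m{0})\sigma = (\m{1} > \m{0})$ and $\m{true}$ are both closed and valid, and (4), (5) reduce to $\varnothing = \varnothing$ --- yet $\m{f}(\m{2})$ is an instance of the first term and not of the second, so $\subsetsim$ fails. The paper's own proof buries exactly this hole in the sentence ``by definition, we have $x\sigma\delta = x\xi$ for any $x \in V$'', which is wrong when $\sigma(x) \in \Val$: there $x\sigma\delta = \sigma(x)$, while $x\xi$ can be any value the constraint permits. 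What is missing is the extra hypothesis that the constraint forces each value image, i.e.\ $\vDash_\xM (\ECO{\vec{x}}{\varphi}) \Rightarrow (x = \sigma(x))$ for every $x \in V$ with $\sigma(x) \in \Val$. This hypothesis does hold in the lemma's sole application (there the constraint $\psi'$ contains the explicit conjuncts $s|_{p_i} = x_i$), and once it is added both your argument and the paper's go through verbatim. So you located the delicate step more honestly than the paper does, but the resolution you propose for it is not correct, and with the hypotheses as stated no resolution exists.
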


We show a key lemma for the theorem that follows.

\begin{restatable}{lemma}{LemmaIVxxii}
\label{lem:equivalence to pattern-general form by left-value-free rules}
Let $\rho$ be a left-value-free constrained rewrite rule.
If $\CTerm{X}{s}{\vec{x}}{\varphi} \R_\rho \CTerm{Y}{t}{\vec{y}}{\psi}$
then $\PG(\CTerm{X}{s}{\vec{x}}{\varphi}) 
\R_\rho \CTerm{Y'}{t'}{\pvec{y}'}{\psi'} \sim  \CTerm{Y}{t}{\vec{y}}{\psi}$
for some $\CTerm{Y'}{t'}{\pvec{y}'}{\psi'}$.
\end{restatable}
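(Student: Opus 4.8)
The plan is to construct explicitly the matching rewrite step on $\PG(\CTerm{X}{s}{\vec{x}}{\varphi})$ by means of the $\bullet$-translation, and then to establish the required equivalence of the two reducts via \Cref{lem:equivalence by mapping}. First I would unfold the hypothesis $\CTerm{X}{s}{\vec{x}}{\varphi} \R^p_{\rho,\gamma} \CTerm{Y}{t}{\vec{y}}{\psi}$, with $\rho\colon \CRu{Z}{\ell}{r}{\pi}$, into its defining data: the four $\rho$-redex conditions together with $t = s[r\gamma]_p$, $\psi = \varphi \land \pi\gamma$, $\SET{\vec{y}} = \Var(\psi)\setminus\Var(t)$, and $Y = \ExVar(\rho)\cup(X\cap\Var(t))$. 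I would then write $\PG(\CTerm{X}{s}{\vec{x}}{\varphi}) = \CTerm{W}{s^{\bullet_{X}}}{\vec{b}}{\hat\varphi}$, where $s^{\bullet_{X}} = s[w_1,\ldots,w_n]_{p_1,\ldots,p_n}$ with $\Pos_{X\cup\Val}(s) = \SET{p_1,\ldots,p_n}$, $W = \SET{w_1,\ldots,w_n}$, $\SET{\vec{b}} = \SET{\vec{x}}\cup X$, and $\hat\varphi = \varphi \land \bigwedge_{i=1}^n (s|_{p_i} = w_i)$.

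The first main step is to show that this pattern-general term has a $\rho$-redex at $p$ using the substitution $\gamma^{\bullet_{X}}$. Here the left-value-freeness of $\rho$ is essential: since $\Val(\ell) = \varnothing$, no function-symbol position of $\ell$ is touched by the $\bullet$-translation, so by \Cref{lem:bullet translation and substitution} we obtain $s^{\bullet_{X}}|_p = \ell\gamma^{\bullet_{X}}$, which is exactly the matching condition (this is precisely where \Cref{ex:non-commutation example} fails without left-value-freeness). The condition $\Dom(\gamma^{\bullet_{X}}) = \Var(\ell)$ is immediate. For $x \in \Var(\ell)\cap Z$ I would argue that $\gamma(x) \in \Val \cup X$ occurs in $s$ at a position in $\Pos_{X\cup\Val}(s)$, whence $\gamma^{\bullet_{X}}(x) = \gamma(x)^{\bullet_{X}}$ is one of the fresh variables of $W$, so $\gamma^{\bullet_{X}}(x) \in \Val \cup W$. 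The validity condition $\vDash_\xM \ECO{\vec{b}}{\hat\varphi} \Rightarrow \ECO{\vec{z}}{\pi\gamma^{\bullet_{X}}}$ follows from condition~(4) together with \Cref{lem:bullet translation and validity}: under the equations $\bigwedge_i (s|_{p_i} = w_i)$ present in $\hat\varphi$ each $\gamma(x)$ equals its translate $\gamma(x)^{\bullet_{X}}$, making $\pi\gamma$ and $\pi\gamma^{\bullet_{X}}$ interchangeable modulo $\hat\varphi$. This yields the reduct $\CTerm{Y'}{t'}{\pvec{y}'}{\psi'}$ with $t' = s^{\bullet_{X}}[r\gamma^{\bullet_{X}}]_p$, $\psi' = \hat\varphi \land \pi\gamma^{\bullet_{X}}$, $\SET{\pvec{y}'} = \Var(\psi')\setminus\Var(t')$, and $Y' = \ExVar(\rho)\cup(W\cap\Var(t'))$.

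The second main step is to prove $\CTerm{Y'}{t'}{\pvec{y}'}{\psi'} \sim \CTerm{Y}{t}{\vec{y}}{\psi}$ by applying \Cref{lem:equivalence by mapping} with $V = W\cap\Var(t') = Y'\cap W$ and the ``back'' substitution $\sigma = \SET{ w_i \mapsto s|_{p_i} \mid w_i \in V }$. Condition~(2), $t'\sigma = t$, follows from \Cref{lem:bullet translation and back} applied to $s$ and to each $\gamma(x)$, noting that the variables in $\ExVar(\rho)$ are untouched by $\sigma$. Condition~(3) holds because $\psi'$ contains the defining equations $s|_{p_i} = w_i$, so whenever $\sigma(w_i) = \sigma(w_j)$, i.e.\ $s|_{p_i} = s|_{p_j}$ syntactically, we get $\vDash_\xM \ECO{\pvec{y}'}{\psi'} \Rightarrow w_i = w_j$. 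For conditions~(4) and~(5) I would perform a case analysis on where a surviving logical variable of $X$ occurs in $t$ --- either in the context $s[\,]_p$ or inside the contractum $r\gamma$ --- and in each case exhibit the corresponding $w_i \in V$ with $\sigma(w_i)$ equal to it; this establishes $X\cap\Var(t)\subseteq\sigma(V)$, hence $Y\setminus\sigma(V) = \ExVar(\rho) = Y'\setminus V$, together with $\sigma(V)\cap\xV\subseteq Y$.

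The delicate part, and the step I expect to be the main obstacle, is condition~(6): the free-variable equality $\FVar((\ECO{\pvec{y}'}{\psi'})\sigma) = \FVar(\ECO{\vec{y}}{\psi})$ and the equivalence $\vDash_\xM (\ECO{\pvec{y}'}{\psi'})\sigma \Leftrightarrow \ECO{\vec{y}}{\psi}$. Since $\sigma$ acts only on the free logical variables $V \subseteq Y'$, the bound variables $\pvec{y}'$ --- which include precisely those $w_i \notin \Var(t')$ as well as the witnesses $\SET{\vec{x}}\cup X$ inherited from $\PG$ --- remain existentially quantified, so I must show that these residual existentials can always be discharged using their defining equations $s|_{p_i} = w_i$ (again via \Cref{lem:bullet translation and validity}), collapsing $\psi'\sigma$ back to $\varphi \land \pi\gamma = \psi$ modulo the bound witnesses. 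Carefully bookkeeping which $w_i$ are free and which are bound, and confirming that the two existential closures line up, is the technically demanding portion; the bound witnesses from $\PG$ are ultimately absorbed by the already-established fact that $\PG$ preserves equivalence of the source term.
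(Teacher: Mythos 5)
Your proposal is correct and follows essentially the same route as the paper's own proof: it constructs the $\rho$-redex on $\PG(\CTerm{X}{s}{\vec{x}}{\varphi})$ via the translated substitution $\gamma^{\bullet_{X}}$ using \Cref{lem:bullet translation and substitution,lem:bullet translation and validity} (with left-value-freeness entering exactly where you say), and then establishes the equivalence of reducts by \Cref{lem:equivalence by mapping} with the same choice of $V$ and back-substitution $\sigma = \SET{w_i \mapsto s|_{p_i} \mid w_i \in V}$, checking conditions (C1)--(C6) as the paper does (including \Cref{lem:bullet translation and back} for $t'\sigma = t$). The only differences are presentational: the paper's verification of the implication $(4')$ and of condition (C6) is carried out by an explicit valuation/syntactic computation rather than your higher-level summary, but your sketch identifies the right lemmas and the right delicate points.
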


\begin{proof}
Let $p \in \Pos(s)$ and 
$\rho\colon \CRu{Z}{\ell}{r}{\pi}$ be a left-linear constrained rewrite rule
such that $\Val(\ell) = \varnothing$ and
$\Var(\rho) \cap \Var(\varphi,s) = \varnothing$.
Suppose $\CTerm{X}{s}{\vec{x}}{\varphi} \R^p_{\rho,\gamma} \CTerm{Y}{t}{\vec{y}}{\psi}$.
Then we have~%
(1) $\Dom(\gamma) = \Var(\ell)$,~%
(2) $s|_p = \ell\gamma$,~%
(3) $\gamma(x) \in \Val \cup X$ for any $x \in \Var(\ell) \cap Z$, and~%
(4) $\vDash_\xM (\ECO{\vec{x}}{\varphi}) \Rightarrow (\ECO{\vec{z}}{\pi\gamma})$,
where $\SET{\vec{z}} = \Var(\pi) \setminus \Var(\ell)$,
$t = s[r\gamma]$,
$\psi = \varphi \land \pi\gamma$,
$\SET{\vec{y}} = \Var(\psi) \setminus \Var(t)$, and
$Y = \ExVar(\rho) \cup (X \cap \Var(t))$.

Using the notation of \Cref{def:bullet translation},
define 
$\PG(\CTerm{X}{s}{\vec{x}}{\varphi}) = \CTerm{X'}{s^{\bullet_{X}}}{\pvec{x}'}{\varphi'}$,
i.e.\
$s^{\bullet_{X}} = s[x_1,\ldots,x_n]_{p_1,\ldots,p_n}$
for set of positions $\SET{p_1,\ldots,p_n} = \Pos_{X\cup\Val}(s)$,
some pairwise distinct fresh variables $x_1,\ldots,x_n$,
$X' = \SET{x_1,\ldots,x_n}$
and $\SET{\pvec{x}'} = \SET{\vec{x}} \cup X$
and 
$\varphi' = (\varphi \land \bigwedge_{i=1}^n (s|_{p_i} = x_i))$.

We first show 
$\CTerm{X'}{s^\bullet}{\pvec{x}'}{\varphi'}$
has a $\rho$-redex at $p$ by $\gamma^\bullet$.
Clearly,
(1') $\Dom(\gamma^\bullet) = \Dom(\gamma) = \Var(\ell)$.
By \Cref{lem:bullet translation and substitution},
we have~(2') $s^\bullet|_p = \ell\gamma^\bullet$.
Suppose $x \in \Var(\ell) \cap Z$.
Then $\gamma(x) \in \Val \cup X$,
which implies $\gamma(x) = s|_{p_i}$ for some $1 \le i \le n$.
Then, 
$\gamma^\bullet(x) 
= \gamma(x)^\bullet
= (s|_{p_i})^\bullet
= s^\bullet|_{p_i}
= (s[x_1,\ldots,x_n]_{p_1,\ldots,p_n})|_{p_i}
= x_i \in X' \subseteq \Val \cup X'$.
Thus,~(3') 
$\gamma^\bullet(x) \in \Val \cup X'$ for any $x \in \Var(\ell) \cap Z$
follows.
It remains to show~%
(4') $\vDash_\xM (\ECO{\pvec{x}'}{\varphi'}) \Rightarrow (\ECO{\vec{z}}{\pi\gamma^\bullet})$.

For this, 
let $\vDash_{\xM,\xi} \ECO{\pvec{x}'}{\varphi'}$
for a valuation $\xi$.
This means,
$\vDash_{\xM,\xi} \ECO{\vec{x},\pvec{x}''}{\varphi \land \bigwedge_{i=1}^n (s|_{p_i} = x_i))}$,
where $\SET{\pvec{x}''}  = X$.
By \Cref{lem:bullet translation and validity},
$\vDash_{\xM} 
\bigwedge_{i=1}^n (s(p_i) = x_i)) \Rightarrow (\gamma(x) = \gamma(x)^\bullet)$
holds for any $x \in \xV$.
For, it trivially follows if $x \notin \Dom(\gamma)$,
and otherwise, $\gamma(x)$ is a subterm of $s$,
and thus \Cref{lem:bullet translation and validity} applies.
Note that $s(p_i) = s|_{p_i}$ for all $1 \leqslant i \leqslant n$.
This implies that 
$\vDash_{\xM,\xi} \ECO{\vec{x},\pvec{x}''}{\varphi 
\land (\bigwedge_{i=1}^n (s|_{p_i} = x_i))
\land (\bigwedge_{x \in \Var(\pi)}(\gamma(x) = \gamma(x)^\bullet))}$.
Hence, we obtain
$\vDash_{\xM,\xi} \ECO{\vec{x},\pvec{x}''}{\varphi \land 
(\bigwedge_{i=1}^n (s|_{p_i} = x_i)) \land (\pi\gamma \Leftrightarrow \pi\gamma^\bullet)}$.
Now, by definition,
there exists a sequence $\vec{v}\in \Val^*$ 
such that
$\vDash_{\xM,\xi} (\ECO{\vec{x}}{\varphi \land 
(\bigwedge_{i=1}^n (s|_{p_i} = x_i)) \land (\pi\gamma \Leftrightarrow \pi\gamma^\bullet)})\kappa$,
where $\kappa = \SET{\pvec{x}'' \mapsto \vec{v}}$.
By $\Dom(\kappa) = \SET{\pvec{x}''} = X$, it follows that
$\vDash_{\xM,\xi} \ECO{\vec{x}}{\varphi\kappa \land 
(\bigwedge_{i=1}^n (s|_{p_i}\kappa = x_i)) \land (\pi\gamma\kappa = \pi\gamma^\bullet)}$.
In particular, 
$\vDash_{\xM,\xi} \ECO{\vec{x}}{\varphi\kappa}$ holds and then
by $\Dom(\kappa) \cap \SET{\vec{x}} = \varnothing$,
we have $\vDash_{\xM,\xi} (\ECO{\vec{x}}{\varphi})\kappa$,
i.e.,  $\vDash_{\xM,\xi \circ \kappa} \ECO{\vec{x}}{\varphi}$.
Then, by~(4), it follows $\vDash_{\xM,\xi \circ \kappa} \ECO{\vec{z}}{\pi\gamma}$,
i.e., $\vDash_{\xM,\xi} (\ECO{\vec{z}}{\pi\gamma})\kappa$.
As $\SET{\vec{z}} \cap X = \varnothing$,
this implies $\vDash_{\xM,\xi} \ECO{\vec{z}}{\pi\gamma\kappa}$.
Moreover, by $\SET{\vec{x}} \cap \Var(\pi\gamma\kappa,\pi\gamma^\bullet) = \varnothing$,  
from $\vDash_{\xM,\xi} \ECO{\vec{x}}{(\pi\gamma\kappa \Leftrightarrow \pi\gamma^\bullet)}$
it follows that $\vDash_{\xM,\xi} (\pi\gamma\kappa \Leftrightarrow \pi\gamma^\bullet)$.
Thus, 
we obtain $\vDash_{\xM,\xi} \ECO{\vec{z}}{\pi\gamma^\bullet}$
and have proven~(4').

Hence, we obtain the rewrite step
$\CTerm{X'}{s^\bullet}{\pvec{x}'}{\varphi'}
\R^p_{\rho,\gamma^\bullet} \CTerm{Y'}{t'}{\pvec{y}'}{\psi'}$,
where
$t' = s^\bullet[r\gamma^\bullet]$,
$\psi' = \varphi' \land \pi\gamma^\bullet$,
$\SET{\pvec{y}'} = \Var(\psi') \setminus \Var(t')$, and
$Y' = \ExVar(\rho) \cup (X' \cap \Var(t'))$.

It remains to show 
$\CTerm{Y'}{t'}{\pvec{y}'}{\psi'} \sim \CTerm{Y}{t}{\vec{y}}{\psi}$.
For this, we use \Cref{lem:equivalence by mapping}
by taking
$\sigma = \SET{ x_i \mapsto s|_{p_i} \mid 1 \le i \le n, x_i \in Y' }$
and $V = \SET{ x_i \mid 1 \le i \le n, x_i \in Y' }$.
We refer to the conditions~(1)--(6) of \Cref{lem:equivalence by mapping} by~(C1)--(C6).
Clearly, (C1) $\sigma\colon V \to \xV  \cup \Val$ with $V \subseteq Y'$.
(C2) holds as $t'\sigma 
= s^\bullet[r\gamma^\bullet]\sigma 
= s^\bullet\sigma [r\gamma^\bullet\sigma]
= s[r\gamma] = t$,
using \Cref{lem:bullet translation and back}.
Let $x_i,x_j \in Y'$ and $\sigma(x_i) = \sigma(x_j)$.
Then, $s|_{p_i} = s|_{p_j}$ and hence $\vDash_{\xM} \psi' \Rightarrow (s|_{p_i} = s|_{p_j})$.
By $x_i,x_j \in Y'$, we know that $x_i,x_j \notin \SET{\pvec{y}'}$,
and hence it follows $\vDash_{\xM} \ECO{\pvec{y}'}{\psi'} \Rightarrow (x_i = x_j)$.
From this we have~(C3).
Note that $Y' = \ExVar(\rho) \cup (X' \cap \Var(t'))
= \ExVar(\rho) \cup (\{ x_1,\ldots,x_n \} \cap \Var(t'))
$ and hence $V = \{ x_1,\ldots,x_n \} \cap \Var(t')$
and that $Y = \ExVar(\rho) \cup (\{ s|_{p_1},\ldots, s|_{p_n} \} 
\cap X \cap \Var(t))$.
Since $Y'\setminus V = \ExVar(\rho) = Y \setminus \sigma(V)$,
we have~(C4).
For~(C5), let $x \in Y'$ such that $\sigma(x) \in \xV$.
If $x \in V$ then $\sigma(x) \in Y$.
Otherwise $\sigma(x) = x$, and by $s\sigma = t$,
we know $x \in \Var(t)$. Hence $x \in \ExVar(\rho) \subseteq Y$.
Thus,~(C5) holds.
In order to show (C6), note first that 
$\FVar(\ECO{\pvec{y}'}{\psi'}) = \Var(\psi') \cap \Var(t')$ and
$\FVar(\ECO{\vec{y}}{\psi}) = \Var(\psi) \cap \Var(t)$.
Note that $\psi'\sigma = (\varphi' \land \pi\gamma^\bullet)\sigma 
= (\varphi \land \bigwedge_i (s|_{p_i} = x_i) \land \pi\gamma^\bullet)\sigma
= (\varphi \land \bigwedge_i (s|_{p_i} = s|_{p_i}) \land \pi\gamma)$
and $\psi = (\varphi \land \pi\gamma)$,
and hence $\Var(\psi'\sigma) =\Var(\psi)$. 
By $t'\sigma = t$
it follows
$\FVar((\ECO{\pvec{y}'}{\psi'})\sigma)
= \sigma(\FVar(\ECO{\pvec{y}'}{\psi'})) 
= \FVar(\ECO{\vec{y}}{\psi})$.
Finally, 
\[
\begin{array}{@{}l@{\>}c@{\>}l@{}}
\lefteqn{\vDash_\xM (\ECO{\pvec{y}'}{\psi'})\sigma}\\
&\iff& \vDash_\xM 
(\ECO{\pvec{y}'}{ (\varphi \land (\bigwedge_{i=1}^n (s|_{p_i} = x_i))) \land \pi\gamma^\bullet})\sigma\\
&\iff& \vDash_\xM 
\ECO{\pvec{y}'}{ (\varphi \land (\bigwedge_{i=1}^n (s|_{p_i} = x_i\sigma))) \land \pi\gamma^\bullet\sigma}\\
&\iff& \vDash_\xM \ECO{\pvec{y}'}{(\varphi \land (\bigwedge_{i=1}^n (s|_{p_i} = s|_{p_i})) \land \pi\gamma}\\
&\iff& \vDash_\xM \ECO{\pvec{y}'}{ (\psi \land (\bigwedge_{i=1}^n (s|_{p_i} = s|_{p_i}))}\\
&\iff& \vDash_\xM \ECO{\pvec{y}'}{\psi}
\end{array}
\]
Thus, we have $\vDash_\xM (\ECO{\pvec{y}'}{\psi'})\sigma
 \Leftrightarrow (\ECO{\pvec{y}'}{\psi})$.
Since we can eliminate bound variables which do not appear in the constraint,
we conclude $\vDash_\xM (\ECO{\pvec{y}'}{\psi'})\sigma
 \Leftrightarrow (\ECO{\vec{y}}{\psi})$.
\end{proof}

Finally, we are able to prove the main result.

\begin{restatable}[Commutation of Rewrite Steps and Equivalence by Left-Value-Free Rules]{theorem}{TheoremIVxxiv}
\label{thm:commutativity of rewrite steps and equivalence by left-value-free rules}
Let $\rho$ be a left-value-free constrained rewrite rule,
and $\CTerm{X}{s}{\vec{x}}{\varphi},
\CTerm{Y}{t}{\vec{y}}{\psi}$ be satisfiable existentially constrained terms.
If $\CTerm{X'}{s'}{\pvec{x}'}{\varphi'} \gets_\rho \CTerm{X}{s}{\vec{x}}{\varphi} \sim  \CTerm{Y}{t}{\vec{y}}{\psi}$,
then
we have $\CTerm{X'}{s'}{\pvec{x}'}{\varphi'} \sim \CTerm{Y'}{t'}{\pvec{y}'}{\psi'} \gets_\rho  \CTerm{Y}{t}{\vec{y}}{\psi}$
for some $\CTerm{Y'}{t'}{\pvec{y}'}{\psi'}$.
\end{restatable}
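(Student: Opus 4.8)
The plan is to reduce the general statement to the pattern-general commutation already established in \Cref{thm:commutativity of rewrite steps and equivalence for pattern-general terms}, using $\PG$ as a bridge and the left-value-freeness of $\rho$ to pass rewrite steps back and forth between a term and its pattern-general form. Since $\PG$ yields an equivalent pattern-general term, we have $\PG(\CTerm{X}{s}{\vec{x}}{\varphi}) \sim \CTerm{X}{s}{\vec{x}}{\varphi}$ and $\PG(\CTerm{Y}{t}{\vec{y}}{\psi}) \sim \CTerm{Y}{t}{\vec{y}}{\psi}$; combined with the hypothesis $\CTerm{X}{s}{\vec{x}}{\varphi} \sim \CTerm{Y}{t}{\vec{y}}{\psi}$ and transitivity of $\sim$, this gives the pattern-general equivalence $\PG(\CTerm{X}{s}{\vec{x}}{\varphi}) \sim \PG(\CTerm{Y}{t}{\vec{y}}{\psi})$.

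First I would transport the hypothesised step $\CTerm{X}{s}{\vec{x}}{\varphi} \R^p_\rho \CTerm{X'}{s'}{\pvec{x}'}{\varphi'}$ onto the pattern-general side. By \Cref{lem:equivalence to pattern-general form by left-value-free rules}, which requires $\rho$ to be left-value-free, there is a term $R$ with $\PG(\CTerm{X}{s}{\vec{x}}{\varphi}) \R^p_\rho R \sim \CTerm{X'}{s'}{\pvec{x}'}{\varphi'}$ at the same position $p$. Now $R \Lb[\rho] \PG(\CTerm{X}{s}{\vec{x}}{\varphi}) \sim \PG(\CTerm{Y}{t}{\vec{y}}{\psi})$ is a peak whose source and target are both pattern-general, so \Cref{thm:commutativity of rewrite steps and equivalence for pattern-general terms} applies and produces $R'$ with $\PG(\CTerm{Y}{t}{\vec{y}}{\psi}) \R^p_\rho R' \sim R$, again at position $p$.

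The crux is to pull this last step back from $\PG(\CTerm{Y}{t}{\vec{y}}{\psi})$ onto $\CTerm{Y}{t}{\vec{y}}{\psi}$ itself, i.e.\ to establish a converse of \Cref{lem:equivalence to pattern-general form by left-value-free rules} for left-value-free rules. Writing $\PG(\CTerm{Y}{t}{\vec{y}}{\psi})$ with underlying term $t^\bullet = t^{\bullet_Y}$, logical variables $x_1,\ldots,x_n$ placed at $\Pos_{Y\cup\Val}(t) = \SET{p_1,\ldots,p_n}$, and constraint $\psi^\bullet = \psi \land \bigwedge_{i=1}^n (t|_{p_i} = x_i)$, I take the redex substitution $\gamma'$ realising the step at $p$ and define $\gamma$ on $\Var(\ell)$ by $\gamma(x) = t|_{pq}$, where $q$ is the unique position of $x$ in $\ell$ (left-linearity). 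By \Cref{def:bullet translation} this gives $\gamma' = \gamma^\bullet$, and since $\rho$ is left-value-free the non-variable positions of $\ell$ match function symbols that $\bullet$ leaves unchanged, so $t$ and $t^\bullet$ agree there and $t|_p = \ell\gamma$ follows; moreover, as $\PG(\CTerm{Y}{t}{\vec{y}}{\psi})$ is value-free, each $\gamma'(x)$ with $x \in \Var(\ell) \cap Z$ must be one of the fresh variables, forcing the matched position into $\Pos_{Y\cup\Val}(t)$ and hence $\gamma(x) = t|_{pq} \in Y \cup \Val$. The genuine work is the constraint condition $\vDash_\xM (\ECO{\vec{y}}{\psi}) \Rightarrow (\ECO{\vec{z}}{\pi\gamma})$, which I would obtain by running the constraint part of the proof of \Cref{lem:equivalence to pattern-general form by left-value-free rules} in reverse: extend any valuation of $\ECO{\vec{y}}{\psi}$ by $x_i := t|_{p_i}$ so that $\psi^\bullet$ holds, apply the redex condition of the $\PG$-step to obtain $\ECO{\vec{z}}{\pi\gamma^\bullet}$, and use \Cref{lem:bullet translation and validity} to replace $\pi\gamma^\bullet$ by $\pi\gamma$ under $\bigwedge_i (t|_{p_i} = x_i)$.

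Once $\CTerm{Y}{t}{\vec{y}}{\psi}$ is shown to have this $\rho$-redex, I obtain $\CTerm{Y}{t}{\vec{y}}{\psi} \R^p_{\rho,\gamma} \CTerm{Y'}{t'}{\pvec{y}'}{\psi'}$. Applying \Cref{lem:equivalence to pattern-general form by left-value-free rules} to this step yields $\PG(\CTerm{Y}{t}{\vec{y}}{\psi}) \R^p_\rho R'' \sim \CTerm{Y'}{t'}{\pvec{y}'}{\psi'}$; since this and the earlier $\PG(\CTerm{Y}{t}{\vec{y}}{\psi}) \R^p_\rho R'$ apply renamed variants of the same rule at the same position $p$, uniqueness of reducts gives $R'' \sim R'$. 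Chaining $\CTerm{X'}{s'}{\pvec{x}'}{\varphi'} \sim R \sim R' \sim R'' \sim \CTerm{Y'}{t'}{\pvec{y}'}{\psi'}$ then finishes the proof. I expect the converse-transport step, and within it the transfer of the constraint condition, to be the main obstacle; left-value-freeness is exactly what rescues it, since — in contrast to the failure in \Cref{ex:non-commutation example} — it forces the variables of $\ell$ to match the fresh $\bullet$-variables rather than demanding literal values present in $t$.
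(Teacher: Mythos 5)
Your proposal is correct, and it actually does \emph{more} than the paper's own proof. The paper's argument coincides with your first two steps: it transports the given step to the pattern-general side via \Cref{lem:equivalence to pattern-general form by left-value-free rules}, obtaining
$\CTerm{X'}{s'}{\pvec{x}'}{\varphi'} \sim \cdot \Lb[\rho] \PG(\CTerm{X}{s}{\vec{x}}{\varphi}) \sim \PG(\CTerm{Y}{t}{\vec{y}}{\psi})$,
applies \Cref{thm:commutativity of rewrite steps and equivalence for pattern-general terms} to this peak --- and then stops, concluding with a rewrite step whose source is $\PG(\CTerm{Y}{t}{\vec{y}}{\psi})$ rather than $\CTerm{Y}{t}{\vec{y}}{\psi}$ itself. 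That is weaker than the stated conclusion, and the difference matters: the induction in \Cref{cor:commutativity of multi rewrite steps and equivalence by left-value-free rules} invokes the theorem in exactly the strong form, with the new step issuing from the equivalent term itself. The step you single out as ``the crux'' --- a converse of \Cref{lem:equivalence to pattern-general form by left-value-free rules} pulling the redex back from $\PG(\CTerm{Y}{t}{\vec{y}}{\psi})$ onto $\CTerm{Y}{t}{\vec{y}}{\psi}$ --- is precisely what the paper's proof leaves implicit, and your sketch of it is sound: left-value-freeness of $\ell$ forces each $x \in \Var(\ell) \cap Z$ to be matched against a fresh $\bullet$-variable $x_i$ (a value cannot occur in the value-free term $t^{\bullet_Y}$), so composing with the back-substitution $\SET{x_i \mapsto t|_{p_i}}$ of \Cref{lem:bullet translation and back} yields a redex substitution $\gamma$ with $\gamma^\bullet$ equal to the given one and $\gamma(x) \in Y \cup \Val$; the constraint condition $\vDash_\xM (\ECO{\vec{y}}{\psi}) \Rightarrow (\ECO{\vec{z}}{\pi\gamma})$ then transfers by extending a valuation of $\ECO{\vec{y}}{\psi}$ with $x_i := t|_{p_i}$ (plus value constants for the $Y$-variables, which are bound in $\PG(\CTerm{Y}{t}{\vec{y}}{\psi})$) and invoking \Cref{lem:bullet translation and validity}, exactly mirroring the forward direction in the proof of \Cref{lem:equivalence to pattern-general form by left-value-free rules}. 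Your closing appeal to uniqueness of reducts is also legitimate, since both steps on $\PG(\CTerm{Y}{t}{\vec{y}}{\psi})$ use variants of $\rho$ at the same position $p$ (both the cited lemma and the pattern-general commutation theorem preserve the redex position, as their proofs show). In short: the paper's route buys brevity at the price of proving, literally, only commutation into $\PG(\CTerm{Y}{t}{\vec{y}}{\psi})$; your route costs an extra converse-transport lemma but delivers the statement as written, and that lemma is worth stating and proving in full rather than in sketch form.
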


\begin{proof}
Suppose that
$
\CTerm{X'}{s'}{\pvec{x}'}{\varphi'} 
\gets_\rho 
\CTerm{X}{s}{\vec{x}}{\varphi} 
\sim
\CTerm{Y}{t}{\vec{y}}{\psi}
$.
By \Cref{lem:equivalence to pattern-general form by left-value-free rules},
there exists an existentially constrained term $\CTerm{X''}{s''}{\pvec{x}''}{\varphi''}$
such that 
$
\CTerm{X'}{s'}{\pvec{x}'}{\varphi'} 
\sim
\CTerm{X''}{s''}{\pvec{x}''}{\varphi''} 
\gets_\rho 
\PG(\CTerm{X}{s}{\vec{x}}{\varphi})
$.
By
$\CTerm{X}{s}{\vec{x}}{\varphi}
\sim 
\CTerm{Y}{t}{\vec{y}}{\psi}
$,
$
\CTerm{X''}{s''}{\pvec{x}''}{\varphi''} 
\gets_\rho 
\PG(\CTerm{X}{s}{\vec{x}}{\varphi})
\sim 
\PG(\CTerm{Y}{t}{\vec{y}}{\psi})
$.
Hence, by \Cref{lem:commutativity of rewrite steps and equivalence for pattern-general terms},
$
\CTerm{X''}{s''}{\pvec{x}''}{\varphi''}
\sim 
\CTerm{Y'}{t'}{\pvec{y}'}{\psi'}
\gets_\rho 
\PG(\CTerm{Y}{t}{\vec{y}}{\psi})
$
for some existentially constrained term $\CTerm{Y'}{t'}{\pvec{y}'}{\psi'}$.
Hence, 
$
\CTerm{X'}{s'}{\pvec{x}'}{\varphi'} 
\sim
\CTerm{Y'}{t'}{\pvec{y}'}{\psi'}
\gets_\rho 
\PG(\CTerm{Y}{t}{\vec{y}}{\psi})
$.
\end{proof}

\Cref{thm:commutativity of rewrite steps and equivalence by left-value-free rules}
enables us to defer the equivalence transformations
until after the application of rewrite rules.

\begin{restatable}{corollary}{TheoremIVxxv}
\label{cor:commutativity of multi rewrite steps and equivalence by left-value-free rules}
Let $\xR$ be an LCTRS consisting of left-value-free constrained rewrite rules.
If $\CTerm{X}{s}{\vec{x}}{\varphi} \Rs_\xR \CTerm{Y}{t}{\vec{y}}{\psi}$,
then
$\CTerm{X}{s}{\vec{x}}{\varphi} \to^*_\xR \CTerm{X'}{s'}{\pvec{x}'}{\varphi'} 
\sim \CTerm{Y}{t}{\vec{y}}{\psi}$ for some $\CTerm{X'}{s'}{\pvec{x}'}{\varphi'}$.
\end{restatable}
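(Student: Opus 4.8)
The plan is to derive the corollary from the one-step commutation established in \Cref{thm:commutativity of rewrite steps and equivalence by left-value-free rules} by a straightforward induction that repeatedly pushes every $\sim$-step to the right, past the $\to$-steps, until only a single $\sim$ remains at the end of the sequence.

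First I would record the precise shape of \Cref{thm:commutativity of rewrite steps and equivalence by left-value-free rules} as a commutation inclusion. Its hypothesis $\CTerm{X'}{s'}{\pvec{x}'}{\varphi'} \gets_\rho \CTerm{X}{s}{\vec{x}}{\varphi} \sim \CTerm{Y}{t}{\vec{y}}{\psi}$ together with its conclusion $\CTerm{X'}{s'}{\pvec{x}'}{\varphi'} \sim \CTerm{Y'}{t'}{\pvec{y}'}{\psi'} \gets_\rho \CTerm{Y}{t}{\vec{y}}{\psi}$, using that $\sim$ is symmetric, says exactly that ${\sim} \cdot {\to_\rho} \subseteq {\to_\rho} \cdot {\sim}$ on satisfiable existentially constrained terms. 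Since every rule $\rho$ of $\xR$ is left-value-free by assumption, taking the union over the rules yields the global inclusion ${\sim} \cdot {\to_\xR} \subseteq {\to_\xR} \cdot {\sim}$. I must also observe that all intermediate terms along any derivation remain satisfiable existentially constrained terms: $\sim$ relates only such terms, and each rewrite step preserves this invariant by the well-definedness of rewrite steps. Hence the commutation theorem is applicable at every place where I invoke it.

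Next I would normalize the given $\Rs_\xR$-derivation. By definition ${\Rs} = {\sim} \cdot {\to_\xR} \cdot {\sim}$, so $\CTerm{X}{s}{\vec{x}}{\varphi} \Rs_\xR \CTerm{Y}{t}{\vec{y}}{\psi}$ is an alternating chain of $\sim$- and $\to_\xR$-steps; collapsing consecutive $\sim$-steps, which is legitimate because $\sim$ is transitive, brings it into the canonical form ${\sim} \cdot ({\to_\xR} \cdot {\sim})^n$ for some $n \geq 0$. I would then prove, by induction on $n$, the inclusion ${\sim} \cdot ({\to_\xR} \cdot {\sim})^n \subseteq {\to^*_\xR} \cdot {\sim}$. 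The base case $n = 0$ is immediate. For the step I write ${\sim} \cdot ({\to_\xR} \cdot {\sim})^{n+1} = \bigl({\sim} \cdot ({\to_\xR} \cdot {\sim})^n\bigr) \cdot {\to_\xR} \cdot {\sim}$; the induction hypothesis rewrites the prefix to ${\to^*_\xR} \cdot {\sim}$, giving ${\to^*_\xR} \cdot {\sim} \cdot {\to_\xR} \cdot {\sim}$. Applying the global commutation inclusion to the middle factor ${\sim} \cdot {\to_\xR}$ turns it into ${\to_\xR} \cdot {\sim}$, and absorbing it yields ${\to^*_\xR} \cdot {\to_\xR} \cdot {\sim} \cdot {\sim} = {\to^*_\xR} \cdot {\sim}$, which completes the induction and hence the corollary.

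I do not expect a genuine obstacle, since the real work is packaged in the earlier commutation theorem; the only points requiring care are bookkeeping ones. Concretely, I must make sure that the collapsing of adjacent $\sim$-steps is justified and that an empty run of $\to$-steps is correctly treated in the base case, and that satisfiability together with the existentially-constrained-term invariant is maintained along the whole chain, so that \Cref{thm:commutativity of rewrite steps and equivalence by left-value-free rules} may legitimately be invoked at each application of commutation. If $\Rs_\xR$ is intended as a single full step rather than its closure, the statement is simply the $n = 1$ instance of the same induction.
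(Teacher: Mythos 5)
Your proposal is correct and follows essentially the same route as the paper: an induction on the number of $\to_\xR$-steps in the normalized chain ${\sim}\cdot({\to_\xR}\cdot{\sim})^n$, applying the induction hypothesis to the prefix and then the one-step commutation of \Cref{thm:commutativity of rewrite steps and equivalence by left-value-free rules} to push the remaining ${\sim}\cdot{\to_\xR}$ into ${\to_\xR}\cdot{\sim}$, absorbing adjacent $\sim$-steps by transitivity. Your relational-inclusion phrasing and the explicit remarks on satisfiability of intermediate terms are only cosmetic refinements of the paper's term-level argument.
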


\begin{proof}[Proof Sketch]
The claim can be proved by induction on $n$ to show that
if we have $\CTerm{X}{s}{\vec{x}}{\varphi}
(\sim  \cdot \to_\xR)^n \cdot \sim \CTerm{Y}{t}{\vec{y}}{\psi}$,
then there exists $\CTerm{X'}{s'}{\pvec{x}'}{\varphi'}$
such that 
$\CTerm{X}{s}{\vec{x}}{\varphi} \to^*_\xR \CTerm{X'}{s'}{\pvec{x}'}{\varphi'} 
\sim \CTerm{Y}{t}{\vec{y}}{\psi}$.
\end{proof}

\section{Related Work}
\label{sec:Related Work}
As mentioned earlier, tools working with LCTRSs, like \ctrl{} or \crest{}, 
do already implement (mitigated) versions of the results in this paper. 
Implementation details w.r.t.\ \crest{} are discussed in~\cite{SM23,SM25}. 
This includes transformations like \cite[Definition~12]{SM23} that produces
left-value-free rules, or to check joinability of $s\approx t~\CO{\varphi}$
\crest{} does not compute any intermediate equivalences but checks triviality by
using \cite[Lemma~3]{SM23} at the end of the rewrite sequence. 
\ctrl~\cite{KN15} makes
constrained rewrite rules left-value-free in advance~\cite{FKN17tocl,Kop17}, and
thus, equivalence transformations are not used in applying constrained rewrite
rules to constrained terms. 
\crisys~\cite{KNM25jlamp} does not modify the given
LCTRSs but uses an extension of matching during rewriting constrained terms, which
can be seen as an alternative to equivalence transformations before rewrite steps: 
When matching a term $\ell$ with $s$ in a constrained term
$\CTerm{}{s}{}{\varphi}$, a value $v$ can match a logical variable $x$ if $x =
v$ is guaranteed by $\varphi$, i.e., $\vDash_\xM \varphi \Rightarrow (x = v)$.
However, the results of this paper do show promising evidence that tools
incorporating these approximations do not loose much power. 

Regarding rewriting of constrained terms, the \emph{symbolic rewriting
module SMT} has been proposed in~\cite{RMM17}. The rewriting formalism
there is different from constrained rewriting in this paper. We use the
original definition of LCTRSs and constrained rewriting on them with the
only difference on the presentation to ease its analysis. In particular,
constrained rewriting reduces a constrained term to another one such that every
instance in the former has its reduct in the latter. On the other hand, symbolic
rewriting in~\cite{RMM17} does not guarantee this, hence this is similar to
narrowing of constrained terms. 

\section{Conclusion}
\label{sec:conclusion}

In this paper, we have revisited the formalism of constrained rewriting in LCTRSs.
We have introduced the new notion 
of most general rewriting on 
existentially constrained terms
for left-linear LCTRSs. 
It was discussed in which way our new formalism of most general rewriting 
extracts the so called ``most general'' part of the corresponding original formalism.
We have shown the uniqueness of reducts
for our formalism of constrained rewriting, and 
commutation between rewrite steps and equivalence
for pattern-general constrained terms.
Then, by using the notion of non-left-value-free rewrite rules
we showed
that left-value-free 
rewrite rules can simulate non-left-value-free rules.
Finally, we did recover
general commutation between rewrite steps with left-value-free rules
and equivalent transformations.

Because of the expected complications, our focus in this paper was on rewriting
with left-linear rules. 
To deal with non-left-linear rules, we need to extend
the matching mechanism underlying rewrite steps. However, this
complicates the definition of rewrite steps drastically. 
This extension remains as future work.

\begin{acks}
This work was partially supported by
FWF (Austrian Science Fund) project I~5943-N
and 
JSPS KAKENHI Grant Numbers JP24K02900 and JP24K14817.
\end{acks}

\bibliographystyle{ACM-Reference-Format}
\bibliography{biblio}

@string{proc = "Proceedings of the "}

@string{cade = " CADE"}

@string{IJCAR = " IJCAR"}

@string{lipics = "LIPIcs"}

@string{fscd = " FSCD"}

@string{lopstr = " LOPSTR"}

@string{lpar = " LPAR"}

@string{vstte= " VSTTE"}

@string{aplas= " APLAS"}

@string{frocos= " FroCoS"}

@string{tacas= " TACAS"}

@string{lnai = "Lecture Notes in Artificial Intelligence"}

@string{lncs = "Lecture Notes in Computer Science"}

@book{BN98,
  title = {Term Rewriting and All That},
  author = {Franz Baader and Tobias Nipkow},
  year = {1998},
  doi = {10.1145/505863.505888},
  publisher = {Cambridge University Press},
  xaddress = {},
}

@book{O02,
  title = {Advanced Topics in Term Rewriting},
  author = {Enno Ohlebusch},
  year = {2002},
  doi = {10.1007/978-1-4757-3661-8},
  publisher = {Springer New York},
  address = {New York},
}

@article{TSNA25LOPSTR-arxiv,
 author    = "Kanta Takahata and Jonas Sch{\"o}pf and Naoki Nishida and Takahito Aoto",
 title     = "Characterizing Equivalence of Logically Constrained Terms via Existentially Constrained Terms",
 journal   = "CoRR",
 volume    = "abs/2505.21986",
 year      = 2025,
 pages     = {1-29},
 doi       = "10.48550/arXiv.2505.21986",
 xnote     = "Full version of a paper submitted to LOPSTR 2025",
}

@article{FKN17tocl,
 author = {Carsten Fuhs and Cynthia Kop and Naoki Nishida},
 title = {Verifying Procedural Programs via Constrained Rewriting Induction},
 journal = {ACM Transactions on Computational Logic},
 volume = {18},
 number = {2},
 pages = {14:1--14:50},
 year = {2017},
 doi = {10.1145/3060143},
}

@article{KNM25jlamp,
 author    = {Misaki Kojima and Naoki Nishida and Yutaka Matsubara},
 title     = {Transforming concurrent programs with semaphores into logically constrained term rewrite systems},
 journal   = {Journal of Logical and Algebraic Methods in Programming},
 volume    = {143},
 number    = "",
 pages     = {1-23},
 year      = {2025},
 doi       = {10.1016/j.jlamp.2024.101033},
}

@article{K16,
 author    = "Cynthia Kop",
 title     = "Termination of {LCTRSs}",
 journal   = "CoRR",
 volume    = "abs/1601.03206",
 year      = 2016,
 doi       = "10.48550/ARXIV.1601.03206",
 pages     = "1-5",
}

@inproceedings{SM25,
 author    = "Jonas Sch{\"o}pf and Aart Middeldorp",
 title     = {Automated Analysis of Logically Constrained Rewrite Systems using crest},
 booktitle = proc # "31st" # tacas,
 series    = lncs,
 volume    = 15696,
 pages     = "124--144",
 year      = 2025,
 publisher = "Springer Nature Switzerland",
 address   = "Cham",
 doi       = "10.1007/978-3-031-90643-5_7",
 xnote      = "\doi{10.1007/978-3-031-90643-5_7}"
}

@inproceedings{WM21,
 author    = "Sarah Winkler and Georg Moser",
 editor    = "Maribel Fern{\'a}ndez",
 title     = "Runtime Complexity Analysis of Logically Constrained
              Rewriting",
 booktitle = proc # "30th" # lopstr,
 series    = lncs,
 volume    = 12561,
 pages     = "37--55",
 year      = 2021,
 publisher = "Springer International Publishing",
 address   = "Cham",
 doi       = "10.1007/978-3-030-68446-4_2",
}

@inproceedings{SM23,
 author    = "Jonas Sch{\"o}pf and Aart Middeldorp",
 editor    = "Brigitte Pientka and Cesare Tinelli",
 title     = "Confluence Criteria for Logically Constrained Rewrite
              Systems",
 booktitle = proc # "29th" # cade,
 series    = lnai,
 volume    = "14132",
 pages     = "474--490",
 year      = 2023,
 publisher = "Springer Nature Switzerland",
 address   = "Cham",
 doi       = "10.1007/978-3-031-38499-8_27"
}

@inproceedings{WM18,
 author    = "Sarah Winkler and Aart Middeldorp",
 editor    = "H\'{e}l\`{e}ne Kirchner",
 title     = "Completion for Logically Constrained Rewriting",
 booktitle = proc # "3rd" # fscd,
 series    = lipics,
 volume    = 108,
 pages     = "30:1--30:18",
 year      = 2018,
 publisher = {Schloss Dagstuhl -- Leibniz-Zentrum f{\"u}r Informatik},
 address   = {Dagstuhl, Germany},
 doi       = "10.4230/LIPIcs.FSCD.2018.30"
}

@inproceedings{KN15,
 author = {Cynthia Kop and Naoki Nishida},
 editor = {Martin Davis and Ansgar Fehnker and Annabelle McIver and Andrei Voronkov},
 title = {{C}ons{T}rained {R}ewriting too{L}},
 booktitle = proc # "20th" # lpar,
 series = lncs,
 volume = {9450},
 pages = {549--557},
 year = {2015},
 publisher = "Springer Berlin Heidelberg",
 address = "Berlin, Heidelberg",
 doi = {10.1007/978-3-662-48899-7_38},
}

@inproceedings{KN14aplas,
 author = {Cynthia Kop and Naoki Nishida},
 editor = "Jacques Garrigue",
 title = {Automatic Constrained Rewriting Induction towards Verifying Procedural Programs},
 booktitle = proc # "12th" # aplas,
 series = lncs,
 volume = {8858},
 pages = {334--353},
 year = {2014},
 publisher = "Springer International Publishing",
 address = "Cham",
 doi = {10.1007/978-3-319-12736-1_18},
}

@inproceedings{KN13frocos,
 author = {Cynthia Kop and Naoki Nishida},
 editor = {Pascal Fontaine and Christophe Ringeissen and Renate A. Schmidt},
 title = {Term Rewriting with Logical Constraints},
 booktitle = proc # "9th" # frocos,
 series = lncs,
 volume = {8152},
 pages = {343--358},
 year = {2013},
 publisher = "Springer Berlin Heidelberg",
 address = "Berlin, Heidelberg",
 doi = {10.1007/978-3-642-40885-4_24},
}

@inproceedings{SMM24,
 author    = "Jonas Sch{\"o}pf and  Fabian Mitterwallner and Aart Middeldorp",
 editor    = "Christoph Benzmüller and Marijn J.H. Heule and Renate A. Schmidt",
 title     = "Confluence of Logically Constrained Rewrite Systems Revisited",
 booktitle = proc # "12th" # ijcar,
 series    = lnai,
 volume    = "14740",
 pages     = "298--316",
 year      = 2024,
 publisher = "Springer Nature Switzerland",
 address   = "Cham",
 doi       = "10.1007/978-3-031-63501-4_16"
}

@inproceedings{NW18,
 author    = "Naoki Nishida and Sarah Winkler",
 editor    = "Ruzica Piskac and Philipp R{\"u}mmer",
 title     = "Loop Detection by Logically Constrained Term Rewriting",
 booktitle = proc # "10th" # vstte,
 series    = lncs,
 volume    = 11294,
 pages     = "309--321",
 year      = 2018,
 publisher = "Springer International Publishing",
 address   = "Cham",
 doi       = "10.1007/978-3-030-03592-1_18"
}

@inproceedings{ANS24,
 author    = "Takahito Aoto and Naoki Nishida and Jonas Sch{\"o}pf",
 editor    = "Jakob Rehof",
 title     = "Equational Theories and Validity for Logically Constrained
              Term Rewriting",
 booktitle = proc # "9th" # fscd,
 series    = lipics,
 volume    = 299,
 pages     = "31:1--31:21",
 year      = 2024,
 publisher = {Schloss Dagstuhl -- Leibniz-Zentrum f{\"u}r Informatik},
 address   = {Dagstuhl, Germany},
 doi       = "10.4230/LIPIcs.FSCD.2024.31"
}

@article{Kop17,
  author       = {Cynthia Kop},
  title        = {Quasi-reductivity of Logically Constrained Term Rewriting Systems},
  journal      = {CoRR},
  volume       = {abs/1702.02397},
  pages        = {1-8},
  year         = {2017},
  doi          = {10.48550/arXiv.1702.02397},
  xurl         = {http://arxiv.org/abs/1702.02397},
  xeprinttype  = {arXiv},
  xeprint      = {1702.02397},
}

@article{RMM17,
  author       = {Camilo Rocha and
                  Jos{\'{e}} Meseguer and
                  C{\'{e}}sar A. Mu{\~{n}}oz},
  title        = {Rewriting modulo {SMT} and open system analysis},
  journal      = {Journal of Logic and Algebraic Programming},
  volume       = {86},
  number       = {1},
  pages        = {269-297},
  year         = {2017},
  doi          = {10.1016/J.JLAMP.2016.10.001},
}

\appendix

\section{Omitted Proofs}

\LemmaIVvii*

\begin{proof}
Let $\CTerm{X}{s}{\vec{x}}{\varphi} \R^p_{\rho,\gamma} \CTerm{Y}{t}{\vec{y}}{\psi}$.
Then we have 
(1) $\Dom(\gamma) = \Var(\ell)$,
(2) $s|_p = \ell\gamma$,
(3) $\gamma(x) \in \Val \cup X$ for any $x \in \Var(\ell) \cap Z$, and
(4) $\vDash_\xM (\ECO{\vec{x}}{\varphi}) \Rightarrow (\ECO{\vec{z}}{\pi\gamma})$,
where $\SET{\vec{z}} = \Var(\pi) \setminus \Var(\ell)$,
$t = s[r\gamma]_{p}$,
$\psi = \varphi \land \pi\gamma$,
$\SET{\vec{y}} = \Var(\psi) \setminus \Var(t)$, and
$Y = \ExVar(\rho) \cup (X \cap \Var(t))$.
We have
$\FVar(\ECO{\vec{y}}{\psi}) 
= \Var(\psi) \setminus \SET{\vec{y}} 
= \Var(\psi) \setminus (\Var(\psi) \setminus \Var(t))
= \Var(\psi) \cap \Var(t)
= \Var(\varphi \land \pi\gamma) \cap \Var(t)$.
We show that $x \in \FVar(\ECO{\vec{y}}{\psi})$
implies $x \in \ExVar(\rho) \cup (X \cap \Var(t))$,
and distinguish the cases
whether $x \in \Var(\varphi) \cap \Var(t)$
or $x \in \Var(\pi\gamma) \cap \Var(t)$.
\begin{itemize}
\item 
Case $x \in \Var(\varphi) \cap \Var(t)$.
Because $\Var(\rho) \cap \Var(s,\varphi) = \varnothing$,
we know $x \notin \Var(\rho)$, and thus $x \notin \Var(r)$.
We claim that $x \in \Var(s)$ and assume the contrary.
Then by $x \in \Var(t)$, we know $x \in \Var(r\gamma) \setminus \Var(\ell\gamma)$.
This implies by $x \notin \Var(r)$
that for some $x' \in \Dom(\gamma)$ we have $x \in \Var(x'\gamma)$.
By $\Dom(\gamma)= \Var(\ell)$, this contradicts $x \notin \Var(\ell\gamma)$.
Thus, $x \in \Var(s)$.
From $x \in \Var(s)$ and $x \in \Var(\varphi)$,
$x \in \Var(s) \cap \Var(\varphi) = \FVar(\ECO{\vec{x}}{\varphi})
\subseteq X$.
Thus, $x \in X \cap \Var(t)$.

\item 
Case $x \in \Var(\pi\gamma) \cap \Var(t)$.
Suppose there exists some $x' \in \Dom(\gamma) \cap \Var(\pi)$ such that
$x \in \Var(x'\gamma)$. Then, $x' \in \Var(\ell)$,
and thus $x \in \Var(\ell\gamma)$.
Since $x' \in \Var(\pi) \subseteq Z$, we know $x' \in \Var(\ell) \cap Z$.
This implies $x = \gamma(x') \in \Val \cup X$ by condition (3) above.
Hence $x \in X$. Thus, $x \in X \cap \Var(t)$.
Now, suppose he opposite, i.e.\ 
there is no $x' \in \Dom(\gamma) \cap \Var(\pi)$ such that $x \in \Var(x'\gamma)$.
Then from $x \in \Var(\pi\gamma)$, we know $x \in \Var(\pi) \setminus \Var(\ell)$.
Thus, $x \notin \Var(s,\varphi)$ by $\Var(\rho) \cap \Var(s,\varphi) = \varnothing$.
Since $x \in \Var(t)$, this implies $x \in \ExVar(\rho)$.
\end{itemize}
\end{proof}

\LemmaIVviii*

\begin{proof}
Let $\CTerm{X}{s}{\vec{x}}{\varphi} \R^p_{\rho,\gamma} \CTerm{Y}{t}{\vec{y}}{\psi}$.
Then we have 
(1) $\Dom(\gamma) = \Var(\ell)$,
(2) $s|_p = \ell\gamma$,
(3) $\gamma(x) \in \Val \cup X$ for any $x \in \Var(\ell) \cap Z$, and
(4) $\vDash_\xM (\ECO{\vec{x}}{\varphi}) \Rightarrow (\ECO{\vec{z}}{\pi\gamma})$,
where $\SET{\vec{z}} = \Var(\pi) \setminus \Var(\ell)$,
$t = s[r\gamma]$,
$\psi = \varphi \land \pi\gamma$,
$\SET{\vec{y}} = \Var(\psi) \setminus \Var(t)$, and
$Y = \ExVar(\rho) \cup (X \cap \Var(t))$.

\textup{\Bfnum{1.}}
We have $\BVar(\ECO{\vec{x}}{\varphi}) 
= \Var(\varphi) \setminus \Var(s)
= \Var(\varphi) \setminus \Var(s[\ell\gamma]_{p})$
and 
$\BVar(\ECO{\vec{y}}{\psi}) 
= \Var(\psi) \setminus \Var(t)
= \Var(\varphi \land \pi\gamma) \setminus \Var(s[r\gamma]_{p})$.
Thus, assume $x \in \Var(\varphi) \setminus \Var(s[\ell\gamma]_p)$,
i.e.\ $x \in \Var(\varphi)$ and $x \notin \Var(s[\ell\gamma]_p)$.
We show $x \in \Var(\varphi \land \pi\gamma) \setminus \Var(s[r\gamma])$.
By $x \in \Var(\varphi)$, clearly $x \in \Var(\varphi \land \pi\gamma)$.
It remains to show $x \notin \Var(s[r\gamma]_p)$.
Assume the contrary, i.e.\ $x \in \Var(s[r\gamma]_p)$.
Then by $x \notin \Var(s[\ell\gamma]_p)$ we have that $x \notin \Var(s[~]_p)$.
Thus, $x \in \Var(r\gamma)$.
If $x \in \Var(x'\gamma)$ for some $x' \in \Dom(\gamma) = \Var(\ell)$,
then $x = \gamma(x') \in \Var(\ell\gamma)$, 
which contradicts our assumption $x \notin \Var(s[\ell\gamma]_p)$.
Thus, there is no $x' \in \Dom(\gamma) = \Var(\ell)$ such that $x \in \Var(x'\gamma)$.
This implies that $x \in \Var(r) \setminus \Var(\ell)$.
However, since $\Var(\rho) \cap \Var(s,\varphi) = \varnothing$,
this contradicts our assumption $x \in \Var(\varphi)$.
Therefore we conclude that $x \notin \Var(s[r\gamma]_p)$.

\textup{\Bfnum{2.}}
Firstly,
$\SET{\vec{x}} =  \Var(\varphi) \setminus \Var(s)$
and hence $\SET{\vec{x}} \cap \Var(s) = \varnothing$.
We also have $s|_p = \ell\gamma$ and then that $\Var(l\gamma) \subseteq \Var(s)$.
This implies $\SET{\vec{x}} \cap \Var(\ell\gamma) = \varnothing$.
Secondly, from~\Bfnum{1.} we obtain that $\SET{\vec{x}} \subseteq \SET{\vec{y}}$.
Since $\SET{\vec{y}} = \Var(\psi) \setminus \Var(t)$ we have that
$\SET{\vec{y}} \cap \Var(t) = \SET{\vec{y}} \cap \Var(s[r\gamma]_p) = \varnothing$.
Hence $\SET{\vec{y}} \cap \Var(r\gamma) = \varnothing$.

It suffices to show that 
$\BVar(\ECO{\vec{x}}{\varphi}) \cap \Var(\pi\gamma) = \varnothing$.
Suppose the contrary, for which there exists a variable $x$ 
such that $x \in \Var(\varphi)$, $x \notin \Var(s)$ and $x \in \Var(\pi\gamma)$.
Based on $x \in \Var(\pi\gamma)$ we perform a case analysis:
\begin{itemize}
\item Assume that there exists an $x' \in \Dom(\gamma) \cap \Var(\pi) = \Var(\ell) \cap \Var(\pi)$
such that $x \in \Var(x'\gamma)$.
Then we have $x \in \Var(\ell\gamma) \subseteq \Var(s)$, which contradicts our assumption.
\item Now assume that there is no $x' \in \Dom(\gamma) \cap \Var(\pi) = \Var(\ell) \cap \Var(\pi)$
such that $x \in \Var(x'\gamma)$.
By $x \in \Var(\pi\gamma)$ it follows that
$x \in \Var(\pi) \setminus \Dom(\gamma) = \Var(\pi) \setminus \Var(\ell)$.
Then $x \in \Var(\rho)$, but as $\Var(\rho)\cap \Var(s,\varphi) = \varnothing$
it contradicts that $x \in \Var(\varphi)$.
\end{itemize}

\textup{\Bfnum{3.}}
($\subseteq$)
From $Y = \ExVar(\rho) \cup (X \cap \Var(t))$,
the case $x \in Y$ is clear.
The case $x \in \BVar(\ECO{\vec{y}}{\psi})$
is also obvious, as 
$\BVar(\ECO{\vec{y}}{\psi}) \subseteq \Var(\psi)$.
($\supseteq$)
Due to the fact that $Y = \ExVar(\rho) \cup (X \cap \Var(t))$,
it suffices to show $\Var(\psi) \subseteq Y \cup \BVar(\ECO{\vec{y}}{\psi})$.
Furthermore, as $\Var(\psi) = \FVar(\ECO{\vec{y}}{\psi}) \cup \BVar(\ECO{\vec{y}}{\psi})$
it only remains to show that 
$\FVar(\psi) \subseteq Y = \ExVar(\rho) \cup (X \cap \Var(t))$.
The latter is satisfied by \Cref{lem:free variables of constraint of reducts}.
\end{proof}

\begin{restatable}{lemma}{LemmaIVxviii}
\label{lem:bullet translation and contexts}
Let $X$ be a set of variables and $s$ a term.
For any subterm $f(u_1,\ldots,u_k)$ of $s$ such that $f \notin \Val$,
we have $f(u_1,\ldots,u_k)^{\bullet_X} = f(u_1^{\bullet_X},\ldots,u_k^{\bullet_X})$.
\end{restatable}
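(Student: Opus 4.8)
The plan is to derive the identity directly from the positional definition of the $\bullet_X$-translation in \Cref{def:bullet translation}. Fix a position $p \in \Pos(s)$ with $s|_p = f(u_1,\ldots,u_k)$, so that $u_i = s|_{pi}$ for each $1 \le i \le k$ (writing $pi$ for the child position obtained by appending $i$ to $p$). The key observation I would establish first is that $p \notin \Pos_{X\cup\Val}(s)$: indeed $s(p) = f$, and $f \notin \Val$ holds by assumption, while $f \notin X$ because $f$ is a function symbol whereas $X$ consists only of variables; hence $s(p) = f \notin X \cup \Val$.

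Given this, recall that $s^{\bullet_X} = s[x_1,\ldots,x_n]_{p_1,\ldots,p_n}$ replaces the subterms occurring exactly at the positions $\SET{p_1,\ldots,p_n} = \Pos_{X\cup\Val}(s)$ by fresh variables. Since $p$ is not among these positions, the symbol at $p$ is preserved, i.e.\ $s^{\bullet_X}(p) = f$, and any replacement affects only strictly deeper positions. Consequently $s^{\bullet_X}|_p = f(s^{\bullet_X}|_{p1},\ldots,s^{\bullet_X}|_{pk})$. Unfolding the subterm notation of \Cref{def:bullet translation}, the left-hand side is $s^{\bullet_X}|_p = (f(u_1,\ldots,u_k))^{\bullet_X}$, while each argument satisfies $s^{\bullet_X}|_{pi} = (s|_{pi})^{\bullet_X} = u_i^{\bullet_X}$. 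Combining these yields $(f(u_1,\ldots,u_k))^{\bullet_X} = f(u_1^{\bullet_X},\ldots,u_k^{\bullet_X})$, as required.

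I do not expect a substantial obstacle: the statement is essentially a bookkeeping fact about how the position-indexed replacement defining $s^{\bullet_X}$ commutes with taking subterms. The only point requiring (minor) care is the verification that the root position $p$ of the subterm does \emph{not} belong to $\Pos_{X\cup\Val}(s)$, which is precisely where the hypothesis $f \notin \Val$ (together with $f$ being a function symbol, hence not in $X$) is used; without it the root could itself have been replaced by a fresh variable and the displayed equation would fail. This also clarifies why the restriction $f \notin \Val$ is exactly the right hypothesis for the lemma.
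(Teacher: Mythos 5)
Your proof is correct and takes essentially the same route as the paper's own: both rest on the observation that the root position $p$ of the subterm is not in $\Pos_{X\cup\Val}(s)$ (since $s(p)=f\notin\Val$ and $f$, being a function symbol, cannot lie in the variable set $X$), so the $\bullet_X$-replacement preserves the symbol at $p$ and the identity follows by unfolding the definition at $p$ and at the child positions $p\cdot i$.
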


\begin{proof}
Let $X$ be a set of variables and $s$ a term.
Suppose that $\Pos_{X \cup \Val}(s)= \SET{\seq{p}}$
and $s^{\bullet_X} = s[\seq{x}]_{\seq{p}}$.
Let $s|_p = f(u_1,\ldots,u_k)$ with $f \not\in \Val$ and $p \in \Pos(s)$. 
Then $s(p) = f$ and $s|_{p \cdot j} = u_j$ for each $1 \le j \le k$.
Since $s(p) \notin \Val$, we know that $p \notin \SET{\seq{p}}$,
and thus, $s^\bullet(p) = s[\seq{x}]_{\seq{p}}(p) = s(p) = f$.
Moreover, $u_j^\bullet = (s|_{p \cdot j})^\bullet = s^\bullet|_{p \cdot j}$ for each $1 \le j \le k$.
Thus, $f(u_1^\bullet,\ldots,u_k^\bullet)
= f(s^\bullet|_{p \cdot 1},\ldots,s^\bullet|_{p \cdot k})
= s^\bullet(p)(s^\bullet|_{p \cdot 1},\ldots,s^\bullet|_{p \cdot k})
= s^\bullet|_p$.
On the other hand, we have
$f(u_1,\ldots,u_k)^\bullet = (s|_p)^\bullet = s^\bullet|_p$ by definition.
This completes the proof.
\end{proof}

\LemmaIVxix*
\begin{proof}
First note that $s^\bullet\sigma = s$ and
suppose $u = s|_p$.
Then $u^\bullet\sigma = (s^\bullet|_{p})\sigma = (s^\bullet\sigma)|_{p} = s|_p  = u$.
\end{proof}

\LemmaIVxx*

\begin{proof}
Let $X$ be a set of variables,
$s$ a term, and $\ell$ a linear term such that $s|_p = \ell\gamma$ for a
substitution $\gamma$ with $\Dom(\gamma) = \Var(\ell)$. Assume further that
$\ell$ is value-free. We prove by induction on the subterms $u$ of $\ell$ that
$(u\gamma)^{\bullet_{X}} = u\gamma^{\bullet_{X}}$ holds.
\begin{itemize}
\item Assume $u = x \in \xV$.
Then $(x\gamma)^\bullet = \gamma(x)^\bullet 
= \gamma^\bullet(x) 
= x\gamma^\bullet$ by definition of $\gamma^\bullet$.
\item Assume $u = f(u_1,\ldots,u_n)$. By our assumption we have
that $f \notin \Val$ and $u_1,\ldots,u_n$ are value-free. Thus, using 
\Cref{lem:bullet translation and contexts} and
the
induction hypothesis, we have that
\[
\begin{array}{@{}l@{\>}c@{\>}l@{}}
(u\gamma)^\bullet 
&=& (f(u_1\gamma,\ldots,u_n\gamma))^\bullet\\
&=& f((u_1\gamma)^\bullet,\ldots,(u_n\gamma)^\bullet)\\
&=& f(u_1\gamma^\bullet,\ldots,u_n\gamma^\bullet)\\
&=& f(u_1,\ldots,u_n)\gamma^\bullet\\
&=& u\gamma^\bullet
\end{array}
\]
\end{itemize}
Note that if $u = v_i \in \Val$ then $(u\gamma)^\bullet = (v_i)^\bullet =
x_i \neq v_i = u\gamma^\bullet$ for a fresh variable $x_i$.
Thus, the condition of value-freeness of
$\ell$ is essential here. Finally, we conclude $s^\bullet|_p =
s|_p^\bullet = (\ell\gamma)^\bullet = \ell\gamma^\bullet$.
\end{proof}

\LemmaIVxxi*

\begin{proof}
Let $\rho$ be a valuation and suppose that 
$\vDash_{\xM,\rho} \bigwedge_{i=1}^n (s(p_i) = x_i)$.
We show that $\vDash_{\xM,\rho} (u = u^{\bullet_X})$ holds for any subterm $u$ of $s$
by induction on $u$.

\begin{itemize}
\item 
Assume $u = s(p_i)$.
Then, we have $u^\bullet = s(p_i)^\bullet  = s^\bullet|_{p_i}  
= (s[x_1,\ldots,x_n]_{p_1,\ldots,p_n})|_{p_i}  
= x_i$.
Since $\vDash_{\xM,\rho} (s(p_i) = x_i)$ follows
from $\vDash_{\xM,\rho} \bigwedge_{i=1}^n (s(p_i) = x_i)$,
we have $\vDash_{\xM,\rho} (u = u^\bullet)$.

\item 
Assume $u = x \in \Var(s) \setminus X$.
Suppose $x = s(p)$ with $p \notin \SET{\seq{p}}$.
Then, we have $u^\bullet = s(p)^\bullet  = s^\bullet|_p
= (s[x_1,\ldots,x_n]_{p_1,\ldots,p_n})|_{p}
= s(p) = x$.
Clearly, we have $\vDash_{\xM,\rho} x = x$,
and hence $\vDash_{\xM,\rho} u = u^\bullet$.

\item 
Otherwise, we have $u = f(u_1,\ldots,u_k)$ with $f \notin \Val$.
Using the induction hypothesis, we obtain
$\vDash_{\xM,\rho} (u_j = u_j^\bullet)$ for $1 \le j \le k$.
Thus, 
$\vDash_{\xM,\rho} f(u_1,\ldots,u_k) = f(u_1^\bullet,\ldots,u_k^\bullet)$.
We conclude
$\vDash_{\xM,\rho} (f(u_1,\ldots,u_k) = f(u_1,\ldots,u_k)^\bullet)$
by \Cref{lem:bullet translation and contexts}.
\end{itemize}

From all of this we have
$\vDash_{\xM} (\bigwedge_{i=1}^n (s(p_i) = x_i)) \Rightarrow (u = u^{\bullet_X})$
for all subterms $u$ of $s$.
\end{proof}

\LemmaIVxii*

\begin{proof}
Assume 
(1)--(6) hold 
for two satisfiable existentially constrained terms
$\CTerm{X}{s}{\vec{x}}{\varphi}$ and $\CTerm{Y}{t}{\vec{y}}{\psi}$.
Let $\gamma$ be a $Y$-valued substitution
such that $\vDash_{\xM} (\ECO{\vec{y}}{\psi})\gamma$.
Note 
$\gamma(\FVar(\ECO{\vec{y}}{\psi})) \subseteq \Val$ holds.
By our assumption~(6),
$\vDash_{\xM} (\ECO{\vec{x}}{\varphi})\sigma\gamma$.
Define a substitution $\xi = \gamma \circ \sigma$ 
for which we have 
$\vDash_{\xM} (\ECO{\vec{x}}{\varphi})\xi$.
From assumption~(2) it follows
that $s\xi = s\sigma\gamma = t\gamma$.
It remains to show that $\xi(X) \subseteq \Val$.
Fix an arbitrary $x \in X$. If $\sigma(x) \in \Val$ then 
$\xi(x) = x\sigma\gamma \in \Val$, and we are done.
Otherwise, we have $\sigma(x) \in \xV$ by the assumption~(1) on $\sigma$.
Hence, by the assumption~(5) follows that $\sigma(x)\in Y$.
As $\gamma$ is $Y$-valued,
we have $\xi(x) = x\sigma\gamma \in \Val$.

For the other direction, suppose an $X$-valued substitution $\xi$ such that 
$\vDash_{\xM} (\ECO{\vec{x}}{\varphi})\xi$.
Note here that $\xi(\FVar(\ECO{\vec{x}}{\varphi})) \subseteq \Val$.
Then there exists $\kappa = \SET{\vec{x} \mapsto \vec{v}}$ 
for $\SET{\vec{v}} \subseteq \Val$
such that $\vDash_{\xM} \varphi\kappa\xi$
with $\Var(\varphi\kappa\xi) = \varnothing$.
We show now for any $x,x' \in V$
if $\sigma(x) = \sigma(x')$ then $x\xi = x'\xi$.
Suppose $\sigma(x) = \sigma(x')$.
Then by our assumption~(3),
$\vDash_\xM (\ECO{\vec{x}}{\varphi}) \Rightarrow x = x'$.
Hence, $\vDash_\xM (\ECO{\vec{x}}{\varphi})\xi \Rightarrow x\xi = x'\xi$.
Thus, by $\vDash_{\xM} (\ECO{\vec{x}}{\varphi})\xi$,
we obtain $\vDash_\xM x\xi = x'\xi$.
As $x,x' \in V \subseteq X$ and $\xi$ is $X$-valued,
$x\xi,x'\xi \in \Val$.
This gives in the end that $x\xi = x'\xi$.

Again assume that $x,x' \in V$.
Define a substitution $\delta$
by $\delta(y) = x\xi$ if $y \in \sigma(V)$ and $\sigma(x) = y$,
and $\delta(y) = \xi(y)$ otherwise.
Indeed, $\delta$ is well-defined,
as if $\sigma(x) = \sigma(x') = y$ then 
$x\xi = x'\xi$ as shown above.
By definition, we have $x\sigma\delta = x\xi$ for any $x \in V$.
Also, for $x \in \Var(s) \setminus V$,
we have $x\sigma = x$ by our condition (1)
and $\delta(x) = \xi(x)$ by definition of $\delta$.
Thus, it follows $x\sigma\delta = x\delta = x\xi$.
Hence, we conclude that $x\sigma\delta = x\xi$ for any $x \in \Var(s)$.
By condition~(2) 
it follows that $s\xi = s\sigma\delta = t\delta$.

Let $y \in Y$. We show that $y\delta \in \Val$.
Suppose $y \in \sigma(V)$, that is, $\sigma(x) = y$ for some $x \in V$.
Then, $y\delta = x\xi$.
Thus, as $\xi$ is $X$-valued, it follows that $\delta(y) = \xi(x) \in \Val$.
Otherwise, we have that $y \in Y \setminus \sigma(V)$.
Then by our condition (4), $y \in X \setminus V$, and hence $y \in X$.
Then we have that $\delta(y) = \xi(y)$ by definition of $\delta$, as $y \notin \sigma(V)$.
Then, as $\xi$ is $X$-valued, it follows that $\delta(y) = \xi(y) \in \Val$.
Hence $\delta$ is $Y$-valued.

Since $\vDash_\xM x\xi = x\sigma\delta$ for all $x \in X$,
by $\FVar(\ECO{\vec{x}}{\varphi}) \subseteq X$,
we have that $\vDash_\xM x\xi = x\sigma\delta$ for any $x \in \FVar(\ECO{\vec{x}}{\varphi})$.
Thus, from $\vDash_\xM (\ECO{\vec{x}}{\varphi})\xi$,
we obtain $\vDash_\xM (\ECO{\vec{x}}{\varphi})\sigma\delta$ and hence
$\vDash_\xM (\ECO{\vec{y}}{\psi})\delta$ by our condition~(6).
We conclude 
$\CTerm{X}{s}{\vec{x}}{\varphi} \sim \CTerm{Y}{t}{\vec{y}}{\psi}$.
\end{proof}

\TheoremIVxxv*

\begin{proof}
We show by induction on $n$
that if
$\CTerm{X}{s}{\vec{x}}{\varphi}
(\sim  \cdot \to_\xR)^n \cdot \sim \CTerm{Y}{t}{\vec{y}}{\psi}$,
then there exists an existentially constrained term $\CTerm{X'}{s'}{\pvec{x}'}{\varphi'}$
such that 
$\CTerm{X}{s}{\vec{x}}{\varphi} \to^*_\xR \CTerm{X'}{s'}{\pvec{x}'}{\varphi'} 
\sim \CTerm{Y}{t}{\vec{y}}{\psi}$.
The case $n = 0$ is trivial.
Otherwise $n > 0$, thus we have 
$\CTerm{X}{s}{\vec{x}}{\varphi} 
(\sim  \cdot \to_\xR)^{n-1} \cdot \sim  
\CTerm{Z}{u}{\vec{z}}{\delta}
\to_\xR 
\CTerm{Z'}{u'}{\pvec{z}'}{\delta'}
\sim \CTerm{Y}{t}{\vec{y}}{\psi}$
for some 
$\CTerm{Z}{u}{\vec{z}}{\delta}$ and $\CTerm{Z'}{u'}{\pvec{z}'}{\delta'}$.
Applying the induction hypothesis
to 
$\CTerm{X}{s}{\vec{x}}{\varphi} 
(\sim  \cdot \to_\xR)^{n-1}  \sim  
\CTerm{Z}{u}{\vec{z}}{\delta}$
yields an existentially constrained term 
$\CTerm{X''}{s''}{\pvec{x}''}{\varphi''}$
such that
$\CTerm{X}{s}{\vec{x}}{\varphi} 
\to^*_\xR 
\CTerm{X''}{s''}{\pvec{x}''}{\varphi''}
\sim  \CTerm{Z}{u}{\vec{z}}{\delta}$.
Thus,
$\CTerm{X''}{s''}{\pvec{x}''}{\varphi''}
\sim  \CTerm{Z}{u}{\vec{z}}{\delta}
\to_\xR 
\CTerm{Z'}{u'}{\pvec{z}'}{\delta'}$.
By \Cref{thm:commutativity of rewrite steps and equivalence by left-value-free rules},
there exists an existentially constrained term $\CTerm{X'}{s'}{\pvec{x}'}{\varphi'}$ such that we have
$\CTerm{X''}{s''}{\pvec{x}''}{\varphi''}
\to_\xR 
\CTerm{X'}{s'}{\pvec{x}'}{\varphi'}
\sim 
\CTerm{Z'}{u'}{\pvec{z}'}{\delta'}$.
All in all, we have that
$\CTerm{X}{s}{\vec{x}}{\varphi} 
\to^*_\xR 
\CTerm{X''}{s''}{\pvec{x}''}{\varphi''}
\to_\xR 
\CTerm{X'}{s'}{\pvec{x}'}{\varphi'}
\sim 
\CTerm{Z'}{u'}{\pvec{z}'}{\delta'}
\sim \CTerm{Y}{t}{\vec{y}}{\psi}$.
Hence the claim holds.
\end{proof}

\end{document}